\documentclass[11pt,a4paper]{article}
\usepackage[fleqn]{amsmath}
\usepackage{amsfonts,amsthm,amssymb}
\usepackage{comment}
\usepackage{mathtools}
\usepackage{subcaption}
\usepackage{url}
\usepackage{multirow}
\usepackage{wrapfig}
\usepackage[pdftex,dvipsnames]{xcolor}  % Coloured text etc.
\allowdisplaybreaks

\renewcommand{\arraystretch}{1.5} % More vertical space in cells of tables.

\newtheorem{theorem}{Theorem}%[section]
\newtheorem{proposition}[theorem]{Proposition}
\newtheorem{lemma}[theorem]{Lemma}
\newtheorem{corollary}[theorem]{Corollary}
\newtheorem{definition}[theorem]{Definition}

\newtheorem{observation}[theorem]{Observation}

\usepackage{authblk}

\begin{document}
\title{Computing shortest paths amid non-overlapping weighted disks}
%
%\titlerunning{Abbreviated paper title}
% If the paper title is too long for the running head, you can set
% an abbreviated paper title here
%

\date{}

\author[1]{Prosenjit Bose\thanks{Email: jit@scs.carleton.ca}}
\author[2]{Jean-Lou De Carufel\thanks{Email: jdecaruf@uottawa.ca}}
\author[3]{Guillermo Esteban\thanks{Email: g.esteban@uah.es}}
\author[1]{Anil Maheshwari\thanks{Email: anil@scs.carleton.ca}}

\affil[1]{Carleton University, Ottawa, Canada}
\affil[2]{University of Ottawa, Ottawa, Canada}
\affil[3]{Universidad de Alcal\'{a}, Alcal\'{a} de Henares, Spain}

\maketitle

\begin{abstract}
In this article, we present an approximation algorithm for solving the Weighted Region Problem amidst a set of $ n $ non-overlapping weighted disks in the plane. For a given parameter $ \varepsilon \in (0,1]$, the length of the approximate path is at most $ (1 +\varepsilon) $ times larger than the length of the actual shortest path. The algorithm is based on the discretization of the space by placing points on the boundary of the disks. Using such a discretization we can use Dijkstra's algorithm for computing a shortest path in the geometric graph obtained in (pseudo-)polynomial time.
\end{abstract}

%%%%%%%%%%%%%%%%%%%%%%%%%%%%%%%%%%%%%%%%%%%%%%%%%%%%%%

\section{Introduction}
\label{sec:intro}

Computing a geodesic path (i.e., shortest path) between two points $ s $ and~$ t $ in a geometric setting is one of the most studied problems in computational geometry. Applications of geometric shortest path problems are ubiquitous, appearing in diverse areas such as robotics~\cite{gaw,rowe,Sharir}, video games design~\cite{kamphuis,sturtevant2}, or geographic information systems~\cite{floriani}.
We refer to Mitchell~\cite{Mitchell1} for an excellent survey on geometric shortest path problems.

In contrast to the classical shortest path problem in graphs, where the space of possible paths is discrete, in geometric settings the space is continuous: the source and target points can be anywhere within a certain geometric domain (e.g., a polygon, the plane, a surface), and the set of possible paths to consider has infinite size.
Many variations of geometric shortest path problems have been studied before, depending on the geometric domain, the objective function (e.g., Euclidean metric, link-distance, geodesic distance), or specific domain constraints (e.g., obstacles in the plane, or holes in polygons).
Finding shortest paths among polygonal obstacles in the plane has drawn great interest~\cite{aleksandrov2000approximation,aleksandrov2005determining,bose2024steiner,hershberger1999optimal,Mitchell1,mitchell1991weighted,sun2001bushwhack}. Some of these results apply directly to real-world problems. For instance, for modelling subdivisions of surfaces, embedding models use cylindrical faces, quadrics or patch together surfaces that are defined via bicubic or quadratic splines, see, e.g., \cite{braid1975synthesis,braid1980stepwise,connolly1987application,lienhardt1991topological,requicha1980representations}. Motion planning problems that need to be solved for the advancement of robotics typically involve motion of curved objects through obstacles having curved boundaries~\cite{chew1985planning,KirkpatrickL16-cccg}.
Modern font design systems rely upon conic and cubic spline curves~\cite{pavlidis1983curve,pratt1985techniques}. Numerous applications need efficient algorithms for processing curved objects directly~\cite{forrest1986invited,smith1986invited}. The way to tackle arbitrary real objects has been to approximate them first as polygons or polyhedra of a sufficient number of vertices for the particular application. This process is generally unsatisfactory, see \cite{dobkin1990computational}. For these reasons, in this paper, as in the works by Chang et al.~\cite{chang2005shortest}, Chen et al.~\cite{chen2013computing}, Chen and Wang~\cite{chen2015computing}, and Hershberger et al.~\cite{hershberger2022near}, we focus on the problem of computing shortest paths among curved objects. In particular, we consider disks of different radii with a non-negative weight assigned to them.

\subsection{Previous results}

One of the most general versions of the shortest path problem that has been studied consists of a subdivision of the two-dimensional space. Without loss of generality, we assume it to be triangulated. Each region has a (non-negative) weight associated to it, representing the cost per unit distance of traveling in that region. Thus, the cost of traversing a region is typically given by the Euclidean distance traversed in the region, multiplied by the corresponding weight. The resulting metric is often called the \emph{weighted region metric}, and the problem of computing a shortest path between two points under this metric is known as the \emph{Weighted Region Problem} (WRP). This problem is very general, since it allows to model many well-known variants of geometric shortest path problems. Indeed, having that all weights are equal makes the metric equivalent to the Euclidean metric (up to scale), while using two different weight values, such as $ 1 $ and $ \infty $, allows to model paths amidst obstacles.

The WRP was first introduced by Mitchell and  Papadimitriou~\cite{Mitchell1,mitchell1991weighted}. They provided an approximation algorithm that computes a $ (1+\varepsilon) $-approximation path in $ O\left(n^8\log{\frac{nNW}{w\varepsilon}}\right) $ time, where $ N $ is the maximum integer coordinate of any vertex of the subdivision, $ W $ (respectively, $ w $) is the maximum (respectively, minimum non-zero) integer weight assigned to a face of the subdivision.

Recently, it has been shown that the WRP cannot be solved exactly within the Algebraic Computation Model over the Rational Numbers (ACM$ \mathbb{Q} $) \cite{de2014note}. In this model one can compute exactly any number that can be obtained from rational numbers by a finite number of basic operations. This emphasizes the need for high-quality approximation paths instead of optimal paths. The result in~\cite{de2014note} provides a theoretical explanation to the lack of exact algorithms for the WRP, and the fact that several authors proposed algorithms for computing approximation paths, reducing the running time, or produced geometric problem instances with fewer ``bad'' configurations (e.g., the Delaunay triangulation is used to maximize the minimum angle).

Most of the results for the WRP are focused on polygonal obstacles. The most common scheme followed in the literature is to discretize the geometric space by positioning Steiner points, and then build a graph by connecting pairs of Steiner points, see~\cite{aleksandrov1998varepsilon,aleksandrov2000approximation,aleksandrov2005determining,bose2024steiner,sun2001bushwhack}. An approximate solution is constructed by finding a shortest path in this graph by using well-known combinatorial algorithms (e.g., Dijkstra's algorithm). See Table~\ref{table:results3} for the time complexity of the approximation algorithms designed following this scheme.

\begin{table}[tb]
\centering
\renewcommand{\arraystretch}{1.4}
\begin{tabular}{|c|c|}
\hline Time complexity  & Reference \\ \hline
$ O\left(n^8\log{\frac{nNW}{w\varepsilon}}\right) $ & \cite{mitchell1991weighted} \\ \hline
$ O\left(N^4\log{\left(\frac{NW}{w\varepsilon}\right)}\frac{n}{\varepsilon^2}\log{\frac{nN}{\varepsilon}}\right) $ & \cite{aleksandrov1998varepsilon} \\ \hline
$ O\left(N^2\log{\left(\frac{NW}{w}\right)}\frac{n}{\varepsilon}\log{\frac{1}{\varepsilon}\left(\frac{1}{\sqrt{\varepsilon}}+\log{n}\right)}\right) $ & \cite{aleksandrov2000approximation} \\ \hline
 $ O\left(N^2\log{\left(\frac{NW}{w}\right)}\frac{n}{\varepsilon}\log{\frac{n}{\varepsilon}}\log{\frac{1}{\varepsilon}}\right) $ & \cite{sun2001bushwhack}        \\ \hline
 $ O\left(N^2\log{\left(\frac{NW}{w}\right)}\frac{n}{\sqrt{\varepsilon}}\log{\frac{n}{\varepsilon}}\log{\frac{1}{\varepsilon}}\right) $ & \cite{aleksandrov2005determining} \\\hline
\end{tabular}
\caption{Some $ (1+\varepsilon)$-approximation algorithms for the WRP. In this table, $ N $ is the maximum integer coordinate of any vertex of the subdivision, $ W $ (resp., $ w $) is the maximum (resp., minimum non-zero) integer weight assigned to a face of the subdivision.}
\label{table:results3}
\end{table}

However, we are aware of only a few publications that treat curved objects. In general, we do not seem to have a good grasp on the complexity of weighted shortest paths when the region boundaries are nonlinear curves. The particular case where we consider $ n $ circular obstacles in the two-dimensional space (i.e., the weight of all the regions is infinity) can be solved exactly by using Dijkstra's algorithm in $ O(n^2 \log{n}) $ time~\cite{chew1985planning}. The standard technique is to compute a combinatorial graph $ G = (V,E) $ whose edges are the common tangents between the obstacles, as well as arcs between pairs of nodes that are consecutive on the boundary of the same region. The vertex set $ V $ comprises the endpoints of the tangent lines in each of the obstacles. Moreover, Kim et al.~\cite{kim2004shortest} devised two filters, an ellipse filter and a convex hull filter, which reduce the space complexity significantly and efficiently. The technique in~\cite{chew1985planning} was improved to $ O(n\log{n}+k) $ time, where $ k $ is the size of the extended visibility graph of the union of the pseudodisks~\cite{chen2013computing}. Later, Chen and Wang~\cite{chen2015computing} computed a shortest path avoiding a set $S$ of $h$ pairwise disjoint splinegons with a total of~$n$ vertices in $O(n + h\log{h} + d)$ time, where $ d $ is a parameter sensitive to the geometric structures of the input, by applying a bounded degree decomposition of the set of obstacles. This improves the result in~\cite{chen2013computing} when $h = o(n)$. In addition, an important algorithmic challenge arises when considering a model where the radius of a set of disks grows over time at some (maximum) speed~\cite{maheshwari2007n,van2008planning}.

Obstacles with curved boundaries present both algebraic and combinatorial challenges~\cite{chang2005shortest}. Thus, Hershberger et al.~\cite{hershberger2022near} proposed an $ O(n\log{n}) $ time algorithm for the shortest path problem based on certain assumption on the computation of locating the intersection of two bisectors defined by pairs of curved obstacle boundary segments. They also provided a $ (1 + \varepsilon) $-approximation of a shortest path in $O\left(n\log{n} + n\log{\frac{1}{\varepsilon}}\right) $ time without the bisector computation assumption.

Moreover, if we consider the case where inside each region we can travel between any pair of points at no cost whereas outside all regions the travel cost between two points is their Euclidean distance, this can be seen as a redefinition of the additively weighted point set spanner problem. Bose et al.~\cite{bose2011spanners} were able to show that it is possible to design a graph $ G $ with a linear number of edges such that for any pair of disks $ D$ and $ D' $ there is a path in $G$ whose
length is arbitrarily close to the Euclidean distance between $D$ and $D'$. Recently, this was improved by Smid~\cite{smid2021improved} by reducing the number of
edges needed by a factor of $4$.

\subsection{Our results}

Sometimes, the shape of a real-world curved object can be approximated using a polygon whose vertices are specified by a subset of $ c $ points on the object, where~$ c $ is a sufficiently large value. Then, one approach to solving the WRP on a set of curved regions would be to approximate each region with a polygon, and then use existing algorithms that work on polygons. However, this method is not always optimal~\cite{dobkin1990computational}.

Let $ \mathfrak{D} = \{D_1, \ldots, D_n\} $ be a set disks, each with a radius $ R_i > 0 $, and for any pair~$ D_j $ and $ D_k $, $ 1 \leq j < k \leq n $, $ D_j \cap D_k = \emptyset $. In addition, each disk $D_i$ has a (non-negative) weight $ \omega_i $ assigned to it. In this paper, we provide an algorithm to compute a path between two points amidst $ \mathfrak{D} $ that is at most $ (1+\varepsilon) $ times larger than the actual shortest path. To solve this problem, we use the traditional technique of partitioning the 2-dimensional space into a discrete space by using a non-trivial Steiner points placement and designing an appropriate graph. Without loss of generality, we may assume that $ s$ and $ t $ are vertices of this graph. In particular, the main results of this paper are:

\begin{itemize}
    \item The special case of the WRP where all the regions are disks having a weight $ \omega = 0 $ or $ \omega \geq \frac{\pi}{2} $ can be solved exactly by using visibility graph techniques and Dijkstra's algorithm in $ O(n^2) $ time. See Section~\ref{sec:weight0infty}.
    \item For the general version of the WRP, we propose a discretization that consists of a set of Steiner points along the boundary of each disk. We first place some vertices, called \emph{vertex vicinity centers}, evenly on the boundary of each disk. Then, if the weight of the disk is strictly positive, we create an annulus around each vertex vicinity center, and we place a set of Steiner points inside each annulus. For a given approximation parameter $ \varepsilon \in (0, 1] $, the number of vertices of the discretization is at most $ C(\mathfrak D)\frac{n}{\varepsilon} $, where $ C(\mathfrak D) $ captures geometric parameters and the weights of $ \mathfrak D $. See Section~\ref{sec:discretization}.
    \item We show that the weighted length of the approximated path between any pair of nodes is at most $ (1 + \varepsilon) $ times the weighted length of a shortest path. This approximation path can be computed by executing shortest path algorithms on the graph formed by Steiner points where two Steiner points are joined by an edge. See Section~\ref{sec:onedisk}.
    \item We also show how to create a linear-sized $ t $-spanner to reduce the running time of the algorithms that compute a weighted shortest path when the disks have any non-negative weight assigned to them. See Section~\ref{sec:spanner}.
\end{itemize}

\section{Preliminaries}
\label{sec:preliminaries}

Any continuous (rectifiable) curve lying in the two-dimensional space is called a \emph{path}. Let $ \Pi(s,t) $ denote a path from a source point $ s $ to a target point $ t $ among the set of disks~$ \mathfrak{D} = \{D_1, \ldots, D_n\} $. Let $ R_i $ and $ c_i $ be, respectively, the radius and the center of each disk $ D_i $. Let $ \omega_i \in \mathbb R_{\ge 0}, \ i \in \{1, \ldots, n\} $, be the weight associated to a disk $ D_i \in \mathfrak{D} $, which represents the cost of traveling a unit Euclidean distance inside that disk. In addition, and without loss of generality, we can assume that the weight outside the disks is $ 1 $. Otherwise, we could always rescale the weights to be $ 1 $ outside the disks. Then, the weighted length of $ \Pi(s,t) $ is given by $ \lVert \Pi(s,t) \rVert = \mu + \sum_{i=1}^n \omega_i \cdot \lvert \pi_i \rvert $, where $ \mu $ denotes the Euclidean length of the intersection between $ \Pi(s,t) $ and the space outside the disks, and $ \lvert \pi_i \rvert $ denotes the Euclidean length of the intersection between $ \Pi(s,t) $ and a disk $ D_i $, that is, $ \pi_i = \Pi(s,t) \ \cap \ D_i $. In case $ \pi_i $ coincides with an arc of $ D_i $, the weight of traveling along that arc is given by $ \min\{1, \omega_i\} $. Given two distinct points $ s $ and~$ t $ in the plane, a weighted shortest path $ \mathit{SP_w}(s,t) $ is a path that minimizes the weighted length between $ s $ and $ t $.

Observe that every path consists of a sequence of (straight or circular-arc) segments whose endpoints $ a_1, \ldots, a_{m} $ are on the boundary of the disks in $ \mathfrak{D} $. These endpoints $ a_1, \ldots, a_m $ are called \emph{bending points}.

We now present some properties of a shortest path between two points on the boundary of the same disk that will be useful in the forthcoming sections. Observation~\ref{obs:length} gives the (weighted) length of a subpath between two points~$ p $ and $ q$ on the boundary of a disk $ D \in \mathfrak{D} $. The result can be proved using the law of cosines.

\begin{observation}
    \label{obs:length}
    Let $ p $ and $ q $ be two consecutive bending points of the path~$ \mathit{SP_w}(s,t) $ on the boundary of a disk~$ D $ centered at~$ c $ with radius~$ R$ and weight~$ \omega \geq 0 $. Let~$ \theta $ be the angle~$ \angle{cpq} $. Then,
    \begin{itemize}
        \item If a shortest path from $ p $ to $ q $ coincides with an arc of $ D $, $ \lVert \mathit{SP_w}(p,q)\rVert = R\cdot(\pi - 2\theta) $; see the red path in Figure~\ref{fig:tboundary}.
        \item If a shortest path from $ p $ to $ q $ only intersects the boundary of $ D $ at $ p $ and $ q $, $ \lVert\mathit{SP_w}(p,q) \rVert = \omega\cdot2R\cos{\theta} $; see the blue path in Figure~\ref{fig:tboundary}.
    \end{itemize}
\end{observation}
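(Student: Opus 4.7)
The plan is to work entirely inside the isoceles triangle $\triangle cpq$ whose two equal sides are the radii $|cp|=|cq|=R$. The hypothesis $\angle cpq=\theta$ together with the base-angle property forces $\angle cqp=\theta$, and hence the central angle is $\angle pcq=\pi-2\theta$. Both numerical claims will drop out of this single triangle.

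For the chord subcase (blue path), between two consecutive bending points the subpath is a single straight segment lying in the constant-weight interior of $D$, so its weighted length is $\omega\,|pq|$. I would then apply the law of cosines in $\triangle cpq$:
\[
|pq|^{2}=R^{2}+R^{2}-2R^{2}\cos(\pi-2\theta)=2R^{2}(1+\cos 2\theta)=4R^{2}\cos^{2}\theta,
\]
which gives $|pq|=2R\cos\theta$ and therefore $\lVert\mathit{SP_w}(p,q)\rVert=\omega\cdot 2R\cos\theta$.

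For the arc subcase (red path), the relevant arc is the minor one, whose central angle is exactly $\angle pcq=\pi-2\theta$; its Euclidean length is thus $R(\pi-2\theta)$. Since the paper charges boundary travel at rate $\min\{1,\omega\}$, and the shortest path coincides with the arc only when $\min\{1,\omega\}=1$ (otherwise a straight chord through $D$ would be strictly cheaper), the weighted length equals the Euclidean arc length $R(\pi-2\theta)$.

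The algebra is routine, so I expect the only real obstacle to be the case analysis: one must confirm that exactly these two shapes (a single chord through the interior, or a single minor arc along $\partial D$) are the candidates to consider between two consecutive bending points on $\partial D$, and that it is the minor arc rather than the major one that is relevant. The latter follows from $\theta\leq\pi/2$, which is forced by the triangle angle sum $2\theta+\angle pcq=\pi$ with $\angle pcq\geq 0$; the former follows from the definition of bending points given in the paragraph preceding the observation, which already restricts each inter-bending subpath to be a single straight or circular-arc segment.
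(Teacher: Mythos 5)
Your proof is correct and follows essentially the same route the paper has in mind: the paper gives no written proof beyond the remark that the result ``can be proved using the law of cosines,'' which is exactly your chord computation in the isosceles triangle $\triangle cpq$, together with the central-angle $\pi-2\theta$ arc-length calculation. Your extra remark that the arc case is only relevant when the boundary rate $\min\{1,\omega\}$ equals $1$ (since otherwise the chord is at least as cheap, by $2\cos\theta\leq\pi-2\theta$) is a reasonable clarification of the weight convention that the paper leaves implicit.
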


\begin{figure}[tb]
    \centering
    \includegraphics[width = 0.5\textwidth]{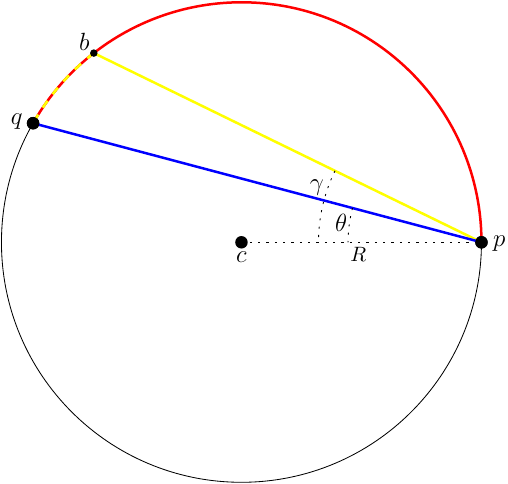}
    \caption{The two possible types of shortest paths between $ p $ and $ q$ on the boundary of a disk are depicted in red and blue.}
    \label{fig:tboundary}
\end{figure}

The following result follows from the fact that the weight of the boundary of a region is given by the minimum among the weights of the two adjacent regions.

\begin{observation}
    \label{lem:weightless1}
    Let $ p $ and $ q $ be two consecutive bending points of the path $ \mathit{SP_w}(s,t) $ on the boundary of a disk~$ D $. Let $ \omega \in [0,1] $ be the weight of~$D $. Then $ \mathit{SP_w}(p,q) $ only intersects the boundary of~$ D $ at $ p $ and $ q $.
\end{observation}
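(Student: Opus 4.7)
Since $p$ and $q$ are \emph{consecutive} bending points of $SP_w(s,t)$, the subpath between them contains no intermediate bending point, and hence must be a single segment of one of the two types analyzed in Observation~\ref{obs:length}: either the arc of $\partial D$ from $p$ to $q$, or the straight chord $\overline{pq}$ lying inside $D$. Under the hypothesis $\omega\in[0,1]$, the plan is to show that the chord is always (weakly) shorter in weighted length, so that $SP_w(p,q)$ can be chosen to be the chord, which by construction meets $\partial D$ only at $p$ and $q$.

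The chord lies inside $D$ by convexity and, by the pairwise disjointness of $\mathfrak D$, avoids every other disk; hence by Observation~\ref{obs:length} its weighted length equals $\omega \cdot 2R\cos\theta$. For the arc, the convention set in the preliminaries makes the boundary weight equal to $\min(1,\omega) = \omega$, so the arc subpath has weighted length $\omega \cdot R(\pi-2\theta)$. The comparison between the two therefore reduces to the purely trigonometric inequality
\[
 2\cos\theta \;\le\; \pi - 2\theta \qquad \text{for } \theta \in [0,\pi/2],
\]
which follows from a one-variable monotonicity argument: the difference $f(\theta) = (\pi-2\theta) - 2\cos\theta$ satisfies $f'(\theta) = -2+2\sin\theta\le 0$ on this interval and $f(\pi/2)=0$, so $f\ge 0$ throughout.

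The only mild subtlety is handling the degenerate cases where the two candidates tie: $\omega=0$, where both weighted lengths vanish, and $\theta=\pi/2$, where $p=q$. In each situation the chord is still a valid shortest path, so one may always select $SP_w(p,q)$ to be the chord, and the conclusion follows. I do not anticipate any significant obstacle beyond correctly interpreting Observation~\ref{obs:length}'s weighted-length formula in light of the $\min(1,\omega)$ boundary-weight convention, which is what drops the two candidate lengths into a common factor of $\omega$ and reduces the problem to the one-line trigonometric inequality above.
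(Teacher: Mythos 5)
Your proposal is correct and follows exactly the reasoning the paper leaves implicit: since the boundary weight is $\min\{1,\omega\}=\omega$, both the chord and the arc carry the same weight factor, and the comparison reduces to the Euclidean fact $2R\cos\theta\le R(\pi-2\theta)$, which you verify by the same elementary monotonicity argument. The paper states this as a one-line observation, so your write-up simply supplies the details (including the harmless tie cases $\omega=0$ and $\theta=\pi/2$) rather than taking a different route.
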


Now consider a special case of the WRP where all the regions have a weight $ \omega \geq \frac{\pi}{2} $. Lemma~\ref{lem:greaterpi2} states that when the weight of a disk is at least $ \frac{\pi}{2}$, then the disk can be considered as an obstacle. Hence, if all disks have weight at least $ \frac{\pi}{2} $, we can use one of the algorithms that compute an exact shortest path between any two pair of points among a set of obstacles.

\begin{lemma}
    \label{lem:greaterpi2}
    Let $ p $ and $ q $ be two consecutive bending points of the path $ \mathit{SP_w}(s,t) $ on the boundary of a disk~$ D $. Let $ \omega \geq \frac{\pi}{2} $ be the weight of~$D $. Then $ \mathit{SP_w}(p,q) $ coincides with a shortest arc of $ D $ from $ p $ to $ q $.
\end{lemma}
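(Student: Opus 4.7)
The plan is to reduce the lemma to a direct comparison of the two weighted-length formulas given by Observation~\ref{obs:length}, and then verify a single-variable transcendental inequality via elementary calculus.

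First, I would argue that between the consecutive bending points $p,q$ on $\partial D$, the subpath $\mathit{SP_w}(p,q)$ can be only one of the two types covered by Observation~\ref{obs:length}. Since $p$ and $q$ are \emph{consecutive} bending points, the subpath never touches the boundary of any other disk; since $D$ is convex, any portion of the subpath that leaves $D$ into the free exterior must eventually return, and the only way to do so without an intermediate bending point on~$\partial D$ is to stay on $\partial D$ itself. Thus $\mathit{SP_w}(p,q)$ is either the straight chord through the interior of $D$ (weighted length $2R\omega\cos\theta$ by Observation~\ref{obs:length}) or a circular arc on $\partial D$; and since the subpath is shortest, it must be the shorter arc, whose weight is $\min\{1,\omega\}=1$ (because $\omega\ge\pi/2>1$) and whose length is $R(\pi-2\theta)$.

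It then suffices to show that the arc is no longer than the chord, i.e.\
\[
R(\pi-2\theta)\;\le\;2R\omega\cos\theta \qquad\text{for all } \theta\in[0,\pi/2].
\]
Since $\omega\ge\pi/2$, it is enough to prove the $\omega=\pi/2$ case,
\[
\pi-2\theta\;\le\;\pi\cos\theta,
\]
equivalently $\pi(1-\cos\theta)\le 2\theta$, or, using the identity $1-\cos\theta=2\sin^2(\theta/2)$,
\[
h(\theta):=\theta-\pi\sin^2(\theta/2)\;\ge\;0\qquad\text{on } [0,\pi/2].
\]
I would verify this by elementary calculus: $h(0)=0$, $h(\pi/2)=\pi/2-\pi\sin^2(\pi/4)=0$, and $h'(\theta)=1-\tfrac{\pi}{2}\sin\theta$ has the unique zero $\theta^\ast=\arcsin(2/\pi)\in(0,\pi/2)$. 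Since $h'>0$ on $[0,\theta^\ast)$ and $h'<0$ on $(\theta^\ast,\pi/2]$, the function $h$ is concave-like with a single interior maximum, so its minimum on $[0,\pi/2]$ is attained at the endpoints and equals $0$. This gives $h\ge 0$ throughout, hence the arc length is at most the chord length, and $\mathit{SP_w}(p,q)$ coincides with the shorter arc of $D$, as claimed.

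The main obstacle is the transcendental comparison in the last step: the inequality is tight at both endpoints $\theta=0$ and $\theta=\pi/2$, so one cannot dispose of it by a crude bound such as $\sin x\le x$; the argument has to locate the interior critical point of $h$ and exploit the sign structure of $h'$. Everything else (the reduction from $\omega\ge\pi/2$ to $\omega=\pi/2$, and the identification of the two candidate subpaths) is structural and follows immediately from Observation~\ref{obs:length} together with the convexity of $D$.
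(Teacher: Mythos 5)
Your proof is correct and follows essentially the same route as the paper: reduce via $\omega\ge\frac{\pi}{2}$ to the single inequality $\pi-2\theta\le\pi\cos\theta$ on $\left[0,\frac{\pi}{2}\right]$ and analyze the critical point $\sin\theta=\frac{2}{\pi}$. Your endpoint analysis is in fact the more careful version of this step, since the interior critical point of $\pi\cos\theta+2\theta$ is a maximum and the needed lower bound $\pi$ is attained at the two endpoints, a point the paper's ``minimize'' phrasing glosses over.
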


\begin{proof}
    We need to prove that the weight of a path intersecting the interior of the disk is at least as large as when going along the boundary, i.e., that $ R\cdot(\pi - 2\theta) \leq 2 R\omega\cos{\theta} \Longleftrightarrow \pi - 2\theta \leq 2\omega\cos{\theta} $, for any angle $ \theta \in \left[0, \frac{\pi}{2}\right) $.

    We know that $ \omega \geq \frac{\pi}{2} $, so $ 2 \omega\cos{\theta} \geq \pi\cos{\theta} $. Thus, it is sufficient to prove that $ \pi - 2\theta \leq \pi\cos{\theta} $. We first minimize the function $ \pi\cos{\theta}+2\theta $:

    \begin{equation*}
        \frac{\partial \left( \pi\cos{\theta}+2\theta\right)}{\partial \theta} =-\pi\sin{\theta}+2 = 0 \Longleftrightarrow \sin{\theta} = \frac{2}{\pi}.
    \end{equation*}

    For this value of $ \theta $, we get that $ \pi\cos{\theta}+2\theta \geq \sqrt{\pi^2-4} + 2\arcsin{\frac{2}{\pi}} $, which is greater than~$ \pi $, which gives us the desired result.
\end{proof}

Now, we state that there are no other ways for a shortest path between~$ p $ and $ q $ to intersect the disk~$ D $ than the ones described in Observation~\ref{obs:length}. This means that if $ p $ and $ q $ are two consecutive bending points on the boundary of a disk $ D$, a shortest path between $ p $ and $ q $ is either the straight-line segment between $ p $ and $ q $, or a shortest arc of $ D $ from $ p $ to $ q $. Hence, a shortest path between $ p $ and $ q $ does not bend on the boundary of $ D $. 

\begin{lemma}
    \label{lem:pathnotconsidered}
    Let $ p $ and $ q $ be two consecutive bending points of the path $ \mathit{SP_w}(s,t) $ on the boundary of a disk~$ D $. If $ \omega \in (1,\frac{\pi}{2}) $, then a shortest path between $ p $ and $ q $ is a (straight or circular-arc) segment.
\end{lemma}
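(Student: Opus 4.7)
By Observation~\ref{obs:length}, the straight chord and the shorter arc from $p$ to $q$ have weighted lengths $2R\omega\cos\theta$ and $R(\pi-2\theta)$, respectively. I would show that no other path from $p$ to $q$ has strictly smaller weighted length, so one of these two (each already a single segment) is a shortest path.

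Since $\omega>1$, the weight on $\partial D$ equals the weight outside (both are $\min\{1,\omega\}=1$), so any sub-arc of a path lying outside $\overline{D}$ can be replaced by an arc along $\partial D$ of no larger Euclidean length and identical unit weight, using convexity of $\overline{D}$. I may therefore restrict attention to paths contained in $\overline{D}$; such a path is an alternating sequence of arcs along $\partial D$ (weight $1$) and chords through the interior (weight $\omega$), with all bending points on $\partial D$.

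The core of the argument is to rule out an intermediate bending point $r\in\partial D$ between $p$ and $q$. Set $\psi=\pi/2-\theta$ and parametrize $r$ by its arc-angle $\phi\in(0,2\psi)$ from $p$. The two segments incident to $r$ fall into four cases. (chord,\,chord) is ruled out by the triangle inequality, $\omega(|pr|+|rq|)\geq\omega|pq|=2R\omega\cos\theta$. (arc,\,arc) yields an arc of $\partial D$ of length at least that of the shorter arc, i.e.\ $\geq R(\pi-2\theta)$. In the two mixed cases the weighted length is $g(\phi)=R\phi+2R\omega\sin(\psi-\phi/2)$; a direct computation gives $g''(\phi)=-(R\omega/2)\sin(\psi-\phi/2)<0$ on $(0,2\psi)$, so $g$ is concave on $[0,2\psi]$ and hence $g(\phi)\geq\min\{g(0),g(2\psi)\}=\min\{2R\omega\cos\theta,R(\pi-2\theta)\}$.

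For paths with several intermediate bending points, I would induct on their number, replacing each triple of consecutive bending points by the single-segment option from the one-bending-point analysis without increasing the weighted length. Alternatively, using subadditivity of $\sin$ to aggregate all chord segments into a single term $2R\omega\sin((2\psi-B_A)/2)$, where $B_A$ is the total arc-angle consumed by arc pieces, the total length is bounded below by $g(B_A)$, and concavity of $g$ again yields the claim. The most delicate point will be verifying that the bending points occur in monotone angular order along $\partial D$ so that the aggregation above is valid; this should follow from a standard local shortcut argument applied to the (putative) shortest path.
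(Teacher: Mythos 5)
Your proposal is correct, and it takes a genuinely different route from the paper's. The paper assumes the shortest path bends at one interior point, writes the chord-then-arc length as $2R(\omega\cos\gamma+\gamma-\theta)$, locates the critical point $\sin\gamma=\frac{1}{\omega}$ by differentiation, and then compares the chord and the arc against the value there via two further ad hoc optimizations (the function $\omega\cos\theta+\theta$, a separate case $\theta=0$, and a quintic/Taylor-series bound for the arc). You replace all of this with a single observation: the one-bend length $g(\phi)=R\phi+2R\omega\sin(\psi-\phi/2)$ is concave on $[0,2\psi]$, so its minimum sits at an endpoint, and the endpoints are exactly the chord $g(0)=2R\omega\cos\theta$ and the arc $g(2\psi)=R(\pi-2\theta)$ of Observation~\ref{obs:length}. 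This is shorter, avoids the Taylor expansion and the $\theta=0$ case, never uses $\omega<\frac{\pi}{2}$, and makes explicit two points the paper leaves implicit (paths leaving $\overline{D}$ and paths with several bending points); your computation also shows the paper's critical point is in fact a maximum of the one-bend length, a subtlety your endpoint argument sidesteps. The loose end you flag, monotone angular order in the aggregation, is avoidable: with $A$ the total arc angle and $B$ the total central angle subtended by chords, any path in $\overline{D}$ from $p$ to $q$ has $A+B\geq\pi-2\theta$; if $A\geq\pi-2\theta$ the path is already no shorter than the arc, and otherwise subadditivity of $\sin$ gives weighted length at least $g(A)$, and concavity finishes as before.
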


\begin{proof}
    Suppose there is a point $ b \neq p, q $ on the boundary of $ D $ where $ \mathit{SP_w}(p,q) $ bends, and let $ \gamma $ be the angle $ \angle{cpb}$, see Figure~\ref{fig:tboundary}. In this case, the length of $ \mathit{SP_w}(p,q) $ is $ \lVert \mathit{SP_w}(p,q) \rVert = R\frac{\sin{(\pi-2\gamma)}}{\sin{\gamma}}\omega+2R(\gamma-\theta) = 2R\cdot\left(\frac{\cos{\gamma}\sin{\gamma}}{\sin{\gamma}}\omega+(\gamma-\theta)\right) = 2R\left(\omega\cos{\gamma}+(\gamma-\theta)\right) $.
    
    The value $ \lVert \mathit{SP_w}(p,q) \rVert $ is minimized when $ \cos{\gamma} = \frac{\sqrt{\omega^2-1}}{\omega} $:
    
    \begin{equation*}
        \frac{\partial \ \lVert \mathit{SP_w}(p,q) \rVert}{\partial \gamma} = -2R\omega\sin{\gamma}+2R = 0 \Longleftrightarrow \omega\sin{\gamma} = 1 \Longleftrightarrow \begin{cases}
            \sin{\gamma} = \frac{1}{\omega}, \\
            \cos{\gamma} = \frac{\sqrt{\omega^2-1}}{\omega}.
        \end{cases}
    \end{equation*}
    
    We can see that the equation holds since $ \omega \in \left(1, \frac{\pi}{2}\right) $. Hence, for this value of~$ \gamma $, the weighted length of $ \mathit{SP_w}(p,q) $ is:

    \begin{equation*}
        \lVert \mathit{SP_w}(p,q) \rVert = 2R\left(\omega\cos{\gamma}+(\gamma-\theta)\right) = 2R\sqrt{\omega^2-1}+2R\left(\arcsin{\left(\frac{1}{\omega}\right)}-\theta\right).
    \end{equation*}
    
    Now, we need to compare the length of $ \mathit{SP_w}(p,q) $ with the length of (i) a path $ \pi_1(p, q) $ that only intersects the boundary of $ D$ at $ p $ and $ q$, and (ii) a path $ \pi_2(p, q) $ along the boundary of $ D$.

    We first define the function $ \omega\cos{\theta}+\theta $ that allows us to prove that $ \lVert \pi_1(p,q) \rVert \leq \lVert \mathit{SP_w}(p,q) \rVert $. The maximum value of the function, when $ \theta \in \left(0, \frac{\pi}{2}\right) $ is obtained next:
    
    \begin{equation*}
        \frac{\partial \left(\omega\cos{\theta}+\theta\right)}{\partial \theta} = -\omega\sin{\theta}+1 = 0 \Longleftrightarrow \omega\sin{\theta} = 1 \Longleftrightarrow \sin{\theta} = \frac{1}{\omega}, \ \theta = \arcsin{\left(\frac{1}{\omega}\right)}.
    \end{equation*}

    Hence,
    
    \begin{align*}
        \omega\cos{\theta} + \theta & \leq \omega\cos{\left(\arcsin{\left(\frac{1}{\omega}\right)}\right)}+\arcsin{\left(\frac{1}{\omega}\right)} =  \sqrt{\omega^2-1}+\arcsin{\left(\frac{1}{\omega}\right)} \\
        \Longrightarrow \omega\cos{\theta} & \leq \sqrt{\omega^2-1}+\arcsin{\left(\frac{1}{\omega}\right)}-\theta \\
        \Longrightarrow 2R\omega\cos{\theta} & \leq 2R\sqrt{\omega^2-1}+2R\left(\arcsin{\left(\frac{1}{\omega}\right)}-\theta\right) \\
        \Longrightarrow \lVert \pi_1(p,q) \rVert & \leq \lVert \mathit{SP_w}(p,q) \rVert.
    \end{align*}

    For $ \theta = 0 $ we need to prove that
    
    \begin{align*}
        \lVert \pi_1(p,q) \rVert = 2R\omega & \leq 2R(\omega\cos{\gamma}+\gamma) = \lVert \mathit{SP_w}(p,q) \rVert \\
        \omega & \leq \omega\cos{\gamma}+\gamma\\
        \omega & \leq \frac{\gamma}{1-\cos{\gamma}}.
    \end{align*}

    Now, we would like to obtain the minimum value of the function $ \frac{\gamma}{1-\cos{\gamma}} $ when $ 0 \leq \gamma \leq \frac{\pi}{2} $:

    \begin{align}
        \label{eq:gamma}
        \frac{\partial \ \frac{\gamma}{1-\cos{\gamma}}}{\partial \gamma} = \frac{1-\cos{\gamma}-\gamma\sin{\gamma}}{(1-\cos{\gamma})^2} & = 0 \nonumber\\
        1-\cos{\gamma} & = \gamma\sin{\gamma}.
    \end{align}

    The only solution to Equation (\ref{eq:gamma}), when $ 0 \leq \gamma \leq \frac{\pi}{2} $, is $ \gamma = 0 $. However, the minimum of the function $ \frac{\gamma}{1-\cos{\gamma}} $ is obtained for $ \gamma = \frac{\pi}{2} $. The value of the function for this value of $ \gamma $ is $ \frac{\pi}{2} $, and the maximum value $ \omega $ can take is $ \frac{\pi}{2} $. Hence, $ \omega \leq \frac{\gamma}{1-\cos{\gamma}} $, and the result is proven.

    Now, we define another function $ \omega^5 -14\omega^3+37\omega$ that allows us to prove that $ \lVert \pi_2(p,q) \rVert \leq \lVert \mathit{SP_w}(p,q) \rVert $. The maximum of this function, when $ \omega \in (1, \frac{\pi}{2}) $, is obtained next:
    
    \begin{equation*}
        \frac{\partial \left(\omega^5 -14\omega^3+37\omega\right)}{\partial \omega} = 5\omega^4-42\omega^2+37 = 0 \Longleftrightarrow \omega = 1.
    \end{equation*}
    
    Hence,
    
    \begin{align*}
        \omega^5 -14\omega^3+37\omega \leq 24 & \Longrightarrow \frac{\omega^4-14\omega^2+37}{24} \leq \frac{1}{\omega} \\
        \Longrightarrow \frac{24-12\omega^2+12+\omega^4-2\omega^2+1}{24} \leq \frac{1}{\omega} & \Longrightarrow 1-\frac{\omega^2-1}{2}+\frac{(\omega^2-1)^2}{24} \leq \frac{1}{\omega}.
    \end{align*}

    The Taylor series of the function $ \cos{x} $ is $ \sum_{n=0}^{\infty} \frac{(-1)^n}{(2n)!}x^{2n}$, for all $ x $. Thus, $ \cos{\sqrt{\omega^2-1}} \leq 1-\frac{\omega^2-1}{2}+\frac{(\omega^2-1)^2}{24} $, and we get that
    
    \begin{align*}
        & \cos{\sqrt{\omega^2-1}} \leq \frac{1}{\omega} \Longrightarrow \sin{\left(\frac{\pi}{2}-\sqrt{\omega^2-1}\right)} \leq \frac{1}{\omega} \\
        \Longrightarrow & \frac{\pi}{2}-\sqrt{\omega^2-1} \leq \arcsin{\left(\frac{1}{\omega}\right)}         \Longrightarrow \pi -2\sqrt{\omega^2-1} \leq 2\arcsin{\left(\frac{1}{\omega}\right)} \\
        \Longrightarrow & R\cdot(\pi -2\theta) \leq 2R\sqrt{\omega^2-1} + 2R\arcsin{\left(\frac{1}{\omega}\right)} -2R\theta \\
        \Longrightarrow & \lVert \pi_2(p,q) \rVert \leq \lVert \mathit{SP_w}(p,q) \rVert.
    \end{align*}

    We proved that the length of both paths $ \pi_1(p,q) $ and $ \pi_2(p,q) $ is not larger than the length of the path $ \mathit{SP_w}(p,q) $. Hence, a shortest path from $ p $ to $ q $ is either the straight-line segment from $ p $ to $ q $ or a shortest arc of $ D $ from $ p $ to~$ q $.
\end{proof}

\section{$ 0/1/\infty$-weighted regions}
\label{sec:weight0infty}

In this section we consider the special case where the weight of each disk $ D_i \in \mathfrak{D} $ is $ \omega_i \in \{0,\infty\} $. As in~\cite{gewali1988path}, we refer to regions with weight $ \infty $ as \emph{obstacles}, and we can think of regions with weight $ 0 $ as places where we can travel at infinite speed. We present an efficient exact algorithm for computing a weighted shortest path avoiding the obstacles.

We begin by recalling that the computation of the shortest path among curved obstacles without the presence of $0$-regions is already known, see, e.g.,~\cite{chen2013computing,chen2015computing}. However, our approach here is to generalize the results by Chen et al.~\cite{chen2013computing} to $0$- and $ \infty$-regions by building a special kind of ``visibility graph'' using a constant number of vertices on the boundary of each disk. Our method is similar to the one by Chen et al., so we refer the reader to their paper for most of the proofs.

\subsection{Summary of our approach}

Using the algorithm of Pocchiola and Vegter in~\cite{pocchiola1995computing}, the visibility graph $ G_{\mathcal{V}} $ of a set of $ n $ pairwise disjoint convex objects in the plane can be constructed in $ O(k + n\log{n}) $ time, where $ k $ is the number of arcs of the visibility graph. However, $ G_{\mathcal{V}} $ may have~$ \Omega(k) $ vertices, and $k$ may be $\Theta(n^2)$. Consequently, running Dijkstra’s algorithm in~$ G_{\mathcal{V}} $ would take $O(k\log{n})$ time, which is a bottleneck.

Our strategy is to transform $ G_{\mathcal{V}} $ to a coalesced graph $ G_{\mathcal{V}}^c $ such that (i) $ G_{\mathcal{V}}^c $ has $O(n)$ vertices and $O(k)$ edges, and (ii) a weighted shortest path in $ G_{\mathcal{V}}^c $ corresponds to a shortest path in $ G_{\mathcal{V}} $. Then we can compute a weighted shortest path in~$ G_{\mathcal{V}}^c $ in $ O(k + n\log{n})$ time. To build $ G_{\mathcal{V}}^c $, we determine a set of distinguished points. Note that the approach in~\cite{chen2013computing} also builds a coalesced graph based on a set of distinguished points. However, we do not need all of the distinguished points defined in~\cite{chen2013computing} since our regions are disjoint, and we also consider disks with weight $ 0 $. 
Our approach produces only $O(n)$ distinguished points. Consequently, the number of vertices in our coalesced graph is also bounded by $O(n)$. Our approach for computing distinguished points is based on the Voronoi diagram $ \mathit{VD}(\mathfrak D)$ of the disks of $\mathfrak D$. Using $\mathit{VD}(\mathfrak D)$, for each disk~$ D $ of $ \mathfrak D$, we create distinguished points on the boundary of $D$ whose number is proportional to the number of neighboring cells of $D$ in $\mathit{VD}(\mathfrak D)$. Since the Voronoi diagram has $O(n)$ cell adjacencies, the total number of such distinguished points is $O(n)$. Since $ G_{\mathcal{V}}^c $ has $O(n)$ vertices and $O(k)$ edges, a shortest path in $ G_{\mathcal{V}}^c $ can be found in $O(n\log{n} + k)$ time.

\subsection{The visibility graph}

First note that we can treat $0$-regions as if they were obstacles~\cite{gewali1988path}. The reason for this is that the path between two points on the boundary of a disk with weight $ 0 $ using the straight-line segment, and an arc of the disk between them have the same weighted length. Thus, we have no need for segments through $0$-regions, and we can treat the $0$-regions as if they were obstacles.

Let $ u $ and $ v $ be two points on the boundary of two disks $ D $ and $ D' $, respectively, where $ D $ and $ D' $ are not necessarily different. We define the path $ \delta(u,v) $ as the weighted shortest path from $ u $ to $ v $ when $ D $ and $ D' $ are the only two disks in the space.

The vertex set of the extended visibility graph $ G_{\mathcal{V}} $ consists of all endpoints of the common tangents between all pairs of disks. Let $ u $ and $ v $ be a pair of vertices of $ G_{\mathcal{V}} $, then we create an edge between two tangency points $ u $ and $ v $ in $ G_{\mathcal{V}} $ if $ \delta(u,v) $ does not intersect the interior of any disk. We also add a circular-arc edge between consecutive vertices on the boundary of the disks. We finally assign to each edge $ (u,v) \in G_{\mathcal{V}} $ the weight $ \lVert \delta(u,v) \rVert $.

Now, we transform $ G_{\mathcal{V}} $ to a directed weighted graph $ G_{\mathcal{V}}^d $ using the results of~\cite{chen2013computing}. We first replace each undirected edge with two directed edges with opposite directions (see the new joints in $ G_{\mathcal{V}}^d $ between an arc edge and a tangent edge in~\cite[Figure $3.2$]{chen2013computing}). Then we assign to each directed edge $ (u,v) $ the weight~$ \lVert \delta(u,v)\rVert $.

\subsection{Coalesced graph}

To define our coalesced graph $ G_{\mathcal{V}}^c $, we introduce a set of $ O(n) $ distinguished points following the notation of Chen et al.~\cite{chen2013computing}. Let $ \mathit{VD}(\mathfrak D) $ be the Voronoi diagram of all disks in $ \mathfrak{D} $. In $ \mathit{VD}(\mathfrak D) $, every disk $ D $ induces a cell $ C(D) $ that contains $ D $. We say that a disk $ D $ is a \emph{neighbor} of another disk $ D' $ if $ C(D) $ and~$ C(D')$ are neighboring in $ \mathit{VD}(\mathfrak D) $. See Figure~\ref{fig:distinguished} for some illustrations of these definitions. For every disk $ D $, with respect to each neighboring disk $ D'$ of $ D $ in $ \mathit{VD}(\mathfrak D) $, we add the following four types of distinguished points to $ D $:
\begin{itemize}
    \item Type (1): there are at most four common tangents between $ D $ and $ D' $; then each tangent point $ x $ on $ D $ of these common tangents is a distinguished point. See Figure~\ref{fig:type1}.
    \item Type (2): the closest point $ y $ on $ D $ to $ D' $ (using the Euclidean distance) is a distinguished point. See Figure~\ref{fig:type2}.
    \item Type (3): there are at most two tangent lines of $ D $ passing through each Voronoi vertex $ v $ of $ C(D) $; then each tangent point $ z $ on $ D $ of these tangent lines is a distinguished point. See Figure~\ref{fig:type3}.
    \item Type (4): the closest point $ w $ on $ D $ to each Voronoi vertex $ v $ of $ C(D) $ (using the Euclidean distance) is a distinguished point. See Figure~\ref{fig:type4}.
\end{itemize}

\begin{figure}[tb]
    \centering
    \begin{subfigure}[b]{0.4\textwidth}
        \centering
        \includegraphics{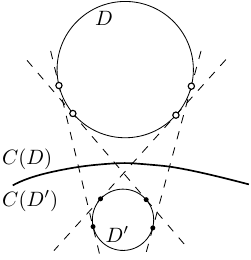}
	    \caption{Type (1) distinguished points.}
	    \label{fig:type1}
    \end{subfigure}
    \quad
    \begin{subfigure}[b]{0.4\textwidth}
        \centering
        \includegraphics{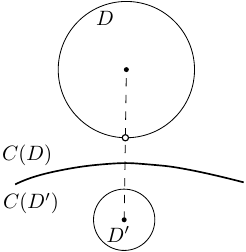}
	    \caption{Type (2) distinguished point.}
	    \label{fig:type2}       
     \end{subfigure}
    \quad
    \begin{subfigure}[b]{0.4\textwidth}
   	\centering
        \includegraphics{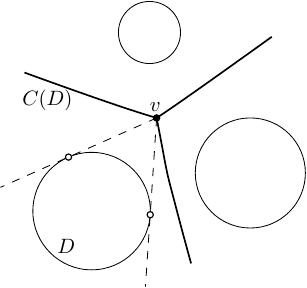}
	    \caption{Type (3) distinguished points.}
	    \label{fig:type3}
    \end{subfigure}
    \quad
    \begin{subfigure}[b]{0.4\textwidth}
   	\centering
        \includegraphics{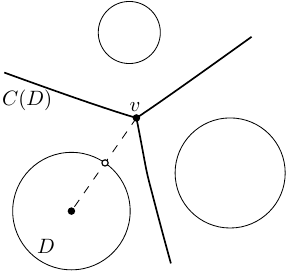}
	    \caption{Type (4) distinguished point.}
	    \label{fig:type4}
    \end{subfigure}
    \caption{Distinguished points on $ D $ are depicted as white disks.}
    \label{fig:distinguished}
\end{figure}

Finally, we add to each disk more distinguished points (if needed) so that the tangent directions at any two consecutive distinguished points along the disk differ by at most $ \frac{\pi}{2}$. The reason for adding the latter points is made explicit in the proof of \cite[Lemma~$ 3.4 $]{chen2013computing}.

Based on the directed graph $ G_{\mathcal{V}}^d $, and the definition of the coalesced graph~$ G_{\mathcal{V}}^c $  in~\cite{chen2013computing}, we define our coalesced graph $ G_{\mathcal{V}}^c $ as follows. Our distinguished points divide the boundary of the disks in $ \mathfrak D $ into arc intervals. Each interval has two directions in which a path may traverse, so they are called \emph{directed intervals}. The set of all such directed intervals forms the vertex set of $ G_{\mathcal{V}}^c $. For each (directed) tangent edge $ e(u, v) $ of $ G_{\mathcal{V}}^d $, with two tangent points $ u $ and $ v $ on disks $ D $ and $ D' $, respectively, there must be a unique directed interval $ (u', u'') $ (resp., $ (v', v'')$) containing $ u $ (resp., $v$) on $D$ (resp., $D'$) such that the path $ u' \rightarrow u \rightarrow v \rightarrow v'' $ is forward going, see Figure~\ref{fig:directedinterval}. The directed interval $(u', u'')$ (resp., $(v', v'')$) is referred to as the \emph{host interval} of $u$ (resp., $v$). 

\begin{figure}[tb]
    \centering
    \includegraphics[width = 0.5\textwidth]{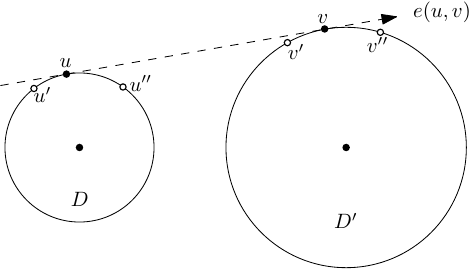}
    \caption{The path $ u' \rightarrow u \rightarrow v \rightarrow v'' $ is forward going. The distinguished points $ u', u'', v' $ and $ v'' $ are depicted as white disks.}
    \label{fig:directedinterval}
\end{figure}

We then create a tangent edge in $ G_{\mathcal{V}}^c $ to link the vertex defined by the interval $(u', u'')$ to the vertex defined by $(v', v'')$, and set its weight to $ \lVert \delta(u,v)\rVert -\lVert \delta(u,u'')\rVert + \lVert \delta(v,v'')\rVert $. Next, we consider the arc edges of $ G_{\mathcal{V}}^c $. For a disk $ D $, suppose $(u', u')$, $(u', u'')$, and $(u'', u'')$ are three consecutive directed intervals on the boundary of $ D $ with the same direction, where $(u', u'')$ is an open interval and the other two are closed intervals. Then, we put an arc edge in $ G_{\mathcal{V}}^c $ from the vertex defined by $(u', u')$ to the vertex defined by $(u', u'')$ with weight $ \lVert \delta(u',u'')\rVert $, and put an arc edge in $ G_{\mathcal{V}}^c $ from the vertex defined by $(u', u'')$ to the vertex defined by $(u'', u'')$ with weight $0$. This completes the definition of our coalesced graph $ G_{\mathcal{V}}^c $.

The next property of the distinguished points shows that the weight of the edges of the coalesced graph $ G_{\mathcal{V}}^c $ is non-negative.

\begin{proposition}
    \label{prop:lengthchain}
    For any disk $ D $, if there is a tangent edge $ e(u,v) $ with the endpoint $ u $ on $ D $ and $ u $ is not a distinguished point, assuming that $ u $ is between two consecutive distinguished points $ u' $ and $ u'' $ on $ D $ such that the subarc $ \widehat{u''u} $ and the tangent edge $ e(u,v) $ do not form a convex chain, then $ \lVert \delta(u,v)\rVert \geq \lVert \delta(u,u'')\rVert $.
\end{proposition}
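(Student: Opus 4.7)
The plan is to translate the claim into local coordinates at $u$, interpret the non-convex-chain hypothesis geometrically, and then bound the arc length $R\alpha$ by the tangent length $\lvert uv\rvert$ via a case analysis on the type of distinguished point $u''$. First I would reduce to the nontrivial weight case: if $D$ has weight $0$, then $\lVert\delta(u,u'')\rVert=0$ and the inequality is immediate, so assume $D$ is an $\infty$-weight obstacle (the only other possibility in this section), giving $\lVert\delta(u,u'')\rVert=R\alpha$, where $R$ is the radius of $D$ and $\alpha=\angle u''cu$ with $c$ the center of $D$. The final step of the distinguished-point construction (adding points to bound tangent-direction gaps by $\pi/2$) gives $\alpha\le\pi/2$. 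Place $c$ at the origin with $u=(R,0)$; then the tangent line at $u$ is $x=R$, and, WLOG after reflection, $u''=(R\cos\alpha,-R\sin\alpha)$. Since $e(u,v)$ is a tangent edge through $u$, we have $v=(R,y_v)$ and $\lVert\delta(u,v)\rVert=\lvert y_v\rvert$. Traversing $\widehat{u''u}$ arrives at $u$ with tangent direction $(0,1)$; convex continuation would require $y_v>0$, so the non-convex-chain hypothesis forces $y_v<0$.

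The crux is the geometric inequality $\lvert y_v\rvert\ge R\alpha$. Because $u$ is not a Type~(1) distinguished point, the disk $D'\ni v$ cannot be a Voronoi neighbor of $D$ in $\mathit{VD}(\mathfrak D)$, so the tangent line $x=R$ passes by some neighbor $D^*$ of $D$ whose Voronoi cell separates $C(D)$ from $C(D')$, and $D^*$ contributes Types~(1)--(4) distinguished points to $\partial D$. By definition of consecutive distinguished points, none of the features of $D^*$ lie strictly in the open arc between $u$ and $u''$, which pins the position of $v$ along the tangent line. For example, if $u''$ is of Type~(2) (closest point on $\partial D$ to $D^*$), then $u''$ lies on the ray from $c$ through $c^*$; using the disjointness $\lvert c-c^*\rvert\ge R+R^*$ together with $\tan\alpha\ge\alpha$ on $[0,\pi/2]$, a short computation yields $\lvert y_v\rvert\ge(R\pm R^*)\tan\alpha\ge R\alpha$, with the sign depending on whether $D^*$ lies on the same side of the tangent line $x=R$ as $D$. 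The arguments for Types~(1), (3), and (4) are analogous, each reducing to a direct trigonometric comparison in the corresponding coordinate frame.

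The main obstacle is this case analysis, since each of the four types of distinguished points encodes a different geometric constraint, and the auxiliary points inserted at the very end of the construction (to enforce the $\pi/2$-bound on tangent-direction gaps) must be handled as a fifth case. Our setting is cleaner than that of Chen et al.~\cite{chen2013computing} because the disks in $\mathfrak D$ are pairwise disjoint, which rules out subcases in which common tangents cross the interior of a third disk; for this reason the structure of the argument mirrors theirs but with a shorter enumeration, justifying the paper's choice to refer to \cite{chen2013computing} for the remaining details.
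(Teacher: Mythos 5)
There is a genuine gap at the step you yourself call the crux. Your plan is to case on the \emph{type} of the distinguished point $u''$ and, for each type, deduce $\lvert y_v\rvert \ge R\alpha$ by a ``short computation''; but nothing in the hypotheses controls the type of $u''$ or ties the position of $v$ along the tangent line to the disk $D^*$ that generated $u''$. The assertion that ``none of the features of $D^*$ lie strictly in the open arc between $u$ and $u''$, which pins the position of $v$ along the tangent line'' is exactly the statement that needs proof, and your sample Type~(2) computation ($\lvert y_v\rvert\ge (R\pm R^*)\tan\alpha$) is unjustified: $v$ lies on a disk $D_v$ that is in general unrelated to $D^*$, and its distance from $u$ along the tangent is not bounded below by any quantity involving $c^*$ without an argument that the path $\delta(u,v)$ must first cross the relevant Voronoi/bisector structure in a specific way. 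A second, independent problem is that you set $\lVert\delta(u,v)\rVert=\lvert y_v\rvert$. In this section $D_v$ may be a $0$-region, in which case $\delta(u,v)$ enters $D_v$ at some point $v_1$ and the final portion is free, so $\lVert\delta(u,v)\rVert\le\lvert y_v\rvert$; since you need a \emph{lower} bound on $\lVert\delta(u,v)\rVert$, replacing it by the longer quantity $\lvert y_v\rvert$ is unsound.

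The paper's proof is organized differently, and the difference is not cosmetic. It first proves an auxiliary lemma (Lemma~\ref{lem:bounddistance}): letting $v_1$ be where $\delta(u,v)$ enters the other disk and $t=\phi_{v_1}(D)$ the closest point of $D$ to $v_1$, a foot-of-perpendicular argument on the tangent line at $t$ gives $\lVert\delta(u,v)\rVert\ge\lVert\delta(u,t)\rVert$; hence if \emph{any} distinguished point lies on the arc $\widehat{ut}$, then $u''$, being the nearest distinguished point in that direction, satisfies $\lVert\delta(u,u'')\rVert\le\lVert\delta(u,t)\rVert\le\lVert\delta(u,v)\rVert$. The remaining work is then purely to exhibit one distinguished point on $\widehat{ut}$, and this is done by a case analysis on the Voronoi edge $h$ of $C(D)$ crossed by $\delta(u,v)$: whether the crossing point lies on $B_\ell$ or $B_r$, whether $\delta(u,v)$ intersects the center segment $\ell$, and whether the ray $\rho(v_1)$ meets $\ell\cup D'$ (the five cases L1, L2, R1, R2, R3 of Chen et al., with Types (1)--(4) points arising in the respective cases). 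Your proposal has no analogue of this reduction: without the intermediate point $t$ and the lemma, the inequality you want cannot be obtained by examining $u''$ alone, so the case analysis you sketch cannot be completed as stated.
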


Before giving the proof of Proposition~\ref{prop:lengthchain}, we first make a helpful observation about the distance between two tangency points. Following the notation of Chen et al., let $ \phi_{p}(D_1) $ be the closest point on a disk $ D_1 $ to a point $ p $. Let $ u \in D $ and $ v \in D' $ be two tangency points. Let $ v_1 $ be the point where $ \delta(u,v) $ enters disk $ D' $, and let $ u_1 $ be the point where $ \delta(u,v) $ leaves disk $ D $. In the following discussion, we let $ t = \phi_{v_1}(D) $. See Figure~\ref{fig:lemma3} for two figures with the notation used for the proof of Proposition~\ref{prop:lengthchain}.

\begin{lemma}
    \label{lem:bounddistance}
    If a disk $ D $ has a distinguished point, say $ a $, on the arc $ \widehat{ut} $, then $ \lVert\delta(u,u'')\rVert \leq \lVert\delta(u,v)\rVert $.
\end{lemma}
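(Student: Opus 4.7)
The plan is to reduce the statement to a comparison of Euclidean arc lengths on $\partial D$. Since $0$-weighted regions are treated as obstacles in this section, in both the weight-$0$ and weight-$\infty$ cases the path $\delta(u, u'')$ is a subarc of $\partial D$, and its weighted length is at most its Euclidean arc length $|\widehat{uu''}|$. Meanwhile, because $D$ and $D'$ are the only two obstacles involved in defining $\delta(u,v)$, this path admits the standard funnel decomposition: a subarc of $\partial D$ from $u$ to a departure point $u_1$, a common tangent segment from $u_1$ to an arrival point $v_1 \in \partial D'$, and a subarc of $\partial D'$ from $v_1$ to $v$. Thus
\[
\lVert \delta(u,v)\rVert = |\widehat{uu_1}| + |u_1 v_1| + |\widehat{v_1 v}|,
\]
and it suffices to show $|\widehat{uu''}| \le |\widehat{uu_1}|$, where the two arcs are traversed in matching orientations.

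The next step pins down this common orientation. The arc $\widehat{u''u}$ arrives at $u$ along one of the two tangent half-lines to $D$ at $u$, and the tangent edge $e(u,v)$ leaves $u$ along one of the two as well. The hypothesis that $\widehat{u''u}$ and $e(u,v)$ do not form a convex chain forces $u''$ to lie on the same side of $u$ as the direction in which $\delta(u,v)$ initially winds around $D$ before leaving it at $u_1$. Consequently $\widehat{uu''}$ is a subarc of $\widehat{uu_1}$ provided that $u''$ lies between $u$ and $u_1$ along that arc.

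To establish this containment, the key is to locate $t = \phi_{v_1}(D)$ on $\widehat{uu_1}$ and then use the distinguished point $a$. Because $u_1 v_1$ is tangent to $D$ at $u_1$, the radius $cu_1$ is perpendicular to $u_1 v_1$, and the segment from $v_1$ through the center $c$ of $D$ meets $\partial D$ at $t$. An elementary computation in the right triangle with vertices $c$, $u_1$, $v_1$ shows that $t$ lies on $\widehat{uu_1}$ strictly between $u$ and $u_1$. Since $u$ lies in the open arc interval $(u', u'')$ and $a$ is a distinguished point on $\widehat{ut}$, the adjacent distinguished point $u''$ encountered from $u$ in the direction of $a$ must satisfy
\[
\widehat{uu''} \subseteq \widehat{ua} \subseteq \widehat{ut} \subseteq \widehat{uu_1}.
\]
Taking lengths then yields $\lVert \delta(u, u'')\rVert \le |\widehat{uu''}| \le |\widehat{uu_1}| \le \lVert \delta(u,v)\rVert$.

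The hardest part will be the orientation argument in the second paragraph. The phrase ``not a convex chain'' must be unpacked into the precise statement that $u''$ lies on the same side of $u$ as $u_1$; this comes from the fact that $e(u,v)$ is tangent to $D$ at $u$ and that $\delta(u,v)$ cannot enter the interior of $D$, so its initial tangent direction at $u$ is forced to be exactly the one singled out by the hypothesis. A further small technicality is the boundary case $u = u_1$, where the arc $\widehat{uu_1}$ is degenerate; there $\delta(u,v)$ begins with the tangent $uv_1$ and the same comparison $|\widehat{ut}| \le |uv_1|$ (arc length at most tangent length from an exterior point, via the inequality $\phi \le \tan \phi$) closes out the bound.
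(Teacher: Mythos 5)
Your overall strategy (decompose $\delta(u,v)$ into an arc on $\partial D$, the segment $\overline{u_1v_1}$, and a final portion on/in $D'$, then compare arcs from $u$) starts in the same spirit as the paper, but the pivotal geometric claim in your third paragraph is false, and the argument collapses there. The point $t=\phi_{v_1}(D)$ does \emph{not} lie on $\widehat{uu_1}$ between $u$ and $u_1$: since $\overline{u_1v_1}$ is tangent to $D$ at $u_1$, the triangle $c\,u_1\,v_1$ has a right angle at $u_1$ and $\angle u_1cv_1=\arccos\bigl(R/|\overline{cv_1}|\bigr)>0$, and $t$ lies on the ray from $c$ through $v_1$, i.e.\ strictly \emph{beyond} $u_1$, on the opposite side of $u_1$ from $u$ (the path leaves $D$ tangentially in its direction of travel and $v_1$ is ahead along that tangent). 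So the correct containment is $\widehat{uu_1}\subseteq\widehat{ut}$, the reverse of what you need. The hypothesis only places the distinguished point $a$ somewhere on $\widehat{ut}$, which may well be on the portion $\widehat{u_1t}$; then $u''$ can also lie beyond $u_1$, the inequality $|\widehat{uu''}|\leq|\widehat{uu_1}|$ fails, and your chain $\widehat{uu''}\subseteq\widehat{ua}\subseteq\widehat{ut}\subseteq\widehat{uu_1}$ is unavailable. (A minor additional slip: when $D'$ is a $0$-region the last portion of $\delta(u,v)$ costs $0$, not $|\widehat{v_1v}|$, though only the lower bound on $\lVert\delta(u,v)\rVert$ matters.)

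What is missing is precisely the estimate that pays for the extra arc beyond $u_1$ with the segment $\overline{u_1v_1}$, and this is the actual content of the paper's proof: one shows $|\overline{u_1v_1}|\geq\lVert\delta(u_1,t)\rVert$ by intersecting $\overline{u_1v_1}$ with the tangent line to $D$ at $t$ at a point $p$, noting that $\overline{v_1t}$ is perpendicular to that tangent so $|\overline{pt}|\leq|\overline{pv_1}|$, whence $|\overline{u_1v_1}|=|\overline{v_1p}|+|\overline{pu_1}|\geq|\overline{pt}|+|\overline{pu_1}|\geq\lVert\delta(u_1,t)\rVert$. Combined with $\lVert\delta(u,t)\rVert\leq\lVert\delta(u,u_1)\rVert+\lVert\delta(u_1,t)\rVert$ this gives $\lVert\delta(u,v)\rVert\geq\lVert\delta(u,t)\rVert\geq\lVert\delta(u,a)\rVert\geq\lVert\delta(u,u'')\rVert$, with no need to locate $u''$ relative to $u_1$. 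Your closing remark for the ``boundary case'' $u=u_1$ (arc length at most tangent length, $\phi\leq\tan\phi$) is in fact this essential estimate in its simplest form; it must be applied at $u_1$ in general, not only when $u_1=u$, so the step you treat as a small technicality is really the heart of the lemma, and as written the main case is not proved.
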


\begin{proof}
    If $ D $ is a $ 0 $-region, then $ \lVert\delta(u,u'')\rVert = 0 \leq \lVert\delta(u,v)\rVert $. Hence, we now focus on the case where $ D $ is a $ \infty $-region.
    
    We first prove $ \lVert\delta(u,v)\rVert \geq \lVert\delta(u,t)\rVert \Longleftrightarrow \lVert\delta(v,v_1)\rVert + \lvert \overline{v_1u_1}\rvert + \lVert\delta(u_1,u)\rVert \geq \lVert\delta(u,u_1)\rVert + \lVert\delta(u_1,t)\rVert $. If $ D' $ is a $ \infty $-region $ v = v_1 $ and $ \lVert\delta(v,v_1)\rVert = 0 $, otherwise $ \lVert\delta(v,v_1)\rVert = 0 \cdot \lvert\widehat{vv_1}\rvert = 0 $. Hence, we need to prove that  $ \lvert \overline{v_1u_1} \rvert + \lVert\delta(u_1,u)\rVert \geq \lVert\delta(u,u_1)\rVert + \lVert\delta(u_1,t)\rVert \Longleftrightarrow \lvert \overline{v_1u_1} \rvert \geq \lVert\delta(u_1,t)\rVert $. Let $ p $ be the intersection point of the segment $ \overline{v_1u_1}$ and the tangent line to $ D $ from $ t $, see Figure~\ref{fig:lemma3}. 
    
    \begin{figure}[tb]
    \centering
    \begin{subfigure}[b]{0.4\textwidth}
        \centering
        \includegraphics{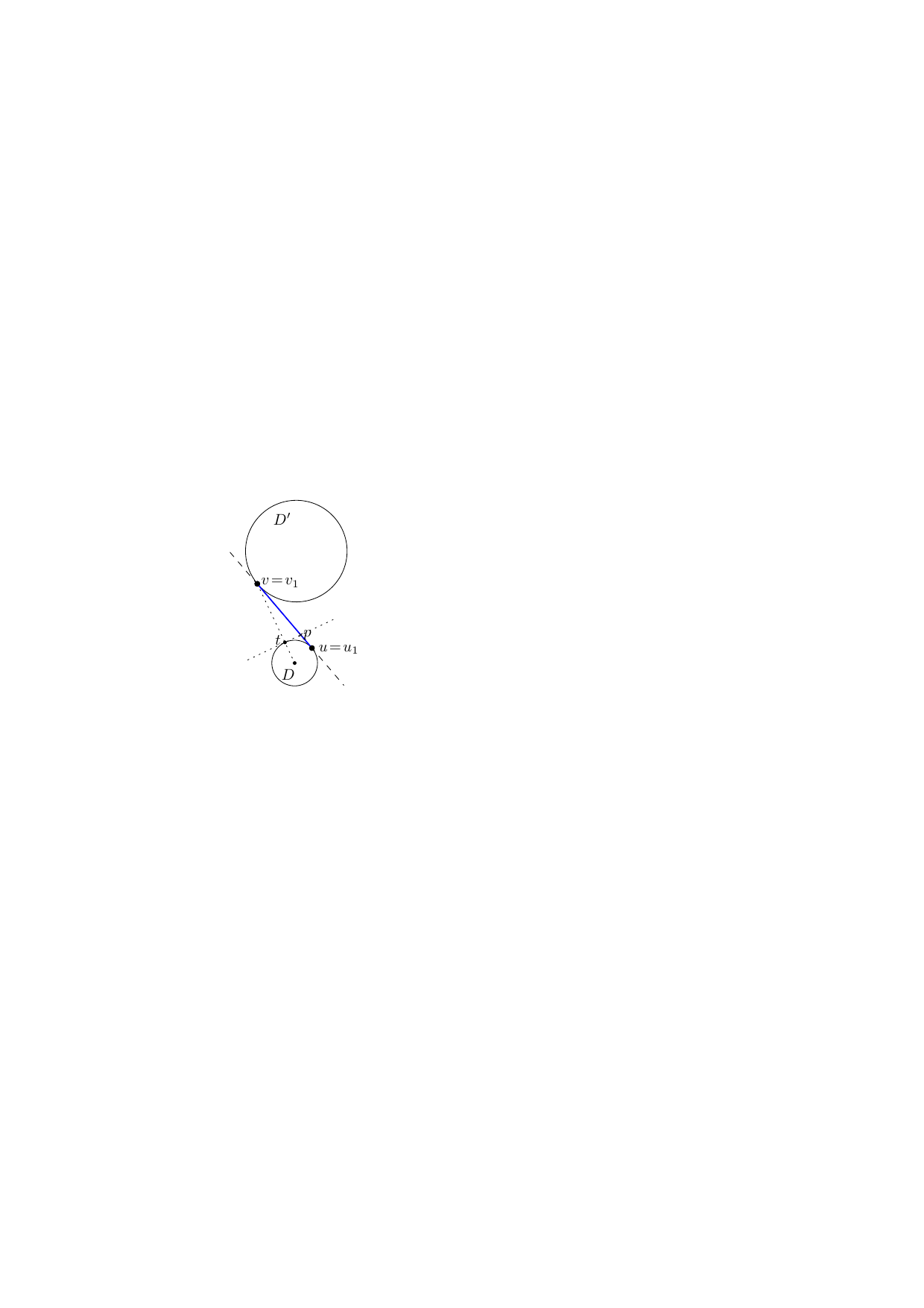}
	    \caption{Disk $ D' $ has weight $ \infty $.}
	    \label{fig:case1}
    \end{subfigure}
    \quad
    \begin{subfigure}[b]{0.4\textwidth}
        \centering
        \includegraphics{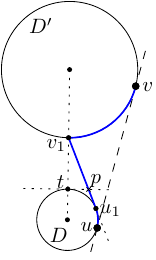}
	    \caption{Disk $ D' $ has weight $ 0 $.}
	    \label{fig:case2}       
     \end{subfigure}
    \caption{The path $ \delta(u,v) $ is represented in blue.}
    \label{fig:lemma3}
\end{figure}
    
    Then $ |\overline{tp}| + |\overline{pu_1}| \geq \lVert\delta(u_1,t)\rVert $. Since the segment $ \overline{v_1t}$ is perpendicular to the tangent line to $ D $ from $ t $, we have $ |\overline{pt}| \leq |\overline{pv_1}| $. Therefore, $ |\overline{u_1v_1}| = |\overline{v_1p}| + |\overline{pu_1}| \geq |\overline{pt}| + |\overline{pu_1}| \geq \lVert\delta(u_1,t)\rVert $.

    Since $ a $ is on $ \widehat{ut} $, $ \lVert\delta(u,v)\rVert \geq \lVert\delta(u,t)\rVert \geq \lVert\delta(u,a)\rVert$. Because $ a $ is a distinguished point on $ D $ and $ u'' $ is the closest distinguished point on $ D $ to $ u $ along the reversing direction, it must be $ \lVert\delta(u,a)\rVert \geq \lVert\delta(u,u'')\rVert $. Thus, $ \lVert\delta(u,u'')\rVert \leq \lVert\delta(u,v)\rVert $.
\end{proof}

\subsubsection{Outline of the proof of Proposition~\ref{prop:lengthchain}}

For simplicity of discussion we only give details on the aspects of the proof that differ from those in the proof of Chen et al. The proof of Proposition~\ref{prop:lengthchain} consists of several cases. In each case, the goal is to find a distinguished point of $ D $ on the arc $ \widehat{ut} $, and then by Lemma~\ref{lem:bounddistance}, $ \lVert\delta(u,u'')\rVert \leq \lVert\delta(u,v)\rVert $.

First of all, let $ D_v $ be the disk containing a tangency point $ v $. Since $ D_v $ and~$ D $ are in different cells of $ \mathit{VD}(\mathfrak D) $, $ \delta(u,v) $ intersects a Voronoi edge $ h $ of the cell $ C(D) $, and let $ i $ be this intersection point. Let $ D' $ be the neighbor of $ D $ sharing the edge $ h$ with $ D$. Let us denote the bisector curve between $ D $ and $ D' $ as $ B $, and let $ b'' $ be the intersection point between $ B $ and the segment $ \ell $ between the centers of $ D $ and $ D' $. The line containing $ \ell $ divides the plane into two half-planes, let $ B_r $ be the portion of $ B $ contained in the halfplane that we enter first when moving from the intersection point of $ \ell $ and $ D $ in clockwise direction around $ D $, and let $ B_\ell $ be the other portion. See Figure~\ref{fig:notationbisector} for an illustration of all these definitions.

\begin{figure}[tb]
    \centering
    \includegraphics[width = 0.4\textwidth]{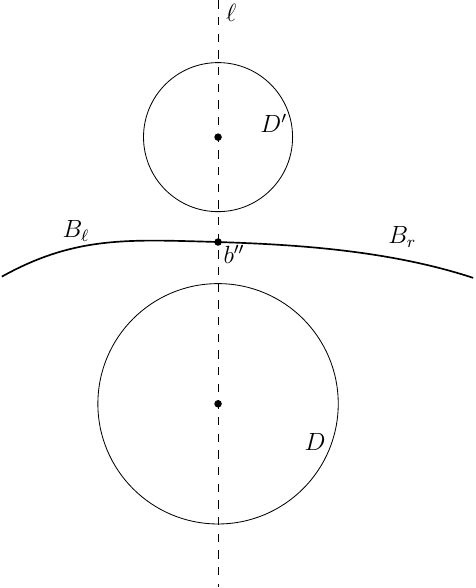}
    \caption{Illustrating the definitions of $ \ell $, $ b'' $, $ B_r $ and $ B_\ell $.}
    \label{fig:notationbisector}
\end{figure}

Depending on whether $ i $ is on $ B_\ell $ or $ B_r $, there are two main cases. If $ i \in B_\ell $, then depending on whether $ \delta(u,v) $ intersects $ \ell $ or not there are two subcases. If $ \delta(u,v) $ intersects $ \ell $, then the intersection point between $ \ell $ and $ D $ is a Type (2) distinguished point, see Case L1 in~\cite[Section $ 4.2.1 $]{chen2013computing}. If $ \delta(u,v) $ does not intersect $ \ell $, let $ a $ be, as defined in~\cite{chen2013computing}, the first Voronoi vertex encountered that is incident to $ h $ when we move along the boundary of $ C(D) $ counterclockwise. There are at most two tangent lines from $ a $ to $ D $; then the first tangency point of these two tangent lines we meet when moving from $ u $ along the reversing direction is a Type (3) distinguished point, see Case L2 in~\cite[Section $ 4.2.1 $]{chen2013computing}. If $ i \in B_r $, then there are three subcases depending on whether (i) $ \delta(u,v) $, or (ii) the ray $ \rho(v_1) $ starting at $ v_1 $ and along the direction from $ u_1 $ to $ v_1 $ intersects $ \ell \cup D' $ or not. If $ \delta(u,v) $ intersects $ \ell \cup D' $, then the intersection point between $ \ell $ and $ D $ is a Type~(2) distinguished point, see Case R1 in~\cite[Section $ 4.2.2 $]{chen2013computing}. If $ \rho(v_1) $ intersects $ \ell \cup D' $, let the point $ a $ be, as defined in~\cite{chen2013computing}, the Voronoi vertex incident to $ h $ as going from $ i$ to $ h $ clockwise around $ C(D) $, and let the point $ c $ be the other Voronoi vertex incident to $ h $, then the closest point on $ D $ from $ a $ or the closest point on $ D $ from $ c $ are Type (4) distinguished points, see Case R2 in~\cite[Section $ 4.2.2 $]{chen2013computing}. Finally, if $ \rho(v_1) $ does not intersect $ \ell \cup D' $, then the tangency point of the outer tangent line from $ D' $ to $ D $ that intersects $ B_r $ is a Type (1) distinguished point, see Case R3 in~\cite[Section $ 4.2.2 $]{chen2013computing}. We can prove all these distinguished points are on the arc $ \widehat{ut} $ by following the proof of the five cases in~\cite{chen2013computing} by considering the point $ v_1 $ instead of point $ v $, and the weighted metric for the distances.

This analysis completes the proof of Proposition~\ref{prop:lengthchain}. Since the size of the Voronoi diagram $ \mathit{VD}(\mathfrak D) $ is $ O(n) $, the number of distinguished points obtained from $ \mathit{VD}(\mathfrak D) $ is $ O(n) $. Hence, the coalesced graph $ G_{\mathcal{V}}^c $ has $ O(n) $ vertices and~$ O(k) $ edges, and we obtain the following result.

\begin{corollary}
    A shortest path in $ G_{\mathcal{V}}^c $ can be computed in $ O(n\log{n}+k) $ time.
\end{corollary}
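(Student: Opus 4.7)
The plan is to assemble the corollary directly from the machinery built in the section. First I would invoke the Pocchiola--Vegter algorithm~\cite{pocchiola1995computing} to construct the visibility graph $G_{\mathcal{V}}$ of the $n$ pairwise disjoint disks in $O(n\log n + k)$ time, then compute the Voronoi diagram $\mathit{VD}(\mathfrak D)$ (which has $O(n)$ cells and $O(n)$ cell adjacencies) in $O(n\log n)$ time. Using $\mathit{VD}(\mathfrak D)$ I would generate the Type (1)--(4) distinguished points on every disk; since the total number of neighboring-cell pairs in $\mathit{VD}(\mathfrak D)$ is $O(n)$, this yields only $O(n)$ distinguished points overall, and the extra points needed to keep the angular gap between consecutive distinguished points on a disk at most $\pi/2$ add only $O(1)$ per disk, hence $O(n)$ in total.

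Next I would build $G_{\mathcal{V}}^d$ by duplicating each edge of $G_{\mathcal{V}}$ with opposite orientations, which takes $O(k)$ time, and then construct $G_{\mathcal{V}}^c$ exactly as described just before the statement: the vertex set is the collection of directed arc intervals determined by the $O(n)$ distinguished points (so $|V(G_{\mathcal{V}}^c)| = O(n)$), and for each directed tangent edge of $G_{\mathcal{V}}^d$ I insert one tangent edge in $G_{\mathcal{V}}^c$ between the two host intervals, plus $O(1)$ arc edges per consecutive triple of intervals on a disk. This gives $|E(G_{\mathcal{V}}^c)| = O(n + k) = O(k)$ (using $k \ge n$), and the construction itself runs in $O(n \log n + k)$ time since locating the host interval of each tangency point on its disk needs only a binary search among the $O(1)$-per-disk distinguished points surrounding it (or an $O(\log n)$ lookup in the worst case, still absorbed in $O(k\log n)$ if one is sloppy, but an $O(1)$ angular bucket lookup suffices here).

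The crucial ingredient is that the edge weights of $G_{\mathcal{V}}^c$, defined as $\lVert \delta(u,v)\rVert - \lVert \delta(u,u'')\rVert + \lVert \delta(v,v'')\rVert$ for tangent edges and $\lVert\delta(u',u'')\rVert$ or $0$ for arc edges, are all non-negative; this is exactly what Proposition~\ref{prop:lengthchain} guarantees, because it forces $\lVert\delta(u,u'')\rVert \le \lVert\delta(u,v)\rVert$ whenever $u$ is not itself a distinguished point. With non-negative edge weights I can run Dijkstra's algorithm on $G_{\mathcal{V}}^c$ with a Fibonacci heap in $O(|V|\log|V| + |E|) = O(n\log n + k)$ time. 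Finally, following the correspondence argument of Chen et al.~\cite{chen2013computing}, a shortest path in $G_{\mathcal{V}}^c$ translates back to a shortest path in $G_{\mathcal{V}}$ (and hence among the disks), so reading off and reporting the path takes time linear in its length, again absorbed in the stated bound.

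The only step that is not purely bookkeeping is checking that the construction cost does not exceed $O(n\log n + k)$; in particular, assigning each tangency point to its host interval must be done without an $O(\log n)$ search per edge of $G_{\mathcal{V}}$, or else a $\log$ factor creeps in. I would handle this by sorting, for each disk $D$, all tangency points on $\partial D$ together with the distinguished points of $D$ by angle once at the outset, in total $O(k + n\log n)$ time by merging $\Theta(1)$ sorted streams per disk that Pocchiola--Vegter already produces; after this sort each host-interval assignment is $O(1)$. Everything else is a direct application of results stated earlier in the section, which is why the corollary follows immediately.
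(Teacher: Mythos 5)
Your proposal is correct and follows essentially the same route as the paper: bound the distinguished points by $O(n)$ via the Voronoi diagram so that $G_{\mathcal{V}}^c$ has $O(n)$ vertices and $O(k)$ edges, use Proposition~\ref{prop:lengthchain} to ensure non-negative edge weights, and then run Dijkstra with a Fibonacci heap in $O(n\log n + k)$ time. The extra bookkeeping you add about constructing $G_{\mathcal{V}}$, $\mathit{VD}(\mathfrak D)$, and the host-interval assignments within the same bound is a welcome elaboration of details the paper leaves implicit, but it is not a different argument.
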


In addition, a shortest path in $ G_{\mathcal{V}}^c $ corresponds to a shortest path in $ G_{\mathcal{V}}^d $ with the same weighted length. This result was proved by Chen et al. in~\cite[Lemma~$ 3.2 $]{chen2013computing} when the space only contains $ \infty $-regions. In our more general case, we can use the same approach taking into account the weighted region metric.

\section{Discretization}
\label{sec:discretization}

In this section we focus on the discretization scheme of the more general case where the disks have associated any non-negative weight. We construct a weighted graph $ G_\varepsilon = (V_\varepsilon(G_\varepsilon), E_\varepsilon(G_\varepsilon)) $ by carefully adding Steiner points on the boundary of the disks. Then, one can apply Dijkstra’s algorithm on $ G_\varepsilon $ to obtain a path $ \tilde{\pi}(s,t) $ that is a $ (1+\varepsilon)$-approximation of $\mathit{SP_w}(s,t)$.

Lemma~\ref{lem:greaterpi2} states that when the weight of a disk is at least $ \frac{\pi}{2} $, then no shortest path will intersect the interior of that disk. In this paper we are discretizing the space to obtain a $ (1+\varepsilon) $-approximation of a shortest path. Thus, from now on, we can assume, without loss of generality, that the maximum weight of the regions is $ \frac{\pi}{2} $.

First, we introduce the value $ d_i $ defined as the minimum Euclidean distance from $ D_i $ to any other disk $ D_j $ in the setting $ \mathfrak D $. We also define a \emph{weighted angular radius} $ \alpha_i $ to be $ \alpha_i = \arcsin{\left(\frac{\min\{d_i,R_i\}\min\{1, \omega_i\}}{4 R_i\max\{1, \omega_i\}}\right)} $. Observe that $ \alpha_i $ is not larger than $ \arcsin{\left(\frac{1}{4}\right)} < \frac{\pi}{2} $, so $ \alpha $ is well defined.

\begin{definition}
    \label{def:vertexvicinity}
    Let $ v $ be a point on the boundary of a disk $ D_i \in \mathfrak{D} $. We refer to the disk with center $ v $ and radius $ 2 R_i\sin{\alpha_i}$ as the \emph{vertex vicinity of vertex $ v $}, or \emph{vertex vicinity} when $ v $ is clear from the context.
\end{definition}

Observe that in Definition~\ref{def:vertexvicinity}, $\alpha_i $ is equal to the minimum angle between (1) the line tangent to $ D_i $ at $ v $ and (2) the line through $ v $ and the intersection point between the vertex vicinity of $ v $ and $ D_i $. See Figure~\ref{fig:anglealpha}.

\begin{figure}[tb]
    \centering
    \includegraphics[width = 0.5\textwidth]{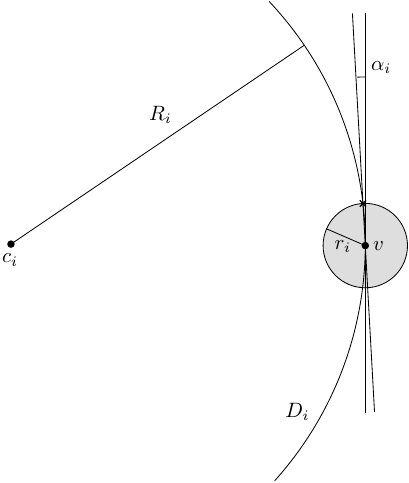}
    \caption{Vertex vicinity of a vertex $ v$ (in grey) on the boundary of disk $ D_i $.}
    \label{fig:anglealpha}
\end{figure}

We use the definition of vertex vicinity to place $ k_i $ points around each disk~$ D_i $. These points $ v_i^1, \ldots, v_i^{k_i} $, called \emph{vertex vicinity centers}, are equally spaced around the disk~$ D_i $.

If the weight of the disk $ D_i $ is $ 0 $, we define the angular distance between the center of the disk $ c_i $ and two consecutive vertex vicinity centers $ v_i^\ell $ and $ v_{i}^{\ell+1} $, for $ 1 \leq \ell < k_i $, by

\begin{equation}
    \label{eq:distanceSteiner0}
    \angle v_i^\ell c_iv_{i}^{\ell+1} = \frac{\varepsilon d_i}{a(d_i+1)}, \text{ for } \varepsilon \in \left(0,\frac{a\pi}{2\omega_i}\right],
\end{equation}
where $ a = \frac{1+3c+\sqrt{9c^2+10c+1}}{2}$, $ c = \frac{\frac{\pi}{2}\max_{1\leq j \leq n}\{R_j\}}{\min_{1\leq j \leq n}\{d_j\}}$, and $ k_i $ is taken so that it is the largest integer satisfying $ \angle v_i^1c_iv_{i}^{k_i} < 2\pi $. Note that we do not consider the particular case where the disks overlap, so $ c $ is a positive value, and $ a > 1$.

Otherwise, if the weight of $ D_i $ is $ \omega_i > 0$, we have that $ k_i = \left\lfloor \frac{\pi}{2\alpha_i} \right\rfloor $. Let $ p_{i,j}^0 $ be the point diametrically opposed to $ v_i^j $ in $ D_i $. We associate to each vertex vicinity center $ v_i^j $ a set of $ 2r+1 $ points on $ D_i $, noted as $ p_{i,j}^0, \ldots, p_{i,j}^{2r} $, each of which is called a \emph{ring point}. The ring points $ p_{i,j}^0, \ldots, p_{i,j}^{r} $ are placed on the boundary of the half disk to the right of $ \overrightarrow{v_{i}^jp_{i,j}^0} $, from $ p_{i,j}^0 $ to $ v_{i}^j $ in clockwise order. The points $ p_{i,j}^{r+1}, \ldots, p_{i,j}^{2r} $ are placed symmetrically on the other half of the disk. We impose the condition that inside the vertex vicinity of each vertex $ v_i^j$, we do not place ring points associated to $ v_i^j $. Hence, $ r $ is taken so that it is the largest integer satisfying $ \angle p_{i,j}^{0}v_i^jp_{i,j}^{r} \leq \frac{\pi}{2} - \alpha_i $. See Figure~\ref{fig:ringpoints} for an illustration of the ring points associated to a vertex vicinity center.

\begin{figure}[tb]
    \centering
    \includegraphics[width = 0.8\textwidth]{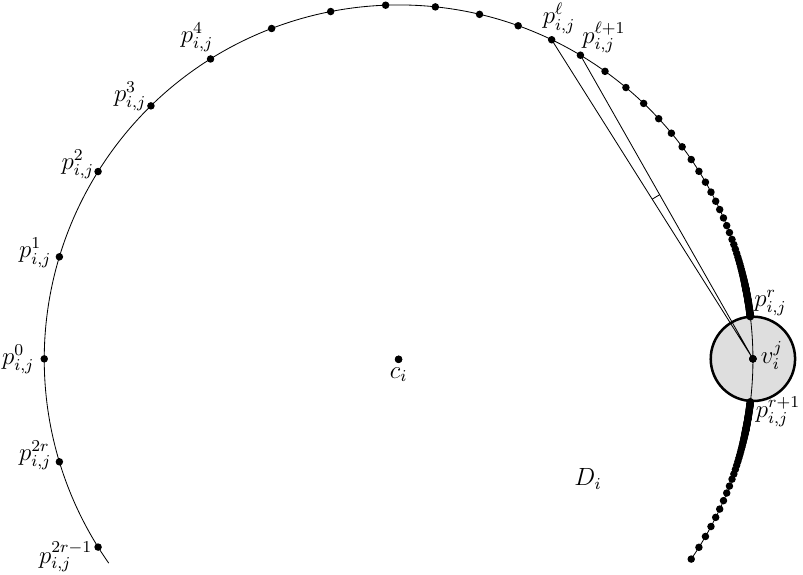}
    \caption{Ring points associated to the vertex vicinity center $ v_i^j $ on the boundary of $ D_i $.}
    \label{fig:ringpoints}
\end{figure}

In addition, we define the angular distance between a vertex vicinity center~$ v_{i}^j $ and two consecutive ring points $ p_{i,j}^\ell $ and $ p_{i,j}^{\ell+1} $, for $ 0 \leq \ell < r $, by

\begin{equation}
    \label{eq:distanceSteiner}
    \angle p_{i,j}^\ell v_i^jp_{i,j}^{\ell+1} = \frac{\omega_i\varepsilon}{a}\left(1-\frac{2\omega_i\varepsilon}{a\pi}\right)^\ell, \text{ for } \varepsilon \in \left(0,\frac{a\pi}{2\omega_i}\right].
\end{equation}

Based on the discretizations that have been used before, and without loss of generality, we may assume that the source vertex $ s $ and the target vertex $ t $ are vertex vicinity centers or ring points. 

From Equation (\ref{eq:distanceSteiner}) and using the definition of vertex vicinity, we can obtain the number of ring points associated to $ v_i^j $ that we are adding to the boundary of a disk $ D_i $ with weight $ \omega_i > 0$. This result can be obtained using the formula for the sum of the first terms of a geometric series.

    \begin{proposition}
        \label{prop:steinerone}
        Let $ D_i $ be a disk with weight $ \omega_i > 0 $. The number of ring points associated to $ v_i^j $ added to the boundary of $ D_i $ is upper-bounded by $ 2\frac{1+\log_2{\frac{\alpha_i}{\pi}}}{\log_2{\left(1-\frac{2\omega_i\varepsilon}{a\pi}\right)}} +1 $.
    \end{proposition}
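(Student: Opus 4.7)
The plan is to combine the stopping condition $\angle p_{i,j}^{0} v_i^j p_{i,j}^{r} \leq \frac{\pi}{2} - \alpha_i$ with a closed form for the partial sum of the geometric sequence defined by Equation~(\ref{eq:distanceSteiner}), and then invert the resulting inequality for $r$. Since the total number of ring points associated with $v_i^j$ is $2r+1$ by construction (one ``apex'' $p_{i,j}^{0}$ plus $r$ further points on each of the two halves of $\partial D_i$), any upper bound on $r$ translates at once into the claimed bound on $2r+1$.

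First I would expand the total angular span of the ring points on one side:
\[
\angle p_{i,j}^{0} v_i^j p_{i,j}^{r} \;=\; \sum_{\ell=0}^{r-1} \angle p_{i,j}^{\ell} v_i^j p_{i,j}^{\ell+1} \;=\; \frac{\omega_i\varepsilon}{a}\sum_{\ell=0}^{r-1}\left(1-\frac{2\omega_i\varepsilon}{a\pi}\right)^{\ell}.
\]
Using the standard identity for a geometric sum, and observing that the multiplicative constant simplifies as $\frac{\omega_i\varepsilon/a}{2\omega_i\varepsilon/(a\pi)}=\frac{\pi}{2}$, the right-hand side collapses to $\frac{\pi}{2}\bigl(1-(1-\tfrac{2\omega_i\varepsilon}{a\pi})^{r}\bigr)$.

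Next, the requirement that $r$ be the \emph{largest} integer with $\angle p_{i,j}^{0} v_i^j p_{i,j}^{r} \leq \frac{\pi}{2}-\alpha_i$ becomes
\[
\frac{\pi}{2}\left(1-\left(1-\frac{2\omega_i\varepsilon}{a\pi}\right)^{r}\right) \;\leq\; \frac{\pi}{2}-\alpha_i,
\]
which I rearrange to $\bigl(1-\tfrac{2\omega_i\varepsilon}{a\pi}\bigr)^{r} \geq \tfrac{2\alpha_i}{\pi}$. Taking $\log_2$ of both sides and then dividing by $\log_2\bigl(1-\tfrac{2\omega_i\varepsilon}{a\pi}\bigr)$ — which is \emph{negative}, so the inequality flips — yields
\[
r \;\leq\; \frac{\log_2(2\alpha_i/\pi)}{\log_2\!\left(1-\tfrac{2\omega_i\varepsilon}{a\pi}\right)} \;=\; \frac{1+\log_2(\alpha_i/\pi)}{\log_2\!\left(1-\tfrac{2\omega_i\varepsilon}{a\pi}\right)}.
\]
Multiplying by $2$ and adding $1$ delivers the claimed upper bound on $2r+1$.

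The calculation itself is routine; the only delicate step is the sign bookkeeping when dividing through by $\log_2\!\left(1-\tfrac{2\omega_i\varepsilon}{a\pi}\right)$. One must check that $1-\tfrac{2\omega_i\varepsilon}{a\pi}$ lies strictly in $(0,1)$ (so that the logarithm is defined and negative), which is guaranteed by the admissible range $\varepsilon\in\bigl(0,\tfrac{a\pi}{2\omega_i}\bigr]$ imposed in Equation~(\ref{eq:distanceSteiner}). It is also worth noting, for sanity, that $1+\log_2(\alpha_i/\pi)$ is itself negative, since $\alpha_i\leq \arcsin(1/4)<\pi/2$, so the ratio above is positive, as it must be.
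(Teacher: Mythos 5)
Your proposal is correct and follows essentially the same route as the paper: summing the geometric series from Equation~(\ref{eq:distanceSteiner}) to get $\frac{\pi}{2}\bigl(1-(1-\tfrac{2\omega_i\varepsilon}{a\pi})^{r}\bigr)$, imposing the condition $\angle p_{i,j}^{0}v_i^jp_{i,j}^{r}\leq\frac{\pi}{2}-\alpha_i$, inverting to bound $r$, then doubling and adding one for $p_{i,j}^{0}$. Your explicit sign bookkeeping for the logarithm (and the observation that numerator and denominator are both negative) is a welcome clarification of a step the paper leaves implicit.
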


    \begin{proof}
        By the definition of the ring points we know that the angular distance between consecutive Steiner points on half a disk is $ \angle p_{i,j}^\ell v_i^jp_{i,j}^{\ell+1} = \frac{\omega_i\varepsilon}{a}\left(1-\frac{2\omega_i\varepsilon}{a\pi}\right)^\ell $. So the angular distance from point $ p_{i,j}^0 $ to each of the Steiner points $ p_{i,j}^\ell, \ \ell \in \{1, \ldots, r\}$ is given by
        
        \begin{align*}
	        \angle p_{i,j}^0v_i^jp_{i,j}^\ell & = \sum_{m=0}^{\ell-1} \frac{\omega_i\varepsilon}{a}\left(1-\frac{2\omega_i\varepsilon}{a\pi}\right)^m = \frac{\omega_i\varepsilon}{a} \cdot \frac{1-\left(1-\frac{2\omega_i\varepsilon}{a\pi}\right)^{\ell}}{1-\left(1-\frac{2\omega_i\varepsilon}{a\pi}\right)} \\
         & = \frac{\omega_i\varepsilon}{a} \cdot \frac{1-\left(1-\frac{2\omega_i\varepsilon}{a\pi}\right)^{\ell}}{\frac{2\omega_i\varepsilon}{a\pi}} = \frac{\pi}{2} \left(1-\left(1-\frac{2\omega_i\varepsilon}{a\pi}\right)^{\ell}\right).
	    \end{align*}

        The largest value of $ r $ such that $ \angle p_{i,j}^0v_i^jp_{i,j}^r \leq  \frac{\pi}{2} -\alpha_i$ can be found by solving the following inequality:

        \begin{align*}
            \frac{\pi}{2} \left(1-\left(1-\frac{2\omega_i\varepsilon}{a\pi}\right)^{r}\right) & \leq \frac{\pi}{2} -\alpha_i \Longrightarrow 1-\left(1-\frac{2\omega_i\varepsilon}{a\pi}\right)^{r} \leq 1-\frac{2\alpha_i}{\pi} \\
            & \Longrightarrow \left(1-\frac{2\omega_i\varepsilon}{a\pi}\right)^{r} \geq \frac{2\alpha_i}{\pi}
            \Longrightarrow r \leq \log_{1-\frac{2\omega_i\varepsilon}{a\pi}} \frac{2\alpha_i}{\pi} \\
            & = \frac{\log_2{\frac{2\alpha_i}{\pi}}}{\log_2{\left(1-\frac{2\omega_i\varepsilon}{a\pi}\right)}} = \frac{1+\log_2{\frac{\alpha_i}{\pi}}}{\log_2{\left(1-\frac{2\omega_i\varepsilon}{a\pi}\right)}}.
        \end{align*}

        Since we need to add ring points around the whole disk $ D_i $, we need two times the number of points in the previous equation. In addition, we need to take into account vertex $ p_{i,j}^0 $, hence the final result.
    \end{proof}

    Observe that we do not need to place the number of points from Proposition~\ref{prop:steinerone} around each vertex vicinity center $ v_i^j$, since some of them are further away than the ring points associated to neighboring vertex vicinity centers. Hence, we create an annulus around each point $ v_i^j$ and we place ring points only inside these annuli. The smallest circumference of the annuli, i.e., the boundary of the vertex vicinity of $ v_i^j$, has radius $ 2R_i\sin{\alpha_i}$, and the largest circumference has radius $ 2R_i\sin{(2\alpha_i)}$. Hence, an upper bound on the total number of points placed on the boundary of each disk $ D_i $ with weight $ \omega_i > 0 $ is given next.

    \begin{proposition}
        \label{prop:total}
        Let $ D_i $ be a disk with weight $ \omega_i > 0 $. The total number of vertex vicinity centers and ring points added to the boundary of $ D_i $ is upper-bounded by $ \frac{1}{\log_2{\frac{a\pi}{a\pi-2\omega_i\varepsilon}}} \frac{\pi}{\alpha_i}$.
    \end{proposition}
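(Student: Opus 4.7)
The plan is to bound the $k_i$ vertex vicinity centers and the ring points inside each annulus separately, then combine. By its definition, $k_i = \lfloor \pi/(2\alpha_i) \rfloor \le \pi/(2\alpha_i)$, which gives the first piece directly. The main work is the second: for a fixed vertex vicinity center $v_i^j$, determine how many of the ring points $p_{i,j}^\ell$ actually fall inside the annulus around $v_i^j$ whose inner and outer radii are $2R_i\sin\alpha_i$ and $2R_i\sin(2\alpha_i)$. This is strictly fewer than the $2r+1$ total ring points of Proposition~\ref{prop:steinerone}, since the points outside the annulus are redundant with ring points of neighboring vertex vicinity centers and therefore are not placed.

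The key reduction is geometric. Because $p_{i,j}^0$ is diametrically opposite to $v_i^j$ on $\partial D_i$, Thales' theorem makes the triangle $v_i^j p_{i,j}^\ell p_{i,j}^0$ right-angled at $p_{i,j}^\ell$, so
\[
\lvert v_i^j p_{i,j}^\ell \rvert \;=\; 2R_i \cos \beta_\ell, \qquad \beta_\ell := \angle p_{i,j}^0 v_i^j p_{i,j}^\ell.
\]
Since $\alpha_i < \pi/4$ and $\cos$ is monotonically decreasing on $[0,\pi/2]$, the annulus condition $\sin\alpha_i \le \cos\beta_\ell \le \sin(2\alpha_i)$ is equivalent to $\pi/2 - 2\alpha_i \le \beta_\ell \le \pi/2 - \alpha_i$. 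Plugging in the closed form
\[
\beta_\ell \;=\; \tfrac{\pi}{2}\bigl(1 - (1 - 2\omega_i\varepsilon/(a\pi))^\ell\bigr),
\]
which was derived inside the proof of Proposition~\ref{prop:steinerone}, reduces the annulus membership to the purely combinatorial inequality
\[
\tfrac{2\alpha_i}{\pi} \;\le\; \bigl(1 - \tfrac{2\omega_i\varepsilon}{a\pi}\bigr)^{\ell} \;\le\; \tfrac{4\alpha_i}{\pi}.
\]

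Taking $\log_2$, and noting that $b := 1 - 2\omega_i\varepsilon/(a\pi) \in (0,1)$ has negative logarithm (so dividing by $\log_2 b$ flips the inequalities), the admissible values of $\ell$ form a contiguous interval whose length collapses to $\log_2(1/2)/\log_2 b = 1/\log_2\bigl(a\pi/(a\pi - 2\omega_i\varepsilon)\bigr)$. Doubling for the symmetric half of $D_i$ gives at most $2/\log_2\bigl(a\pi/(a\pi - 2\omega_i\varepsilon)\bigr)$ ring points per vertex vicinity center, and multiplying by $k_i \le \pi/(2\alpha_i)$ produces the advertised bound, with the contribution of the $k_i$ vertex vicinity centers themselves absorbed into the same expression. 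The main obstacle is the careful tracking of inequality directions once the logarithm base drops below one, together with the rounding needed to convert an interval length into an integer count; the final bound stays clean precisely because the ratio of outer to inner annulus radius distills, via Thales and the geometric-series closed form, to the single factor $\log_2(1/2)$.
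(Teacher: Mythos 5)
Your proposal is correct and follows essentially the same route as the paper: the paper also obtains the per-annulus count $2/\log_2\bigl(a\pi/(a\pi-2\omega_i\varepsilon)\bigr)$ as the difference between the ring points with angular reach up to $\tfrac{\pi}{2}-\alpha_i$ (Proposition~\ref{prop:steinerone}) and those reaching only up to $\tfrac{\pi}{2}-2\alpha_i$ (i.e., beyond the outer annulus circle), and then multiplies by the $\pi/(2\alpha_i)$ vertex vicinity centers. Your Thales-based reformulation of annulus membership as $\tfrac{2\alpha_i}{\pi}\le\bigl(1-\tfrac{2\omega_i\varepsilon}{a\pi}\bigr)^{\ell}\le\tfrac{4\alpha_i}{\pi}$ is just a more explicit phrasing of the paper's remark that the outer radius is $2\cos\alpha_i$ times the inner one, and the paper is equally informal about the integer-rounding slack and about absorbing the vertex vicinity centers into the stated bound.
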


    \begin{proof}
        Let $ v_i^j$ be a vertex vicinity center on the boundary of a disk $ D_i $, for $ j \in \left\{1, \ldots, \frac{\pi}{2\alpha_i}\right\} $. By Proposition~\ref{prop:steinerone} we place $ 2\frac{1+\log_2{\frac{\alpha_i}{\pi}}}{\log_2{\left(1-\frac{2\omega_i\varepsilon}{a\pi}\right)}}+1 $ points associated to $  v_i^j$ on the boundary of $ D_i $. Since we are creating an annulus where the largest circumference has radius $ 2\cos{\alpha_i} $ times the radius of the small one (see Figure~\ref{fig:ring}), outside this second disk, and around $ D_i $, we are placing $ 2\frac{1+\log_2{\frac{2\alpha_i}{\pi}}}{\log_2{\left(1-\frac{2\omega_i\varepsilon}{a\pi}\right)}}+1 $ points. Note the $ 2 $ that is multiplying the term inside the logarithm in the numerator of the equation.

        \begin{figure}[tb]
            \centering
            \includegraphics[width = 0.5\textwidth]{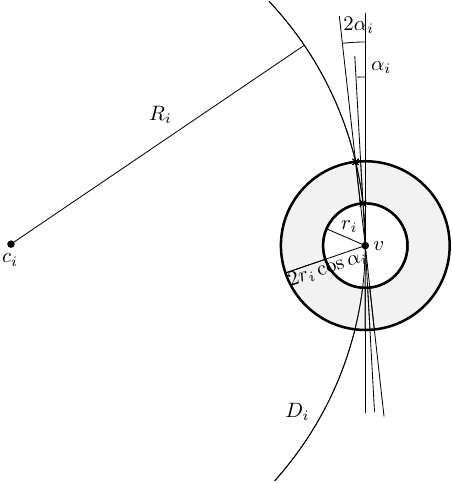}
            \caption{The annulus where the ring points from $ v $ are placed is represented in grey.}
            \label{fig:ring}
        \end{figure}

        Then, the idea is to calculate the number of Steiner points inside the annulus:

        \begin{align*}
            2\left(\frac{1+\log_2{\frac{\alpha_i}{\pi}}}{\log_2{\left(1-\frac{2\omega_i\varepsilon}{a\pi}\right)}} - \frac{1+\log_2{\frac{2\alpha_i}{\pi}}}{\log_2{\left(1-\frac{2\omega_i\varepsilon}{a\pi}\right)}}\right) & = 2\frac{\log_2{\frac{\alpha_i}{\pi}}-\log_2{\frac{2\alpha_i}{\pi}}}{\log_2{\left(1-\frac{2\omega_i\varepsilon}{a\pi}\right)}} = \frac{2\log_2{\frac{\frac{\alpha_i}{\pi}}{\frac{2\alpha_i}{\pi}}}}{\log_2{\left(1-\frac{2\omega_i\varepsilon}{a\pi}\right)}} \\
            & = \frac{2\log_2{\frac{1}{2}}}{\log_2{\left(1-\frac{2\omega_i\varepsilon}{a\pi}\right)}} = \frac{-2}{\log_2{\frac{a\pi-2\omega_i\varepsilon}{a\pi}}} \\
            & = \frac{2}{\log_2{\frac{a\pi}{a\pi-
            2\omega_i\varepsilon}}}.
        \end{align*}
        Note that in the previous equation we are not counting the intersection points between $ D_i $ and the largest circumference around $ v_i^j$. In addition, the intersection points between $ D_i $ and the smallest circumference around $ v_i^j$ coincide with the vertex vicinity centers of $ v_i^{j-1} $ and $ v_i^{j+1}$. Finally, since the vertex vicinity centers belong to the set $ V_\varepsilon(G_\varepsilon) $ of nodes, and we have $ \frac{\pi}{2\alpha_i} $ vertex vicinity centers, we get the desired result.
    \end{proof}

    Next, we provide a result for the total number of Steiner points added to the graph $ G_\varepsilon $, as well as the total number of edges.

    \begin{lemma}
        The number of nodes in $ G_\varepsilon $ is $ O\left(\frac{n}{\varepsilon}\right) $, and the number of edges is $ O\left(\frac{n^2}{\varepsilon^2}\right) $. 
    \end{lemma}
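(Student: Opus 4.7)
The plan is to bound the two contributions to $V_\varepsilon(G_\varepsilon)$ separately (disks of weight $0$ versus disks of positive weight), sum over the $n$ disks, and then deduce the edge bound from the fact that $G_\varepsilon$ is a subgraph of the complete graph on $V_\varepsilon(G_\varepsilon)$.

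First I would count the Steiner points on a single weight-$0$ disk $D_i$. By Equation~(\ref{eq:distanceSteiner0}) the angular spacing of consecutive vertex vicinity centers is $\frac{\varepsilon d_i}{a(d_i+1)}$, so the number of such centers is at most
\begin{equation*}
\Bigl\lfloor \frac{2\pi a (d_i+1)}{\varepsilon d_i} \Bigr\rfloor = O\!\left(\frac{1}{\varepsilon}\right),
\end{equation*}
where the hidden constant absorbs the geometric quantity $a(d_i+1)/d_i$. Since weight-$0$ disks carry no ring points, this counts every node of $G_\varepsilon$ associated with $D_i$.

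Next I would handle a disk $D_i$ of weight $\omega_i\in(0,\tfrac{\pi}{2}]$. Proposition~\ref{prop:total} already gives the per-disk bound
\begin{equation*}
\frac{1}{\log_2\frac{a\pi}{a\pi-2\omega_i\varepsilon}}\cdot\frac{\pi}{\alpha_i}.
\end{equation*}
The main obstacle is converting this logarithmic expression to an explicit $O(1/\varepsilon)$ bound. For this I would use the elementary inequality $\ln\frac{1}{1-x}\ge x$ valid for $x\in[0,1)$, applied with $x=\frac{2\omega_i\varepsilon}{a\pi}$ (which lies in $[0,1)$ because of the range of $\varepsilon$ in Equation~(\ref{eq:distanceSteiner})). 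This yields
\begin{equation*}
\log_2\frac{a\pi}{a\pi-2\omega_i\varepsilon}\;\ge\;\frac{2\omega_i\varepsilon}{a\pi\ln 2},
\end{equation*}
so the per-disk count is at most $\frac{a\pi^2\ln 2}{2\omega_i\alpha_i}\cdot\frac{1}{\varepsilon}=O(1/\varepsilon)$, with the constant absorbing the geometric and weight parameters $\omega_i,\alpha_i,a$.

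Summing the two bounds over all $n$ disks (and adding the $O(1)$ contribution from $s$ and $t$, which we assumed to coincide with Steiner points) gives $|V_\varepsilon(G_\varepsilon)|=O(n/\varepsilon)$. For the edge count I would simply observe that in the worst case $G_\varepsilon$ contains an edge between every pair of Steiner points, so
\begin{equation*}
|E_\varepsilon(G_\varepsilon)|\;\le\;\binom{|V_\varepsilon(G_\varepsilon)|}{2}\;=\;O\!\left(\frac{n^2}{\varepsilon^2}\right),
\end{equation*}
which is the asserted bound. The genuinely delicate step is the logarithmic-to-linear conversion in the positive-weight case; once that inequality is in hand, both claims follow by direct summation.
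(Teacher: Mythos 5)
Your proposal is correct and follows essentially the same route as the paper: per-disk bounds from Proposition~\ref{prop:total} converted to $O(1/\varepsilon)$ via the inequality $\log\frac{1}{1-x}\ge x$ (the paper uses $\log_2\frac{1}{1-x}\ge x$), a direct count from Equation~(\ref{eq:distanceSteiner0}) for weight-$0$ disks, summation over the $n$ disks, and the complete-graph bound for the edges. The only difference is cosmetic: the paper carries the geometric and weight parameters explicitly so as to exhibit the constant $C(\mathfrak D)$, whereas you absorb them into the big-$O$ constants, which is fine since they do not depend on $n$ or $\varepsilon$.
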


    \begin{proof}

        Proposition~\ref{prop:total} gives us an upper bound on the number of vertex vicinity centers and ring points on each disk with weight greater than $ 0 $, which is $ \frac{1}{\log_2{\frac{a\pi}{a\pi-2\omega_i\varepsilon}}} \frac{\pi}{\alpha_i}$. We know that $ \log_2{\frac{1}{1-x}} \geq x $, when $ x \geq 0 $, so
        
        \begin{align*}
            \frac{1}{\log_2{\frac{a\pi}{a\pi-2\omega_i\varepsilon}}} \frac{\pi}{\alpha_i} & = \frac{\pi}{\alpha_i\log_2{\frac{1}{1-\frac{2\omega_i\varepsilon}{a\pi}}}} \leq \frac{\pi}{\alpha_i\frac{2\omega_i\varepsilon}{a\pi}} \leq \frac{\pi}{\arcsin{\left(\frac{\min\{d_i, R_i\}\min\{1,\omega_i\}}{4R_i\max\{1, \omega_i\}}\right)}\frac{2\omega_i\varepsilon}{a\pi}} \\
            & \leq \frac{\pi}{\left(\frac{\min\{d_i, R_i\}\min\{1,\omega_i\}}{4R_i\max\{1, \omega_i\}}\right)\frac{2\omega_i\varepsilon}{a\pi}}
            = \frac{2a\pi^2R_i\max\{1,\omega_i\}}{\min\{d_i,R_i\}\min\{1,\omega_i\}\omega_i\varepsilon} \\
            & \leq \frac{2a\pi^2(R_i+1)\frac{\pi}{2}}{\min\{d_i,R_i\}\min\{1,\omega_i\}\omega_i\varepsilon} < \frac{2a\pi^3(R_i+1)}{\min\{d_i,R_i\}\min\{\omega_i,\omega_i^2\}\varepsilon} \\
            & \leq \frac{2a\pi^3(R_i+1)}{\min\{d_i,R_i\}\min\{1,\omega_i^2\}\varepsilon}.
        \end{align*}

        Moreover, if the weight of $ D_i $ is $ 0 $, then by Equation (\ref{eq:distanceSteiner0}) we are placing $ \frac{2\pi}{\frac{\varepsilon d_i}{a(d_i+1)}R_i} $ points on the boundary of $ D_i $, so
        
        \begin{align*}
            \frac{2\pi}{\frac{\varepsilon d_i}{a(d_i+1)}R_i} & = \frac{2a\pi(d_i+1)}{R_i\varepsilon d_i} = \begin{cases}
                \overbrace{=}^{\text{if } R_i \leq d_i} \frac{2a\pi(R_i+1)}{R_i^2\varepsilon} \\
                \overbrace{=}^{\text{if } d_i < R_i} \frac{2a\pi(d_i+1)}{d_i^2\varepsilon}
            \end{cases} = \frac{2a\pi(\min\{d_i,R_i\}+1)}{\min\{d_i^2,R_i^2\}\varepsilon} \\
            & \leq \frac{2a\pi(R_i+1)}{\min\{d_i^2,R_i^2\}\varepsilon} < \frac{2a\pi^3(R_i+1)}{\min\{d_i^2,R_i^2\}\varepsilon}.
        \end{align*}
        
        Then, since we have $ n $ disks, the total number of vertex vicinity centers and ring points is upper-bounded by $ C(\mathfrak D)\frac{n}{\varepsilon} $, where $ C(\mathfrak D) < \frac{2a\pi^3(\max_{1 \leq j \leq n}\{R_j\} + 1)}{\min\{1,\omega^2\} \cdot\min_{1 \leq j \leq n}\{1, d_j^2, R_j^2\}} $, where $ \omega $ is the minimum positive weight of $ \mathfrak D $. Thus, the estimate on the number of nodes in $ G_\varepsilon $ is $ O(\frac{n}{\varepsilon})$. Note that here we are taking into account that $ a $ does not depend on $ \varepsilon $.

        The set $ E_\varepsilon $ of edges is obtained by creating an edge $ (u,v) $ between any two vertex vicinity centers or ring points. In case $ u $ and $ v $ are adjacent on the disk, we add an arc between them. In addition, if $ u $ and $ v $ are not visible, the edge is a non-straight line segment. This edge is a shortest path from $ u $ to $ v $ avoiding all the disks, see, e.g., the red path in Figure~\ref{fig:toutside}. Thus, the total number of edges in $ G_\varepsilon $ is $ O\left(\frac{n^2}{\varepsilon^2}\right) $.
    \end{proof}

In case that two nodes are not visible or are adjacent on the boundary of a disk, we join them by arcs of the disks, instead of using a straight-line segment. This way, we ensure that if a shortest path between a pair of points does not intersect the interior of the disks, then our algorithm computes the shortest path exactly.

\section{Discrete path}
\label{sec:onedisk}

If both the source point $ s $ and the target point $ t $ are on the boundary of the same disk $ D $ and the only disk $ \mathit{SP_w}(s,t) $ intersects is $ D $, then we can compute a shortest path $ \mathit{SP_w}(s,t) $ exactly, and in constant time. This result is obtained by taking into account that there are only two possible shortest paths from $ s $ to $ t $, see Observation~\ref{obs:length}.

Now, suppose $ s $ is on the boundary of a disk $ D $ centered at $ c $, and $ t $ is outside~$ D $, see Figure~\ref{fig:toutside}. We prove that there is a path $ \tilde{\pi}(s,t) $ whose length is at most $ \left(1+\frac{\varepsilon}{a}\right) $ times larger than the length of a shortest path from $ s $ to $ t $ when intersecting only the disk $ D $. This path $ \tilde{\pi}(s,t) $ is a shortest path through the vertices of the discretization.

\begin{figure}[tb]
    \centering
    \includegraphics[width = 0.7\textwidth]{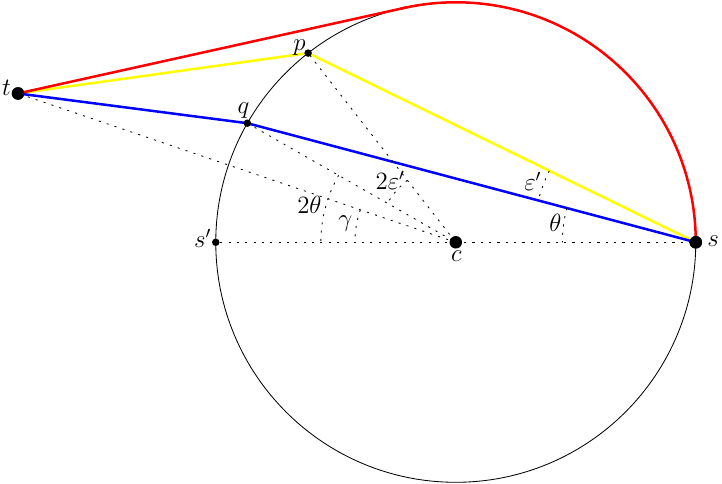}
    \caption{Notation when $ t $ is outside the disk.}
    \label{fig:toutside}
\end{figure}

In fact, we can compute a shortest path exactly in this case. However, the result in Lemma~\ref{lem:toutside} will be useful to prove the approximation ratio when a shortest path intersects more than one disk.

\begin{lemma}
    \label{lem:toutside}
    Let $ s $ be a point on the boundary of a disk $ D $ centered at $ c $ and weight~$ \omega \geq 0 $, and let $ t $ be a point outside $ D $. If $ D $ is the only disk intersected by $ \mathit{SP_w}(s,t) $, then $ \lVert \tilde{\pi}(s,t) \rVert \leq \left(1+\frac{\varepsilon}{a}\right)\cdot \lVert \mathit{SP_w}(s,t) \rVert $.
\end{lemma}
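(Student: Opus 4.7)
The plan is to exhibit an explicit walk through the Steiner points whose weighted length is within a factor $(1+\varepsilon/a)$ of $\lVert \mathit{SP_w}(s,t)\rVert$; since $\tilde{\pi}(s,t)$ is the shortest such walk in $G_\varepsilon$, the lemma follows.

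First, I would pin down the shape of $\mathit{SP_w}(s,t)$. Since $D$ is the only disk it visits, either (i) the straight segment $\overline{st}$ avoids the interior of $D$, in which case $\mathit{SP_w}(s,t) = \overline{st}$ and the trivial edge $s \to t \in E_\varepsilon$ satisfies the bound with equality; or (ii) $\mathit{SP_w}(s,t)$ has exactly one additional bending point $b \neq s$ on $\partial D$, and by Observation~\ref{obs:length} together with Lemma~\ref{lem:pathnotconsidered}, the subpath from $s$ to $b$ is either the chord $\overline{sb}$ (with weighted length $\omega|sb|$) or a shortest arc of $D$, while the subpath from $b$ to $t$ is the straight segment $\overline{bt}$.

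In case (ii), let $p$ be the Steiner point on $\partial D$ closest to $b$, set $\eta = \angle b\,c\,p$, and form the candidate walk $s \to p \to t$ using the same template (chord or arc) for the $D$-piece as in $\mathit{SP_w}(s,t)$. By the triangle inequality $|pt| - |bt| \leq |pb| \leq R\eta$; the chord case gives $|sp| - |sb| \leq R\eta$, and the arc case gives an arc-length change of at most $\min\{1,\omega\}R\eta$. Collecting, the excess weighted length is bounded by
\begin{equation*}
    \lVert \tilde{\pi}(s,t)\rVert - \lVert \mathit{SP_w}(s,t)\rVert \leq (1+\omega)R\eta.
\end{equation*}
Equations~(\ref{eq:distanceSteiner}) and~(\ref{eq:distanceSteiner0}) bound the angular gap by $\eta \leq \omega\varepsilon/a$ when $\omega > 0$ and by $\eta \leq \varepsilon d_i/(a(d_i+1))$ when $\omega = 0$.

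The final step is to convert this additive excess into the desired multiplicative bound using $\lVert\mathit{SP_w}(s,t)\rVert \geq |bt|$ combined with a lower bound involving $\omega|sb|$, $R$, and $d_i$. The constants $a$, $\alpha_i$, and the $\omega$-weighted spacing are calibrated precisely so that $(1+\omega)R\eta \leq (\varepsilon/a)\,\lVert \mathit{SP_w}(s,t)\rVert$. The main obstacle will be this last step: one must split cleanly between the $\omega = 0$ regime, where the inner piece contributes no weighted length and all slack must come from $|bt| \geq d_i$, and the $\omega > 0$ regime, where the factor $\omega$ in the Steiner spacing cancels against the $(1+\omega)$ prefactor after dividing by $\omega|sb| + |bt|$. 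A secondary technicality is to verify that the annulus-restricted placement of ring points still guarantees the required angular gap $\eta$ near an arbitrary $b$ on $\partial D$, and not only near a vertex vicinity center where Equation~(\ref{eq:distanceSteiner}) achieves its largest value.
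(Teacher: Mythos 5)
Your skeleton (snap the exit point $b$ to a nearby ring point $p$, bound the extra weighted length, convert to a multiplicative factor) matches the paper's, but the step you yourself flag as ``the main obstacle'' is precisely where the argument breaks, and it is not a matter of constant calibration. You bound the angular gap by the \emph{worst-case} spacing $\eta \leq \omega\varepsilon/a$ (resp.\ $\varepsilon d_i/(a(d_i+1))$). With that uniform bound the additive excess $(1+\omega)R\eta$ cannot be turned into a $(1+\varepsilon/a)$ multiplicative bound: if the exit point $q$ is close to $s$ (so $\theta=\angle csq$ is near $\pi/2$) and $t$ lies just outside $D$ near $q$, then $\lVert \mathit{SP_w}(s,t)\rVert = 2R\omega\cos\theta + |qt|$ is arbitrarily small while your excess remains of order $R\omega\varepsilon/a$. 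The entire reason the ring points in Equation~(\ref{eq:distanceSteiner}) are spaced geometrically, not uniformly, is that the local gap at angular position $\theta$ is about $\frac{\omega\varepsilon(\pi-2\theta)}{a\pi}$ (this is what the computation behind Proposition~\ref{prop:steinerone} delivers), i.e.\ it shrinks in proportion to the remaining chord $2R\omega\cos\theta \ge 2R\omega(\pi-2\theta)/\pi$. The paper's proof works at this local level: it proves the ratio bound $\lVert\tilde\pi(s,t)\rVert/\lVert \mathit{SP_w}(s,t)\rVert \le 1+\varepsilon'/(\omega\cos\theta)$ by a trigonometric optimization over $\varepsilon'$ and $\theta$, and only then substitutes $\varepsilon' \le \varepsilon\omega(\pi-2\theta)/(a\pi)$ together with $\cos\theta \ge 1-2\theta/\pi$. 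Any correct proof has to exploit this position-dependent spacing; your write-up never does.

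Even granting the correct local gap, your symmetric accounting $(1+\omega)R\eta$ is too lossy: comparing it against $2R\omega\cos\theta$ it yields the claim only when $1+\omega\le 2$, i.e.\ $\omega\le 1$, whereas the lemma must cover weights up to $\pi/2$. The paper sidesteps this by snapping so that the weighted chord \emph{shortens} (the point $p$ lies beyond $q$), so essentially only the outside portion grows, and that growth is controlled by the trigonometric argument rather than a blunt $\omega R\eta$ term. Moreover, your estimate $|pt|\le |bt|+|pb|$ silently assumes the graph edge from $p$ to $t$ costs the Euclidean length $|pt|$; when the snapped point lies before the tangency point from $t$, the segment $\overline{pt}$ crosses the interior of $D$ and the edge is instead the path detouring along an arc to the tangency point -- the paper devotes a separate case (Figure~\ref{fig:tangencyafter}) to exactly this configuration, and your proposal has no counterpart. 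Finally, the issue you defer as ``a secondary technicality'' -- that no ring points are placed inside the vertex vicinity, so near $s$ the nearest Steiner point is not within one ring-gap of $b$ -- is part of the same problem: those errors are not absorbed inside this lemma by spacing at all, but are charged against the inter-disk distances in the main theorem, so they cannot simply be waved through here.
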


\begin{proof}
    First, we prove the case where $ \omega > 0 $. Suppose that a shortest path from $ s $ to $ t $ does not intersect the interior of $ D $. In this case, the approximate shortest path intersects a shortest arc of the disk from $ s $ to a tangency point from~$ t $ to $ D $. Before this tangency point, and along the boundary of $ D $, there is a ring point. This ring point is joined to $ t $ by an edge which is not a straight-line segment. This means that in this case, an approximate shortest path is also a shortest path. Hence, $ \lVert \tilde{\pi}(s,t) \rVert = \lVert \mathit{SP_w}(s,t) \rVert \leq \left(1+\frac{\varepsilon}{a}\right) \lVert \mathit{SP_w}(s,t) \rVert $.
    
    Now, suppose that $ \mathit{SP_w}(s,t) $ intersects the interior of $ D $. Let $ q $ be the point where $ \mathit{SP_w}(s,t) $ leaves the disk and let $ \theta $ be the angle $ \angle{csq} $. Let $ p $ be the closest ring point to $ q $ on the boundary of the disk, and let $ \angle{csp} $ be $ \theta+\varepsilon'$, for some $ \varepsilon' \geq 0 $ and $ \theta < \frac{\pi}{2}$. The case where $ \varepsilon' < 0 $ will be addressed later in the proof. In addition, we can assume, without loss of generality, that $ D $ has radius length~$ 1 $. Let $ s' $ be the point diametrically opposed to $ s $ on the boundary of $ D $, and let $ \gamma $ be the angle $ \angle{s'ct} $, see Figure~\ref{fig:toutside}. Then, the length of the approximate path is
    
    \begin{equation*}
        \lVert \tilde{\pi}(s,t) \rVert = 2\omega\cos{(\theta+\varepsilon')}+\sqrt{1+|ct|^2-2|ct|\cos{(2(\theta+\varepsilon')-\gamma)}},
    \end{equation*}

    and the length of $ \mathit{SP_w}(s,t) $ is

    \begin{equation*}
        \lVert \mathit{SP_w}(s,t) \rVert = 2\omega\cos{\theta}+\sqrt{1+|ct|^2-2|ct|\cos{(2\theta-\gamma)}}.
    \end{equation*}    

    Thus, we would like to prove that:

    \begin{equation*}
        \frac{2\omega\cos{(\theta+\varepsilon')}+\sqrt{1+|ct|^2-2|ct|\cos{(2(\theta+\varepsilon')-\gamma)}}}{2\omega\cos{\theta}+\sqrt{1+|ct|^2-2|ct|\cos{(2\theta-\gamma)}}} \leq \frac{\omega\cos{\theta}+\varepsilon'}{\omega\cos{\theta}}.
    \end{equation*}

    Since $ \cos{(\theta+\varepsilon')} \leq \cos{\theta} $, it is sufficient to prove that:
    \[\frac{2\omega\cos{\theta}+\sqrt{1+|ct|^2-2|ct|\cos{(2(\theta+\varepsilon')-\gamma)}}}{2\omega\cos{\theta}+\sqrt{1+|ct|^2-2|ct|\cos{(2\theta-\gamma)}}} \leq \frac{\omega\cos{\theta}+\varepsilon'}{\omega\cos{\theta}}.\]

    Thus, 
    
    \begin{multline*}      2\omega^2\cos^2{\theta}+\omega\cos{\theta}\sqrt{1+|ct|^2-2|ct|\cos{(2(\theta+\varepsilon')-\gamma)}} \\
    \leq 2\omega^2\cos^2{\theta}+2\varepsilon'\omega\cos{\theta}
    + \sqrt{1+|ct|^2-2|ct|\cos{(2\theta-\gamma)}}(\omega\cos{\theta}+\varepsilon')
    \end{multline*}
    \begin{multline*}
        \omega\cos{\theta}\sqrt{1+|ct|^2-2|ct|\cos{(2(\theta+\varepsilon')-\gamma)}} \\
        \leq \ 2\varepsilon'\omega\cos{\theta} + \sqrt{1+|ct|^2-2|ct|\cos{(2\theta-\gamma)}}(\omega\cos{\theta}+\varepsilon')
        \end{multline*}
        \begin{multline*}
        \left(\sqrt{1+|ct|^2-2|ct|\cos{(2(\theta+\varepsilon')-\gamma)}}-2\varepsilon'-\sqrt{1+|ct|^2-2|ct|\cos{(2\theta-\gamma)}}\right)\omega\cos{\theta} \\
        \leq \ \sqrt{1+|ct|^2-2|ct|\cos{(2\theta-\gamma)}}\varepsilon'.
    \end{multline*}

    The term to the right of the last inequality represents the length of the shortest path from $ s $ to $ t $ outside the disk, multiplied by a positive value $ \varepsilon' $, so this value is positive. In addition, $ \omega > 0 $, and since $ \theta < \frac{\pi}{2}, \ \cos{\theta} > 0 $. Then, it is enough to prove that $ \sqrt{1+|ct|^2-2|ct|\cos{(2(\theta+\varepsilon')-\gamma)}}-2\varepsilon'-\sqrt{1+|ct|^2-2|ct|\cos{(2\theta-\gamma)}} \leq 0 $. Thus,
    
    \begin{align*}
        \sqrt{1+|ct|^2-2|ct|\cos{(2(\theta+\varepsilon')-\gamma)}} \leq & \ 2\varepsilon'+\sqrt{1+|ct|^2-2|ct|\cos{(2\theta-\gamma)}} \\
        1+|ct|^2-2|ct|\cos{(2(\theta+\varepsilon')-\gamma)} \leq & \ 4\varepsilon'^2+1+|ct|^2-2|ct|\cos{(2\theta-\gamma)} \\
        & \ +4\varepsilon'\sqrt{1+|ct|^2-2|ct|\cos{(2\theta-\gamma)}} \\ -2|ct|\cos{(2(\theta+\varepsilon')-\gamma)} \leq & \ 4\varepsilon'^2-2|ct|\cos{(2\theta-\gamma)}\\
        & \ +4\varepsilon'\sqrt{1+|ct|^2-2|ct|\cos{(2\theta-\gamma)}} \\
        |ct|(\cos{(2\theta-\gamma)}-\cos{(2(\theta+\varepsilon')-\gamma)}) \leq & \ 2\varepsilon'\left(\varepsilon'+\sqrt{1+|ct|^2-2|ct|\cos{(2\theta-\gamma)}}\right).
    \end{align*}

    Since $ \cos{a}-\cos{b} = -2\sin{\frac{a+b}{2}}\sin{\frac{a-b}{2}} $, for all angles $ a,b $, the previous inequality is equivalent to:

    \begin{align*}
        -|ct|\sin{\left(\frac{4\theta-2\gamma+2\varepsilon'}{2}\right)}\sin{(-\varepsilon')} & \leq \varepsilon'^2+\varepsilon'\sqrt{1+|ct|^2-2|ct|\cos{(2\theta-\gamma)}} \\
        |ct|\sin{\left(\frac{4\theta-2\gamma+2\varepsilon'}{2}\right)}\sin{\varepsilon'} & \leq \varepsilon'^2+\varepsilon'\sqrt{1+|ct|^2-2|ct|\cos{(2\theta-\gamma)}}.
    \end{align*}

    We know that $ \sin{a} \leq a $, for $ a \geq 0 $, so it is sufficient to prove that:
    
    \begin{align*}
        |ct|\sin{(2\theta-\gamma+\varepsilon')}\varepsilon' & \leq \varepsilon'^2+\varepsilon'\sqrt{1+|ct|^2-2|ct|\cos{(2\theta-\gamma)}} \\
        |ct|\sin{(2\theta-\gamma+\varepsilon')} & \leq \varepsilon'+\sqrt{1+|ct|^2-2|ct|\cos{(2\theta-\gamma)}} \\
        |ct|\sin{(2\theta-\gamma+\varepsilon')} - \varepsilon' & \leq \sqrt{1+|ct|^2-2|ct|\cos{(2\theta-\gamma)}}.
    \end{align*}

    Now, $ \varepsilon' $ is just on the left-hand side of the inequality, so we would like to know what is the largest value $ |ct|\sin{(2\theta-\gamma+\varepsilon')} - \varepsilon' $ can take
    
    \begin{align*}
        \frac{\partial \left(|ct|\sin{(2\theta-\gamma+\varepsilon')} - \varepsilon'\right)}{\partial \varepsilon'} & = |ct|\cos{(2\theta-\gamma+\varepsilon')}-1 = 0 \\
        \Longleftrightarrow 1 & = |ct|\cos{(2\theta-\gamma+\varepsilon')} \\ \Longleftrightarrow \varepsilon' & = \arccos{\left(\frac{1}{|ct|}\right)}+\gamma-2\theta.
    \end{align*}

    We know that $ \varepsilon' \geq 0 $, so the maximum value $ |ct|\sin{(2\theta-\gamma+\varepsilon')} - \varepsilon' $ can take is obtained when $ \varepsilon' = \max\left\{0, \arccos{\left(\frac{1}{|ct|}\right)}+\gamma-2\theta\right\}$. Hence,
    
    \begin{itemize}
        \item If $ \varepsilon'=0 $, it is sufficient to prove that
        
        \begin{align*}
            |ct|\sin{(2\theta-\gamma)} & \leq \sqrt{1+|ct|^2-2|ct|\cos{(2\theta-\gamma)}} \\
            |ct|^2\sin^2{(2\theta-\gamma)} & \leq 1+|ct|^2-2|ct|\cos{(2\theta-\gamma)}\\
            |ct|^2\sin^2{(2\theta-\gamma)} -|ct|^2 & \leq 1-2|ct|\cos{(2\theta-\gamma)} \\
            -|ct|^2\cos^2{(2\theta-\gamma)} & \leq 1-2|ct|\cos{(2\theta-\gamma)}\\
            0 & \leq 1-2|ct|\cos{(2\theta-\gamma)}+|ct|^2\cos^2{(2\theta-\gamma)} \\
            0 &\leq (1-|ct|\cos{(2\theta-\gamma)})^2.
        \end{align*}
        \item If $ \varepsilon' = \arccos{\left(\frac{1}{|ct|}\right)}+\gamma-2\theta $, it is sufficient to prove that $ |ct|\sin{\left(\arccos{\left(\frac{1}{|ct|}\right)}\right)} - \arccos{\left(\frac{1}{|ct|}\right)} - \gamma +2\theta \leq \sqrt{1+|ct|^2-2|ct|\cos{(2\theta-\gamma)}} $:
        
        \begin{align}
            \label{eq:inequality}
            |ct|\sqrt{1-\frac{1}{|ct|^2}} - \arccos{\left(\frac{1}{|ct|}\right)} - \gamma +2\theta & \leq \sqrt{1+|ct|^2-2|ct|\cos{(2\theta-\gamma)}} \nonumber \\
            \sqrt{|ct|^2-1} - \arccos{\left(\frac{1}{|ct|}\right)} - \gamma +2\theta & \leq \sqrt{1+|ct|^2-2|ct|\cos{(2\theta-\gamma)}} \nonumber \\
            2\theta -  \sqrt{1+|ct|^2-2|ct|\cos{(2\theta-\gamma)}} & \leq \arccos{\left(\frac{1}{|ct|}\right)} + \gamma - \sqrt{|ct|^2-1}.
        \end{align}

        Now, $ \theta $ is just on the left-hand side of the last inequality, so we would like to know which is the largest value $ 2\theta -  \sqrt{1+|ct|^2-2|ct|\cos{(2\theta-\gamma)}} $ can take:
        
        \begin{align*}
            \frac{\partial \left(2\theta -  \sqrt{1+|ct|^2-2|ct|\cos{(2\theta-\gamma)}}\right)}{\partial \theta} & = 2-\frac{2|ct|\sin{(2\theta-\gamma)}}{\sqrt{1+|ct|^2-2|ct|\cos{(2\theta-\gamma)}}} = 0 \\
            \sqrt{1+|ct|^2-2|ct|\cos{(2\theta-\gamma)}} & = |ct|\sin{(2\theta-\gamma)} \\
            1+|ct|^2-2|ct|\cos{(2\theta-\gamma)} & = |ct|^2\sin^2{(2\theta-\gamma)} \\
            1 & = |ct|\cos{(2\theta-\gamma)}.
    \end{align*}
    
        Hence, $ 2\theta - \sqrt{1+|ct|^2-2|ct|\cos{(2\theta-\gamma)}} \leq \arccos{\left(\frac{1}{|ct|}\right)}+\gamma-\sqrt{1+|ct|^2-2} = \arccos{\left(\frac{1}{|ct|}\right)}+\gamma-\sqrt{|ct|^2-1} $, which is (\ref{eq:inequality}), and that is what we wanted to prove.
    \end{itemize}

    In both cases, we proved that $ \lVert \tilde{\pi}(s,t) \rVert \leq \frac{\omega\cos{\theta}+\varepsilon'}{\omega\cos{\theta}}\lVert \mathit{SP_w}(s,t) \rVert = (1+\frac{\varepsilon'}{\omega\cos{\theta}}) \lVert \mathit{SP_w}(s,t) \rVert $. We also know that $ \cos{\theta} \geq 1 - \frac{2}{\pi}\theta$ when $ \theta \leq \frac{\pi}{2}$. Hence,

    \[ 1+\frac{\varepsilon'}{\omega\cdot\cos{\theta}} \leq 1+\frac{\varepsilon'}{\omega\left(1-\frac{2}{\pi}\theta\right)} = 1 +\frac{\pi\varepsilon'}{\omega(\pi-2\theta)}. \]
    
    However, we are interested in obtaining a $ \left(1+ \frac{\varepsilon}{a}\right) $-approximation. We can obtain this approximation factor by setting $ \varepsilon' $ to $ \frac{\varepsilon \omega(\pi-2\theta)}{a\pi}$. Note that $ \varepsilon'$ represents the maximum angular distance between consecutive ring points. So, if we place the first Steiner point on the boundary of the disk, diametrically opposed to $ s $, $ \theta=0 $ and, in order to get a $ \left(1+ \frac{\varepsilon}{a}\right) $-approximation, we would like to have the following Steiner point at a distance $ \frac{\varepsilon \omega}{a} $ from the first Steiner point, which is true by Equation (\ref{eq:distanceSteiner}). Following this procedure, one can prove that we always obtain a $ \left(1+ \frac{\varepsilon}{a}\right) $-approximation.

    We also need to calculate the ratio when the closest ring point to $ q $ is to the left of $ q $ with respect to $ \mathit{SP_w}(s,t)$ when oriented from $ s $ to $ t $. Suppose $ p'$ is the closest ring point to $ q $ on the boundary of the disk where $ \angle csp' = \theta+\varepsilon'$ and $ \varepsilon' < 0 $. If the approximation path through $ p $ is shorter than through $ p' $, the algorithm that calculates the approximation path will never go through $ p' $, so we do not need to calculate the ratio of the approximation path through $ p'$. Otherwise, since the length of the approximation path is on the numerator, and the length of the shortest path does not change, the ratio when taking the approximate path through $ p $ is larger. Also, note that in this case, if $ p $ is before the tangency point $ q' $ from $ t $ to~$ D $, the segment from $ p $ to $ t $ will intersect the interior of the disk, see Figure~\ref{fig:tangencyafter}. Hence, we need to calculate the ratio in this particular case. 
    
    \begin{figure}[tb]
    \centering
    \includegraphics[width = 0.6\textwidth]{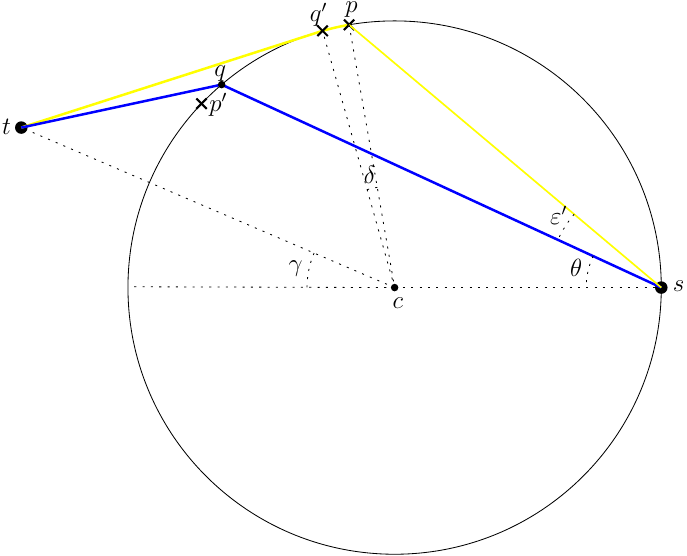}
    \caption{The tangency point from $ t $ to the disk $ D $ is after the Steiner point.}
    \label{fig:tangencyafter}
\end{figure}
    
    The length of the approximate path is given by $ \lVert \tilde{\pi}(s,t) \rVert = \omega|\overline{sp}| + \delta + |\overline{q't}| $, where $ \delta = \angle{pcq'} $, see the yellow path in Figure~\ref{fig:tangencyafter}. We know from before that $ \omega|\overline{sp}| = 2\omega\cos{(\theta+\varepsilon')} $. We also know that $ |\overline{q't}| = \sqrt{|ct|^2-1} $, since $ q' $ is a tangency point from $ t $ to $ D $. In addition, by using some trigonometric identities, we obtain that $ \delta = 2(\theta+\varepsilon')-\arcsin{\left(\frac{\sqrt{\lvert ct \rvert^2-1}}{\lvert ct \rvert}\right)} - \gamma $. Thus,
    
    \begin{equation*}
        \lVert \tilde{\pi}(s,t) \rVert = 2\omega\cos{(\theta+\varepsilon')}+2(\theta+\varepsilon')-\arcsin{\left(\frac{\sqrt{\lvert ct \rvert^2-1}}{\lvert ct \rvert}\right)} - \gamma + \sqrt{|ct|^2-1}.
    \end{equation*}

    Our goal is to prove that:
    
    \begin{equation*}
        \frac{2\omega\cos{(\theta+\varepsilon')}+2(\theta+\varepsilon')-\arcsin{\left(\frac{\sqrt{\lvert ct \rvert^2-1}}{\lvert ct \rvert}\right)} - \gamma + \sqrt{|ct|^2-1}}{2\omega\cos{\theta}+\sqrt{1+|ct|^2-2|ct|\cos{(2\theta-\gamma)}}} \leq \frac{\omega\cos{\theta}+\varepsilon'}{\omega\cos{\theta}}.
    \end{equation*}

    Since $ \cos{(\theta+\varepsilon')} \leq \cos{\theta} $, it is sufficient to prove that:
    
    \begin{equation*} \frac{2\omega\cos{(\theta)}+2(\theta+\varepsilon')-\arcsin{\left(\frac{\sqrt{\lvert ct \rvert^2-1}}{\lvert ct \rvert}\right)} - \gamma + \sqrt{|ct|^2-1}}{2\omega\cos{\theta}+\sqrt{1+|ct|^2-2|ct|\cos{(2\theta-\gamma)}}} \leq \frac{\omega\cos{\theta}+\varepsilon'}{\omega\cos{\theta}}.
    \end{equation*}
    
    Hence,
    
    \begin{align*}
        & \ \Bigg(2\omega\cos{\theta}+2(\theta+\varepsilon')-\gamma-\arcsin{\left(\frac{\sqrt{\lvert ct \rvert^2-1}}{\lvert ct \rvert}\right)}+\sqrt{\lvert ct \rvert^2-1} - 2\omega\cos{\theta} -2\varepsilon' \\
        \ & -\sqrt{1+|ct|^2-2|ct|\cos{(2\theta-\gamma)}}\Bigg)\omega\cos{\theta} \leq \sqrt{1+|ct|^2-2|ct|\cos{(2\theta-\gamma)}}\varepsilon'.
    \end{align*}

    We know that the term on the right-hand side of this inequality is positive, also $ \omega > 0 $, and since $ \theta < \frac{\pi}{2}, \ \cos{\theta} > 0 $. Then, it is sufficient to prove that 
    
    \begin{equation*}
        2\theta-\gamma-\arcsin{\left(\frac{\sqrt{\lvert ct \rvert^2-1}}{\lvert ct \rvert}\right)}+\sqrt{\lvert ct \rvert^2-1} -\sqrt{1+|ct|^2-2|ct|\cos{(2\theta-\gamma)}} \leq 0.
    \end{equation*}

    We now maximize the left-hand side term of the previous inequality with respect to $ \theta $:
    
    \begin{multline*}
        \frac{\partial \left( 2\theta-\gamma-\arcsin{\left(\frac{\sqrt{\lvert ct \rvert^2-1}}{\lvert ct \rvert}\right)}+\sqrt{\lvert ct \rvert^2-1} -\sqrt{1+|ct|^2-2|ct|\cos{(2\theta-\gamma)}}\right)}{\partial \theta} \\
        = 2- \frac{4\lvert ct \rvert\sin{(2\theta-\gamma)}}{2\sqrt{1+\lvert ct \rvert^2-2\lvert ct \rvert \cos{(2\theta-\gamma)}}}
    \end{multline*}

    Now,
    
    \begin{align*}
    0 & = 2- \frac{4\lvert ct \rvert\sin{(2\theta-\gamma)}}{2\sqrt{1+\lvert ct \rvert^2-2\lvert ct \rvert \cos{(2\theta-\gamma)}}} \\
    |ct|\sin{(2\theta-\gamma)} & = \sqrt{1+|ct|^2-2|ct|\cos{(2\theta-\gamma)}} \\
        |ct|^2\sin^2{(2\theta-\gamma)} & = |ct|^2+1-2|ct|\cos{(2\theta-\gamma)} \\
        -|ct|^2\cos^2{(2\theta-\gamma)} & = 1 - 2|ct|\cos{(2\theta-\gamma)} \\
        0 & = \left(1-|ct|\cos{(2\theta-\gamma)}\right)^2 \\
        \theta & = \frac{\arcsin{\left(\frac{\sqrt{|ct|^2-1}}{|ct|}\right)}+\gamma}{2}.
    \end{align*}

    Thus,
    
    \begin{align*}
        2\theta-\gamma-\arcsin{\left(\frac{\sqrt{\lvert ct \rvert^2-1}}{\lvert ct \rvert}\right)}+&\sqrt{\lvert ct \rvert^2-1} -\sqrt{1+|ct|^2-2|ct|\cos{(2\theta-\gamma)}} \\
        \leq & \ \arcsin{\left(\frac{\sqrt{\lvert ct \rvert^2-1}}{\lvert ct \rvert}\right)} - \arcsin{\left(\frac{\sqrt{\lvert ct \rvert^2-1}}{\lvert ct \rvert}\right)} \\
        & + \sqrt{|ct|^2-1}-\sqrt{1+|ct|^2-2|ct|\frac{1}{|ct|}} = 0.
    \end{align*}
    
    This proves that $ \lVert \tilde{\pi}(s,t) \rVert \leq \left(1+\frac{\varepsilon}{a}\right) \lVert \mathit{SP_w}(s,t) \rVert $ when $ s $ is on the boundary of a disk $ D $ with positive weight. 
    
    In the special case where $ \omega = 0 $, we know the exact weight of the shortest path since $ \pi -2\theta + \gamma = \pi \Longrightarrow 2\theta = \gamma $. Hence, $ \lVert \tilde{\pi}(s,t) \rVert = \sqrt{1+|ct|^2-2|ct|\cos{2\varepsilon'}} $, and $ \lVert \mathit{SP_w}(s,t) \rVert = |ct|-1$. Thus,
    
    \begin{align*}
        \frac{\lVert \tilde{\pi}(s,t) \rVert}{\lVert \mathit{SP_w}(s,t) \rVert} & = \frac{\sqrt{1+|ct|^2-2|ct|\cos{2\varepsilon'}}}{|ct|-1} \leq \frac{\sqrt{1+|ct|^2-2|ct|+4|ct|\varepsilon'^2}}{|ct|-1} \\
        & = \sqrt{\frac{(|ct|-1)^2+4|ct|\varepsilon'^2}{(|ct|-1)^2}}= \sqrt{1+\frac{4|ct|\varepsilon'^2}{(|ct|-1)^2}}.
    \end{align*}

    Now, we would like to prove that $ \sqrt{1+\frac{4|ct|\varepsilon'^2}{(|ct|-1)^2}} \leq 1+\frac{\varepsilon}{a} \Longrightarrow 1+\frac{4|ct|\varepsilon'^2}{(|ct|-1)^2} \leq \left(1+\frac{\varepsilon}{a}\right)^2 $. We know that $ |ct| \geq d + 1 $, and $ \varepsilon' \leq \frac{\varepsilon d}{2a(d+1)}$, where $ d $ is the minimum distance from $ D $ to any other disk. Thus, since $ \frac{4|ct|\varepsilon'^2}{(|ct|-1)^2} $ is a decreasing function for $ |ct| > 1 $, we have that
    
    \begin{align*}
        1+\frac{4|ct|\varepsilon'^2}{(|ct|-1)^2} & \leq 1+\frac{4(d+1)\varepsilon'^2}{d^2} \leq 1+\frac{4(d+1)\left(\frac{\varepsilon d}{2a(d+1)}\right)^2}{d^2} = 1 + \frac{\varepsilon^2}{a^2(d+1)} \\
        & \leq 1 + \frac{\varepsilon^2}{a^2} \leq  1 + \frac{\varepsilon^2}{a^2} + \frac{2\varepsilon}{a} = \left(1+\frac{\varepsilon}{a}\right)^2.\qedhere
    \end{align*}
\end{proof}

Next, we generalize the result obtained in Lemma~\ref{lem:toutside} to the case where $\mathit{SP_w}(s,t)$ intersects an ordered sequence of disks $ \mathcal D = (D_j, \ldots, D_k) $. Recall that each disk~$ D_i $ is centered at $ c_i $, has radius $ R_i $, and the weight inside the disk is $ 0 \leq \omega_i \leq \frac{\pi}{2} $, for $ 1 \leq j \leq i \leq k \leq n $. Now recall that $ G_\varepsilon $ is the graph whose vertex set is the set of vertex vicinity centers and ring points, and each pair of points is joined by an edge, see Section~\ref{sec:discretization}.

        \begin{theorem}
            Let $\mathit{SP_w}(s,t)$ be a weighted shortest path between two different points $ s $ and $ t $. There exists a path $ \tilde{\pi}(s,t) $ in $ G_\varepsilon $ such that $ \lVert \tilde{\pi}(s,t) \rVert \leq (1 + \varepsilon)\cdot \Vert \mathit{SP_w}(s, t) \rVert $.
        \end{theorem}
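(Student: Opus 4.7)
The plan is to reduce the multi-disk case to a chained application of Lemma~\ref{lem:toutside}. I would first decompose $\mathit{SP_w}(s,t)$ at its ordered bending points $a_0 = s, a_1, \ldots, a_m, a_{m+1} = t$, each of which lies on the boundary of one disk in $\mathcal D = (D_j, \ldots, D_k)$. By Observations~\ref{obs:length} and~\ref{lem:weightless1} together with Lemma~\ref{lem:pathnotconsidered}, every subpath $[a_i, a_{i+1}]$ is either a straight segment (possibly crossing the interior of at most one disk, or running in free space between two disks) or a shortest arc of a single disk, so it interacts with at most one disk. Each bending point $a_i$ lies inside the vertex vicinity of some vertex vicinity center, within the angular spacing prescribed by~\eqref{eq:distanceSteiner} (or~\eqref{eq:distanceSteiner0} when the disk has weight $0$) of a ring point or vertex vicinity center $\tilde a_i$; by convention $\tilde a_0 = s$ and $\tilde a_{m+1} = t$, since $s$ and $t$ are already vertices of the discretization.

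I would then take $\tilde\pi(s,t)$ to be the concatenation in $G_\varepsilon$ of the edges joining consecutive $\tilde a_i$'s (each edge is either a tangent segment, a shortest arc, or a tangent-plus-arc detour around a disk that the snapped straight segment would graze). The crucial step is the per-subpath inequality
\[
\lVert \tilde a_i\,\tilde a_{i+1}\rVert_{G_\varepsilon} \le \Bigl(1+\tfrac{\varepsilon}{a}\Bigr)\,\lVert a_i\,a_{i+1}\rVert,
\]
where the right-hand side is the weighted length of the corresponding portion of $\mathit{SP_w}(s,t)$. This is exactly the content of Lemma~\ref{lem:toutside} applied to the one-disk instance with $a_i$ in the role of $s$ and $a_{i+1}$ in the role of $t$, charging the snap at the exit bending point $a_{i+1}$ to this subpath. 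Summing over $i$ and using additivity of weighted length,
\[
\lVert \tilde\pi(s,t)\rVert \le \Bigl(1+\tfrac{\varepsilon}{a}\Bigr)\,\lVert \mathit{SP_w}(s,t)\rVert \le (1+\varepsilon)\,\lVert \mathit{SP_w}(s,t)\rVert,
\]
using $a \ge 1$. The constant $a$ and the geometric progression ratio $1-\tfrac{2\omega_i\varepsilon}{a\pi}$ in~\eqref{eq:distanceSteiner} are calibrated precisely so that this per-disk $\varepsilon/a$ slack absorbs the distortion created by a single snap.

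The main obstacle is that in the chained setting each one-disk subpath has \emph{two} endpoints that are eventually snapped, whereas Lemma~\ref{lem:toutside} only snaps the exit point with the entry point fixed. I would handle this by charging each physical snap to a single subpath: the snap at $a_{i+1}$ is charged to $[a_i, a_{i+1}]$ as its exit snap, so that when we analyze the next subpath $[a_{i+1}, a_{i+2}]$, its starting point $\tilde a_{i+1}$ is treated as a fixed vertex (playing the role of $s$) and only $a_{i+2}$ is snapped. For this bookkeeping to be rigorous, I need a small strengthening of Lemma~\ref{lem:toutside} in which $s$ is allowed to be any ring point rather than necessarily an original boundary point, and I need to verify that the vertex vicinity radius $2R_i\sin\alpha_i$ prescribed by Definition~\ref{def:vertexvicinity} is small compared with the inter-disk gap $d_i$, so that snapping cannot push the outgoing tangent into a different Voronoi cell or a different disk. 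The definition of $\alpha_i$ together with the geometric parameter $c$ in Section~\ref{sec:discretization} is exactly what makes this verification go through; once this strengthened single-disk bound is in hand, the telescoping sum above closes the proof.
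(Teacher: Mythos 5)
Your high-level plan (decompose the shortest path disk by disk, snap the crossing points to nearby Steiner points, invoke the single-disk Lemma~\ref{lem:toutside} per piece, and sum) is the same skeleton as the paper's proof, but there is a genuine gap in the step you yourself flag as the ``main obstacle''. Lemma~\ref{lem:toutside} keeps \emph{both} endpoints $s$ and $t$ fixed and only pays for snapping the interior point where the path exits the disk; its $\bigl(1+\tfrac{\varepsilon}{a}\bigr)$ factor contains no slack for displacing an endpoint. In the chained setting the entry point $a_{i+1}$ on $D_{i+1}$ is not a vertex of $G_\varepsilon$ and must be moved to the nearest ring point or vertex vicinity center; this is an \emph{additive} weighted error of order $\tfrac{R_{i+1}\omega_{i+1}\varepsilon}{a}$, and ``charging it to the previous subpath as its exit snap'' does not make it disappear: the previous subpath's bound is purely multiplicative in \emph{its own} length, so it cannot absorb an extra additive term, and comparing the next piece against the shortest path \emph{from the snapped point} is not the same as comparing against $\lVert \mathit{SP_w}(a_{i+1},a_{i+2})\rVert$. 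Consequently your claimed per-piece inequality, and hence the final factor $1+\tfrac{\varepsilon}{a}$, does not follow from Lemma~\ref{lem:toutside}; the best your argument yields is $\lVert\tilde\pi\rVert\le\bigl(1+\tfrac{\varepsilon}{a}\bigr)\lVert \mathit{SP_w}(s,t)\rVert+\sum_i(\text{snap costs})$, with the sum unaccounted for.

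The paper closes exactly this hole with two ingredients you are missing. First, it introduces an explicit detour path $\pi'$ via the triangle inequality, so the snap costs appear as additive terms bounded by $\tfrac{R_i\omega_i\varepsilon}{a}\le\tfrac{2\varepsilon d_i}{b}$ (Equations~(\ref{eq:maxdistance})--(\ref{eq:bound})). Second, it charges these additive terms against the free-space portions of the true shortest path between consecutive disks, using that $d_i+d_{i+1}\le\tfrac{2b}{b-2}\lVert\pi(a_i,a_{i+1})\rVert$, which turns the additive sum into a multiplicative overhead $\tfrac{4\varepsilon}{b-2}+\tfrac{4\varepsilon}{a(b-2)}$. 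The constants are then calibrated so that $1+\tfrac{\varepsilon}{a}+\tfrac{4\varepsilon}{b-2}+\tfrac{4\varepsilon}{a(b-2)}=1+\varepsilon$; this is the actual reason $a$ is the specific function $\tfrac{1+3c+\sqrt{9c^2+10c+1}}{2}$ of the radius-to-gap ratio $c$, with $b=\tfrac{6a+2}{a-1}$, rather than a free parameter with $a\ge 1$ as your last step assumes. Your observation that Lemma~\ref{lem:toutside} should be read with $s$ an arbitrary boundary (ring) point is fine and consistent with the paper; what is missing is the additive bookkeeping for the endpoint snaps and the gap-charging argument that absorbs it.
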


        \begin{proof}
            Let $ \mathcal D = (D_j, \ldots, D_k) $ be the ordered sequence of disks intersected by $ \mathit{SP_w}(s,t)$. We can suppose that $ s \in D_j $, and $ t \in D_k $. Thus, the ordered sequence of points where $ \mathit{SP_w}(s,t)$ first intersects the disks in $ \mathcal D $ is given by $ (s=a_1, a_{2}, \ldots, a_{k-j+1}=t) $, see Figure~\ref{fig:main_theorem}. The portions $ \mathit{SP_w}(a_i, a_{i+1}) $ are called \emph{inter-vertex vicinity portions}.

            \begin{figure}[tb]
                \centering
                \includegraphics[width = \textwidth]{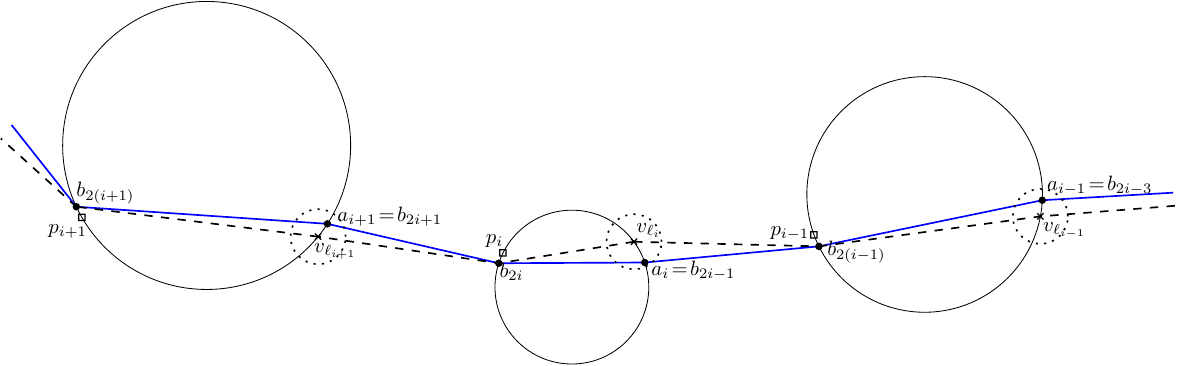}
                \caption{The shortest path $ \mathit{SP_w}(s,t) $ is represented in blue. The path $ \pi'(s,t)$ is represented as a dashed path. The vertex vicinities are the small disks around $ v_{\ell_{i-1}} $, $ v_{\ell_i} $ and $ v_{\ell_{i+1}} $.}
                \label{fig:main_theorem}
            \end{figure}

            We stated in Observation~\ref{obs:length} that there are only two possible ways of intersecting the disks. The subpaths $ \mathit{SP_w}(a_i, a_{i+1}) $ either intersect the interior of the disk $ D_i $, or coincide with an arc of $ D_i $.
	    
            Nodes $ s=a_1 $ and $ t=a_{k-j+1} $ are vertices of $ G_\varepsilon $, so we let $ v_{\ell_1} = s $ and $ v_{\ell_{k-j+1}} = t $. For the remaining points $ a_i, \ 2 \leq i \leq k-j $, we let $ v_{\ell_i} $ be the closest vertex vicinity center or ring point to $ a_i $ in disk $ D_i $, see Figure~\ref{fig:main_theorem}. Consider now an inter-vertex vicinity portion $ \mathit{SP_w}(a_i, a_{i+1}) $. We define the path $ \pi'(v_{\ell_i},v_{\ell_{i+1}}) $ as the path from $ v_{\ell_i} $ to $ v_{\ell_{i+1}} $ through $ b_{2i}$, the point where $ \pi(a_i, a_{i+1}) $ leaves disk~$ D_i $. Using the triangle inequality, the fact that $ a_i $ and $ a_{i+1} $ belong to the interior of the annulus of some vertex vicinity centers, and Equation (\ref{eq:distanceSteiner}) that gives us the distance between adjacent ring points on the same disk, we get that
            
	    \begin{align}
                \label{eq:maxdistance}
	        \lVert \pi'(v_{\ell_i}, v_{\ell_{i+1}}) \rVert \leq & \ \lVert v_{\ell_i}a_{i} \rVert + \lVert \pi(a_i, a_{i+1}) \rVert + \lVert a_{i+1}v_{\ell_{i+1}} \rVert \nonumber \\
                \leq & \ 2R_i\sin{\left(\frac{\omega_i\varepsilon}{2a}\left(1-\frac{2\omega_i\varepsilon}{a\pi}\right)^{\frac{1}{\log_2{\frac{
                a\pi}{a\pi-2\varepsilon\omega_i}}}}\right)}\min\{1,\omega_i\}+ \lVert \pi(a_i, a_{i+1}) \rVert \nonumber \\
                & + 2R_{i+1}\sin{\left(\frac{\omega_{i+1}\varepsilon}{2a}\left(1-\frac{2\omega_{i+1}\varepsilon}{a\pi}\right)^{\frac{1}{\log_2{\frac{
                a\pi}{a\pi-2\varepsilon\omega_{i+1}}}}}\right)}\min\{1,\omega_{i+1}\} \nonumber \\
                \leq & \ \frac{R_i\omega_i\varepsilon}{a} + \lVert \pi(a_i, a_{i+1}) \rVert + \frac{R_{i+1}\omega_{i+1}\varepsilon}{a}.
	    \end{align}

        Note that the inequality we obtain is also true even in the case where $ \omega_i = 0 $ (resp., $ \omega_{i+1}=0 $), since the weight of traveling along the boundary of $ D_i $ (resp., $ D_{i+1} $) is $ 0 $, i.e., $\lVert v_{\ell_i}a_{i} \rVert = 0 \leq \frac{R_i\omega_i\varepsilon}{a}$ (resp., $\lVert a_{i+1}v_{\ell_{i+1}} \rVert = 0 \leq \frac{R_{i+1}\omega_{i+1}\varepsilon}{a}$). Moreover, the maximum distance between consecutive Steiner points on the same disk is given by the last two points on the same annulus. This, together with Equation~(\ref{eq:distanceSteiner}) gives us the second inequality in (\ref{eq:maxdistance}). In the last inequality, we are using the fact that $ \sin{\theta} \leq \theta $ when $ \theta \geq 0 $, and $ (1-x)^y \leq 1 $ when $ x, y \geq 0 $. Also, recall that $ a = \frac{1+3c+\sqrt{9c^2+10c+1}}{2} > 1 $ since $ c= \frac{\frac{\pi}{2}\max_{1\leq j \leq n}\{R_j\}}{\min_{1\leq j \leq n}\{d_j\}} > 0$. In addition, $ a $ can also be written as the solution of the system of equations given by $ a = \frac{bc}{2}$ and $ b = \frac{6a+2}{a-1} $. Then,
        
        \begin{align}
            \label{eq:bound}
            & \ \frac{R_i\omega_i\varepsilon}{a} + \lVert \pi(a_i, a_{i+1}) \rVert + \frac{R_{i+1}\omega_{i+1}\varepsilon}{a} \nonumber\\
            = & \ \frac{2\omega_{i}\varepsilon R_i\min_{1 \leq j \leq n}\{d_j\}}{b\frac{\pi}{2}\max_{1\leq j \leq n}\{R_j\}} + \lVert \pi(a_i, a_{i+1}) \rVert + \frac{2\omega_{i+1}\varepsilon R_{i+1}\min_{1 \leq j \leq n}\{d_j\}}{b\frac{\pi}{2}\max_{1\leq j \leq n}\{R_j\}} \nonumber\\
            \leq & \ \frac{2\omega_{i}\varepsilon R_id_i}{b\omega_iR_i} + \lVert \pi(a_i, a_{i+1}) \rVert + \frac{2\omega_{i+1}\varepsilon R_{i+1}d_{i+1}}{b\omega_{i+1}R_{i+1}} \nonumber\\
            = & \ \frac{2\varepsilon d_i}{b} + \lVert \pi(a_i, a_{i+1}) \rVert + \frac{2\varepsilon d_{i+1}}{b}.
        \end{align}
     
	    Therefore, we obtain the path $ \pi'(s,t) = \pi'(s,v_{\ell_2}) \cup \pi'(v_{\ell_2},v_{\ell_3}) \cup \ldots \cup \pi'(v_{\ell_{k-j}},t) $. 
	    For each $ i = 1, \ldots, k-j $, we define the point $ p_i $ to be the closest Steiner point to $ b_{2i} $ that is to the right of $ \mathit{SP_w}(s, t) $ when oriented from $ s $ to $ t $ if $ b_{2i} $ is to the right of the segment from $ a_{i} $ to its diametrically opposed point on $ D_i $. Otherwise, we let $ p_i $ be the closest Steiner point to $ b_{2i} $ that is to the left of $ \mathit{SP_w}(s, t) $.
     
        Now, we create the path $ \pi''(s,t) = \pi''(s,v_{\ell_2}) \cup \pi''(v_{\ell_2},v_{\ell_3}) \cup \ldots \cup \pi''(v_{\ell_{k-j}},t) $, where $ \pi''(v_{\ell_i}, v_{\ell_{i+1}}) = (v_{\ell_i}, p_{i}, v_{\ell_{i+1}}) $. We know from Observation~\ref{obs:length} and Lemma~\ref{lem:toutside} that $ \lVert \pi''(v_{\ell_i}, v_{\ell_{i+1}}) \rVert \leq \left(1+\frac{\varepsilon}{a}\right)\cdot\lVert \pi'(v_{\ell_i}, v_{\ell_{i+1}}) \rVert $. Thus, using Equation (\ref{eq:bound}),
        
	    \begin{align}
	        \lVert \pi''(s,t) \rVert & = \sum_{i=1}^{k-j} \lVert \pi''(v_{\ell_i}, v_{\ell_{i+1}})\rVert \nonumber \\
                & \leq \left(1+\frac{\varepsilon}{a}\right) \sum_{i=1}^{k-j} \lVert \pi'(v_{\ell_i}, v_{\ell_{i+1}}) \rVert \nonumber \\
	        & \leq \left(1+\frac{\varepsilon}{a}\right)\sum_{i=1}^{k-j} \left(\lVert v_{\ell_i}a_{i} \rVert + \lVert \pi(a_i, a_{i+1})\rVert+ \lVert v_{\ell_{i+1}}a_{i+1} \rVert\right) \nonumber \\
                & \leq \left(1+\frac{\varepsilon}{a}\right)\sum_{i=1}^{k-j} \left(\lVert \pi(a_i, a_{i+1})\rVert+ \frac{2\varepsilon}{b}(d_i+d_{i+1})\right) \label{eq:mindist} \\
                & \leq \left(1+\frac{\varepsilon}{a}\right)\sum_{i=1}^{k-j} \lVert \pi(a_i, a_{i+1})\rVert+ \left(1+\frac{1}{a}\right)\frac{2\varepsilon}{b}\sum_{i=1}^{k-j}(d_i+d_{i+1}). \label{eq:boundapprox}
	    \end{align}
	    
	    It remains to determine an upper bound for the sum $ \sum_{i=1}^{k-j}(d_i+d_{i+1}) $. Recall that $ d_i $ is the minimum distance from disk $ D_i $ to any other disk $ D_j$. Hence, it follows that
     
	    \begin{align*}
	        d_i+d_{i+1} &\leq \left(\lVert v_{\ell_i}a_i \rVert + \lVert \pi(a_i, a_{i+1}) \rVert\right) + \left(\lVert v_{\ell_{i+1}}a_{i+1}\rVert + \lVert \pi(a_{i+1}, a_{i}) \rVert \right) \\
          & \leq 2\lVert \pi(a_i, a_{i+1}) \rVert + \frac{2}{b}(d_i+d_{i+1}).
	    \end{align*}

        The second inequality in the previous equation comes from the fact that $ \lVert v_{\ell_i}a_i \rVert + \lVert v_{\ell_{i+1}}a_{i+1}\rVert \leq \frac{2\varepsilon}{b}(d_i+d_{i+1}) \leq \frac{2}{b}(d_i+d_{i+1})$, see Equation (\ref{eq:mindist}). Hence, $ d_i+d_{i+1} \leq \frac{2b}{b-2}\lVert \pi(a_i, a_{i+1}) \rVert $. This, when substituted in Equation (\ref{eq:boundapprox}) implies that 
        
        \begin{align*}
            \lVert \pi''(s,t) \rVert & \leq \left(1+\frac{\varepsilon}{a}\right)\sum_{i=1}^{k-j} \lVert \pi(a_i, a_{i+1})\rVert+ \left(1+\frac{1}{a}\right)\frac{2\varepsilon}{b}\cdot\frac{2b}{b-2}\sum_{i=1}^{k-j}\lVert \pi(a_i, a_{i+1})\rVert \\
            & = \left(1+\frac{\varepsilon}{a}+\frac{4\varepsilon}{b-2}+\frac{4\varepsilon}{a(b-2)}\right)\sum_{i=1}^{k-j}\lVert \pi(a_i, a_{i+1})\rVert \\
            & = \left(1+\frac{(b-2)\varepsilon+4a\varepsilon+4\varepsilon}{a(b-2)}\right)\sum_{i=1}^{k-j}\lVert \pi(a_i, a_{i+1})\rVert \\
            & = \left(1+\frac{b\varepsilon+4a\varepsilon+2\varepsilon}{a(b-2)}\right)\sum_{i=1}^{k-j}\lVert \pi(a_i, a_{i+1})\rVert \\
            & = \left(1+\frac{\frac{6a+2}{a-1}\varepsilon+4a\varepsilon+2\varepsilon}{a\left(\frac{6a+2}{a-1}-2\right)}\right)\sum_{i=1}^{k-j}\lVert \pi(a_i, a_{i+1})\rVert \\
            & = (1 + \varepsilon)\cdot \Vert \mathit{SP_w}(s, t) \rVert.
        \end{align*}
        
        Finally, since the length of the shortest path $ \tilde{\pi}(s,t) $ in $ G_\varepsilon $ is at most as large as the length of $ \pi''(s,t) $, we obtain the desired result.
    \end{proof}

    Sometimes, real-world curved objects are represented using polygons; for instance, a disk can be approximated using a regular $ c $-gon, where $ c $ is a sufficiently large value. Then, one approach to solving the WRP on a set of disks would be to approximate each disk with a $ c$-gon, and then use existing algorithms that work on polygons. However, this method is not always optimal.

    \begin{observation}
        \label{prop:cgons}
        Using our discretization scheme for weighted disks provides an approximate shortest path using fewer Steiner points than when using other schemes for triangulations.
    \end{observation}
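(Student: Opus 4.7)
The plan is to argue by a direct comparison of Steiner point counts. Our discretization from Section~\ref{sec:discretization} already gives an explicit bound of $O(C(\mathfrak D)\,n/\varepsilon)$ Steiner points for the curved instance; what I need to do is bound, from below, the number of Steiner points produced by the alternative pipeline that (i) first approximates every disk by a regular $c$-gon and (ii) then feeds the resulting polygonal subdivision into one of the schemes listed in Table~\ref{table:results3}. I would fix one convenient representative scheme, for instance \cite{aleksandrov2005determining}, and argue that the comparison extends to the others since they all scale at least linearly in the subdivision's vertex count.

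First, I would determine how large $c$ must be so that replacing a disk of radius $R$ by its inscribed $c$-gon does not, by itself, spoil the $(1+\varepsilon)$-approximation guarantee. A sagitta calculation gives a per-edge deviation of $R(1-\cos(\pi/c))=\Theta(R/c^2)$, and summing over a shortest path that can cross up to $n$ disks forces $c=\Omega(1/\sqrt{\varepsilon})$ in the worst case merely to avoid a cumulative error of $\varepsilon\cdot\lVert\mathit{SP_w}(s,t)\rVert$. Consequently, the triangulated subdivision already contains $\Omega(n/\sqrt{\varepsilon})$ polygonal vertices before any Steiner points are added on the edges.

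Second, I would substitute this vertex count into the Steiner point bound of the chosen reference scheme. The schemes in Table~\ref{table:results3} place a number of Steiner points that is polynomial in the subdivision's vertex count and in $1/\varepsilon$, and grows with $\log(NW/w)$. Plugging $\Theta(n/\sqrt{\varepsilon})$ in for the vertex count, even the most economical scheme listed requires $\Omega(n/\varepsilon^{3/2})$ Steiner points on the post-approximation instance, which is asymptotically larger than our $O(C(\mathfrak D)\,n/\varepsilon)$ by a factor of $\Omega(1/\sqrt{\varepsilon})$. To make the statement concrete rather than asymptotic, I would exhibit a simple instance (for example, a corridor of $n$ equal-radius disks lined up between $s$ and $t$ with small uniform weights) where both counts can be evaluated explicitly and the gap is visible already for moderate values of $\varepsilon$.

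The main obstacle will be pinning down the dependence of $c$ uniformly in the geometric parameters $R_i$, $d_i$, and $\omega_i$: the sagitta argument gives the shape error, but one must also verify that the refraction behaviour at each polygon edge does not degrade further, which requires a short Snell's-law perturbation estimate. I would handle this either by assuming the weights are bounded away from $0$ and $\pi/2$ (the only regime that matters once Section~\ref{sec:weight0infty} is invoked), or by working with a fixed adversarial instance for which the Snell calculation collapses to the sagitta bound, thereby isolating the curvature error as the sole source of the extra $c=\Omega(1/\sqrt{\varepsilon})$ factor.
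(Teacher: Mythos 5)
Your overall route coincides with the paper's: determine that $c=\Theta(1/\sqrt{\varepsilon})$ polygon vertices per disk are needed so that the polygonal stand-in does not spoil the $(1+\varepsilon)$ guarantee (the paper does this with circumscribed $c$-gons and the per-arc ratio bound of Equation~(\ref{eq:cgon}); your inscribed-polygon sagitta estimate gives the same order), then feed the resulting $\Theta(n/\sqrt{\varepsilon})$-vertex subdivision into the most economical polygonal scheme, namely \cite{aleksandrov2005determining}, and compare against the $C(\mathfrak D)\frac{n}{\varepsilon}$ bound of Section~\ref{sec:discretization}.

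The concrete problem is your quantitative middle step. The scheme of \cite{aleksandrov2005determining} places $O\left(\frac{1}{\sqrt{\varepsilon}}\log{\frac{1}{\varepsilon}}\right)$ Steiner points per subdivision vertex, so on a $\Theta(n/\sqrt{\varepsilon})$-vertex subdivision it uses $O\left(\frac{n}{\varepsilon}\log{\frac{1}{\varepsilon}}\right)$ points, not $\Omega\left(n/\varepsilon^{3/2}\right)$; the $1/\varepsilon$-per-edge counts you seem to be invoking belong to the earlier schemes, and the $\log{\frac{NW}{w}}$ factors you mention are part of the running-time bounds in Table~\ref{table:results3}, not of the Steiner-point counts. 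Hence the advantage over the paper's $O(n/\varepsilon)$ is only a $\Theta\left(\log{\frac{1}{\varepsilon}}\right)$ factor together with the ratio of the constants $C(P)$ and $C(\mathfrak D)$ --- precisely what the paper's Equation~(\ref{eq:fracpoints}) delivers --- and not the $\Omega(1/\sqrt{\varepsilon})$ gap you claim, so your ``asymptotically larger by $1/\sqrt{\varepsilon}$'' step fails as stated, even though the conclusion of Observation~\ref{prop:cgons} survives once the count is corrected. Note also that both your argument and the paper's are comparisons of \emph{upper} bounds: your promise to ``bound from below'' the number of Steiner points of the competing pipeline is not substantiated by the cited results and is not needed for the observation as the paper intends it; likewise the Snell's-law perturbation analysis you anticipate is unnecessary for this boundary-arc comparison and can be dropped.
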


    \begin{proof}
        First, let us consider that the disks are approximated by regular $ c $-gons circumscribing the disks. We would like to know the value of $ c $ for which the length of a path that coincides with the boundary of the $ c $-gon is a $ (1+\varepsilon) $-approximation of the length of a path that coincides with an arc of a disk. For the rest of the proof we assume that the $ c$-gons are disjoint. Let $ a $ be a corner of the $ c $-gon circumscribing $ D_i $, let $ d $ be the intersection point between the segment~$ \overline{c_ia} $ and $ D_i $. Let $ b $ be the midpoint of an edge containing $ a $, see Figure~\ref{fig:c-gon}.
            \begin{figure}[tb]
                \centering
                \includegraphics{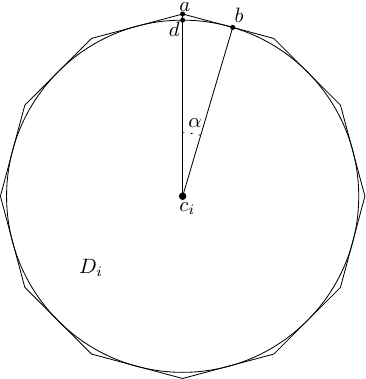}
                \caption{$ c $-gon circumscribing a disk.}
                \label{fig:c-gon}
            \end{figure}
        Now, we want to compare the length of the segment $ \overline{ab} $ with the length of the arc~$ \widehat{db} $. Let~$ \alpha $ be the angle $ \angle{c_ibd} $, and assume, w.l.o.g., that the $ c $-gons have side length~$ 2 $. Then, $ \widehat{db} = \frac{\alpha}{\tan{\alpha}} = \frac{\pi}{c}\cot{\frac{\pi}{c}} $. Hence,
        \begin{equation}
            \label{eq:cgon}
            \frac{|\overline{ab}|}{|\widehat{bd}|} = \frac{1}{\frac{\pi}{c}\cot{\frac{\pi}{c}}} = \frac{\sin{\frac{\pi}{c}}}{\frac{\pi}{c}\cos{\frac{\pi}{c}}} \leq \frac{\frac{\pi}{c}}{\frac{\pi}{c}\left(1-\frac{\left(\frac{\pi}{c}\right)^2}{2}\right)} = \frac{1}{\frac{2-\left(\frac{\pi}{c}\right)^2}{2}} = \frac{2}{2-\left(\frac{\pi}{c}\right)^2},
        \end{equation}
        and if $ c \geq \sqrt{\frac{1+\varepsilon}{2\varepsilon}}\pi $, then $ \frac{|\overline{ab}|}{|\widehat{bd}|} \leq 1+\varepsilon $. The lowest upper bound on the number of vertices of a discretization scheme is obtained in~\cite{aleksandrov2005determining}, giving in this case at most $ \frac{C(P)\sqrt{\frac{1+\varepsilon}{2\varepsilon}}\pi n \log_2{\frac{2}{\varepsilon}}}{\sqrt{\varepsilon}}$ Steiner points, for some parameter $ C(P)>0 $. However, using our approach, we are adding at most $ C(\mathfrak D)\frac{n}{\varepsilon} $, for some other parameter $ C(\mathfrak D) > 0 $. Thus,
        \begin{equation}
            \label{eq:fracpoints}
            \frac{\frac{C(P)\sqrt{\frac{1+\varepsilon}{2\varepsilon}}\pi n \log_2{\frac{2}{\varepsilon}}}{\sqrt{\varepsilon}}}{C(\mathfrak D)\frac{n}{\varepsilon}} \geq \frac{C(P)\pi}{C(\mathfrak D)\sqrt{2}}\sqrt{1+\varepsilon}\log_2{\frac{2}{\varepsilon}}.
        \end{equation}
        We know that $ \sqrt{1+\varepsilon}\log_2{\frac{2}{\varepsilon}} > 1 $, since $ \varepsilon > 0 $, and that $ \frac{C(P)\pi}{C(\mathfrak D)\sqrt{2}} > 0 $, so the value in Equation (\ref{eq:fracpoints}) is at least $ 1 $ for small values of $ \varepsilon $. This concludes the proof that we are adding fewer points than if we approximate each disk with a $ c$-gon, and then we use existing algorithms that work on polygons.
    \end{proof}

One of the reasons for the result in Observation~\ref{prop:cgons} might be because we only have disjoint $ c $-gons on the 2-dimensional space, while most of the discretization schemes we are aware of (see, e.g., ~\cite{aleksandrov1998varepsilon,aleksandrov2000approximation,aleksandrov2005determining,sun2001bushwhack}) are described in terms of a triangulation. In our case, all the triangles inside the $ c $-gons have the same weight, and outside the $ c $-gons we have adjacent triangles all with weight $ 1 $, so we have to add more Steiner points than necessary.
   
\section{A spanner $ G_\theta $}
\label{sec:spanner}

A subgraph $ H $ of a geometric graph $ G $ is a \emph{$t$-spanner} of $ G $, for $ t \geq 1 $, if for each pair of vertices $u$ and $v$, $ \delta_H (u, v) \leq t \cdot \delta_G (u, v)$, where $ \delta_G(u,v) $ is defined as the sum of the weights of the edges along the shortest path between $u$ and~$v$ in $G$. The smallest value $t$ for which $H$ is a $t$-spanner is the \emph{spanning ratio} of $H$. The spanning properties of various geometric graphs have been studied extensively before (see, e.g.,~\cite{bose2013plane,narasimhan2007geometric}). In particular, in this section we study Yao graphs~\cite{flinchbaugh1981strong,yao1982constructing} with the introduction of weighted disks constraints.

A \emph{cone} $ C $ is defined as the region in the plane between two rays originating from a vertex called the \emph{apex} of the cone. When constructing a Yao graph (or $ Y_{2k}$-graph), for each Steiner point $ u $ on the boundary of a disk we consider the rays originating from $u $ with the angle between consecutive rays being $ \theta = \frac{\pi}{k} $. The cones are oriented such that the line containing a ray $ r $ of some cone is tangent to a disk $ D $ at~$ u $. We number the cones in clockwise order around $ u $ starting at~$ r $. The cones around the other vertices have the same orientation as the ones around $ u $. We denote by $ C_i^u $ the $ i $-th cone of a vertex $ u $. For each cone $ C_i $ of each vertex $ u $, we add an edge from $ u $ to the vertex with the shortest Euclidean distance that is visible from $ u $ on the boundary of the disk with the shortest Euclidean distance in that cone. Two vertices are visible if they belong to the boundary of the same disk, or if the line segment joining them does not intersect the interior of any disk. See Figure~\ref{fig:spanner}. If the closest visible vertex from $ u $ is a vertex $ v$ adjacent on the boundary of a disk, we add a circular-arc edge.

\begin{figure}[tb]
    \centering
    \includegraphics{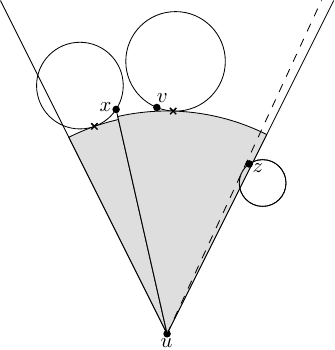}
    \caption{Vertex $ x $ is the closest visible vertex to $ u $ on the closest disk to $ u $.}
    \label{fig:spanner}
\end{figure}

In this section, we show that this Yao graph is a spanner of the graph $ G_\varepsilon $ defined in Section~\ref{sec:discretization}. The proof is similar to the one in~\cite[Theorem $ 8.1 $]{van2014theta}.

\begin{theorem}
    The constrained $ Y_{2k} $-graph ($ k \geq 4 $) is a $ \frac{1}{1-2\sin{\frac{\theta}{2}}} $-spanner of $ G_\varepsilon $.
\end{theorem}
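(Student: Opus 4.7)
The plan is to adapt the standard induction-on-distance argument for Yao-type graphs with constraints, as in Theorem~8.1 of~\cite{van2014theta}, to our weighted-disk setting. I would rank all pairs $(u,v) \in V_\varepsilon(G_\varepsilon) \times V_\varepsilon(G_\varepsilon)$ by their shortest-path length $\delta_{G_\varepsilon}(u,v)$ and induct on this rank to establish
\[
\delta_{Y_{2k}}(u,v) \leq \frac{1}{1-2\sin{\frac{\theta}{2}}}\,\delta_{G_\varepsilon}(u,v).
\]
In the base case, the closest pair must be connected by a single edge of $G_\varepsilon$ (a tangent or chord segment, or an arc between consecutive Steiner points on a disk boundary), and the Yao construction selects either exactly this edge or one of no larger weight in the cone containing $v$, so the ratio $1 \leq \frac{1}{1-2\sin(\theta/2)}$ holds trivially.

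For the inductive step, suppose $v \in C_i^u$ and let $w$ be the Steiner point added by the Yao construction in $C_i^u$ -- the closest visible point on the closest disk intersected by the cone. When $v$ is visible from $u$, I would invoke the classical cone inequality
\[
|wv| \leq |uv| - \bigl(1-2\sin{\tfrac{\theta}{2}}\bigr)|uw|,
\]
which holds because $\angle{vuw} \leq \theta < \pi/3$ (using $k \geq 4$) and $|uw| \leq |uv|$ by the Yao selection rule. Since the segments $\overline{uw}$ and $\overline{uv}$ avoid all disk interiors in this case, their edge weights in $G_\varepsilon$ coincide with their Euclidean lengths, so the inequality lifts directly to the weighted metric. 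Applying the induction hypothesis to the pair $(w,v)$, which has strictly smaller rank, and combining with the fact that $(u,w)$ is itself an edge of $Y_{2k}$, yields $\delta_{Y_{2k}}(u,v) \leq |uw| + \frac{|wv|}{1-2\sin(\theta/2)} \leq \frac{|uv|}{1-2\sin(\theta/2)}$, as required.

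The principal obstacle will be the case in which $v$ is not visible from $u$, so the cone $C_i^u$ may be blocked by some intervening disk and the Steiner point $w$ chosen by Yao may lie on a disk different from the one through which the $G_\varepsilon$-shortest path proceeds. Following the template of \cite{van2014theta}, I would split the shortest $G_\varepsilon$-path $u = a_1, a_2, \ldots, a_m = v$ at its first bending point $a_2$ lying inside $C_i^u$ and apply the cone inequality to the sub-pair $(u,a_2)$ rather than $(u,v)$. The Yao rule, which first selects the closest disk in the cone and only then the closest visible Steiner point on that disk, guarantees $|uw| \leq |u a_2|$, which is precisely what the cone argument needs. Recursing on the pair $(w,a_2)$ and on the subpath from $a_2$ to $v$ via the induction hypothesis then delivers the overall bound, provided one verifies two technicalities from Section~\ref{sec:discretization}: (i) every circular-arc edge on a disk boundary has weight at most its chord length, so that the Euclidean cone inequality upper-bounds the weighted contribution; and (ii) the orientation rule, which aligns one ray of each cone with the tangent to the disk at $u$, ensures that some cone of $u$ always contains the next bending point of the shortest path. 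The bulk of the bookkeeping in a full write-up will be in this non-visible case, but the two properties above are essentially what make the classical Yao argument transport to our weighted, disk-constrained setting.
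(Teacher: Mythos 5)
Your proposal follows the same general template as the paper (induction on the distance between pairs and a cone-type inequality, in the style of the constrained Theta/Yao proofs), but there is a genuine gap at the heart of your inductive step: you invoke the classical cone inequality $|wv| \le |uv| - (1-2\sin\frac{\theta}{2})|uw|$ on the grounds that ``$|uw| \le |uv|$ by the Yao selection rule.'' In this construction the Yao rule does \emph{not} select the closest visible vertex in the cone: it first selects the disk closest to $u$ in the cone and only then the closest visible Steiner point $w$ on that disk's boundary. That vertex can lie farther from $u$ than $v$ does (the visible Steiner points of the closest disk may sit well away from the point of that disk nearest to $u$), so neither $|uw|\le|uv|$ in your visible case nor $|uw|\le|ua_2|$ in your non-visible case is guaranteed, and the inequality you rely on can fail outright; it also undermines your claim that the pair $(w,v)$ has strictly smaller rank. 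The paper's proof is organized around exactly this difficulty: it introduces $w'$, the closest point of that disk to $u$, routes along the boundary arc $\widehat{ww'}$, and compares against the auxiliary point $x$ on $\overline{uv}$ with $|\overline{ux}| = |\overline{uw'}|$, using $|\overline{w'x}| \le 2\sin(\frac{\theta}{2})\,|\overline{uw}|$ and $|\overline{xv}| = |\overline{uv}| - |\overline{uw'}|$, plus a convex chain of visibility edges from $w$ to $v$ (shorter than $|\overline{uv}|$ because $k\ge 4$) on which the induction hypothesis is applied edge by edge; the spanning ratio is then extracted from the condition $|\overline{uw}|+t\,|\widehat{ww'}|+2t\sin(\frac{\theta}{2})|\overline{uw}|-t\,|\overline{uw'}|\le 0$, not from the naive cone inequality on the triple $(u,w,v)$.

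Two smaller issues. Your base case is not complete: exhibiting a Yao edge $(u,x)$ of ``no larger weight'' in the cone containing $v$ does not produce a $u$--$v$ path in $Y_{2k}$; the paper instead argues by contradiction (via a convex chain and the fact that there are at least $8$ cones) that for a closest visible pair the edge $(u,v)$ itself must belong to the $Y_{2k}$-graph. And your technicality (i), that every circular-arc edge weighs at most its chord, is false in general: the arc's weighted length is $\min\{1,\omega\}$ times its arc length, which exceeds the chord length whenever $\omega$ is not small (e.g.\ for any $\omega\ge 1$), so it cannot be used to transfer the Euclidean cone inequality to the weighted metric. Note also that the paper's induction is restricted to pairs of mutually visible vertices, which is what keeps the convex-chain recursion well-founded; your induction over all pairs needs the non-visible bookkeeping you sketch, and that is precisely where the missing guarantee on $|uw|$ bites.
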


\begin{proof}
    Let $ u $ and $ v $ be two vertices of $ G_\varepsilon $ that can see each other. 
    We show that there exists a weighted path connecting $ u $ and $ v $ in the constrained graph $ Y_{2k} $-graph of length at most $ t\cdot\delta_{G_\varepsilon}(u,v) $, for $ t = \frac{1}{1-2\sin{\frac{\theta}{2}}} $, by induction on the distance between every pair of visible vertices $ u $ and $ v $.

    \paragraph{Base case} Vertices $ u $ and $ v $ are a closest visible pair. If $ (u,v) $ was not an edge of the spanner, then there would be some vertex $ x $ which is joined by an edge of the spanner to $ u $. The vertices $ x $ and $ v $ are connected by an edge in $ G_\varepsilon $ whose length is upper-bounded by the length of a convex chain corresponding to the part of the convex hull of the segment $ \overline{xv} $ that is visible from~$ u $, see Figure~\ref{fig:convexhull}.
    
    \begin{figure}[tb]
    \centering
    \includegraphics{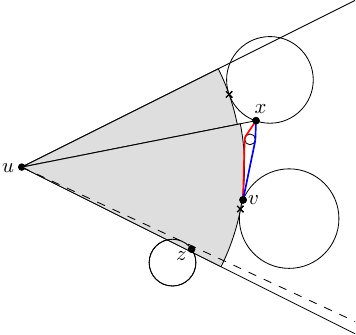}
    \caption{The edge $ (x,v) \in G_\varepsilon $ is represented in blue. The red path from $ x $ to $ v $ corresponds to a convex chain avoiding the disks in $ C^u$.}
    \label{fig:convexhull}
\end{figure}
    
    Since we have at least $ 8 $ cones, the vertex adjacent to $ v $ along this chain is strictly closer to $ v $ than $ u $, contradicting that $ \delta_{G_\varepsilon}(u,v) $ is a closest pair. Hence, since $ v $ is the closest visible vertex, the edge in $ G_\varepsilon $ between $ u $ and $ v $ is an edge in the constrained $ Y_{2k} $-graph and thus there exists a path between $ u $ and $ v $ of length $ \delta_{G_\varepsilon}(u,v) < t \cdot \delta_{G_\varepsilon}(u,v) $.

    \paragraph{Induction step} We assume that the induction hypothesis holds for all pairs of vertices that can see each other and whose distance is less than $ \delta_{G_\varepsilon}(u,v) $. If $ (u,v) $ is an edge in the constrained $ Y_{2k} $-graph, the induction hypothesis follows by the same argument as in the base case. If there is no edge between $ u $ and $ v $, let $ w \in G_\varepsilon $ be the closest visible vertex to $ u $ in the cone of $ u $ that contains $ v $, and let $ w' $ be the closest point to $ u$ in the disk containing $ w $. Let $ x $ be the point along the segment $ \overline{uv} $ such that $ \lvert \overline{uw'} \rvert = \lvert \overline{ux} \rvert $. Since $ x $ lies on $ \overline{uv} $, both $ (u,x) $ and $ (x,v) $ are visibility edges.

    By definition, $ (w',x) $ and $(x,v)$ are visibility edges. Hence, there is a convex chain of visibility edges $ w = p_0, \ldots, p_j = v $ connecting $w$ and $v$. The length of this convex chain can be upper-bounded by the weighted length of the edge $ (w, v) $ in $ G_\varepsilon $. Since we have at least $8$ cones, the length of this edge is strictly less than $|\overline{uv}|$. Hence, since every consecutive pair of vertices along the convex chain can see each other, we can apply induction on each of them. Therefore, there exists a path from $u$ to $v$ via $w$ of length at most
    
    \begin{equation*}
        |\overline{uw}| + t \cdot (|\widehat{ww'}| + |\overline{w'x}|+ |\overline{xv}|).
    \end{equation*}

    Since $ |\overline{uw'}| = |\overline{ux}|$, triangle $\triangle uw'x$ is an isosceles triangle and we can express $|\overline{w'x}|$ as $ 2\sin{\left(\frac{\angle w'ux}{2}\right)} \cdot |\overline{uw'}| \leq 2\sin{\left(\frac{\angle w'ux}{2}\right)} \cdot |\overline{uw}| $. Since this function is increasing for $\angle w'ux \in \left[0, \frac{\pi}{4}\right] $ and $ \angle w'ux $ is at most $ \theta $, it follows that $ |\overline{w'x}| \leq 2\sin{\left(\frac{\theta}{2}\right)}\cdot |\overline{uw}| $. Next we look at $|\overline{xv}|$: since $x$ lies on $\overline{uv}$ and $|\overline{uw'}|= |\overline{ux}|$, it follows that $|\overline{xv}| = |\overline{uv}| - |\overline{ux}| = |\overline{uv}| - |\overline{uw'}|$. Hence, the path between $u$ and~$v$ has length at most
    
    \begin{align*}
        & \ |\overline{uw}| + t \cdot (|\widehat{ww'}| + |\overline{w'x}|+ |\overline{xv}|) \\
        \leq & \ |\overline{uw}| + t \cdot \left(|\widehat{ww'}| + 2\sin{\left(\frac{\theta}{2}\right)} \cdot |\overline{uw}| + |\overline{uv}|-|\overline{uw'}|\right) \\
        = &\ t \cdot |\overline{uv}| + \left(|\overline{uw}|+t|\widehat{ww'}|+2t\sin{\left(\frac{\theta}{2}\right)}\cdot|\overline{uw}| -t|\overline{uw'}|\right).
    \end{align*}

    Hence, for the length of the path to be at most $t\cdot|\overline{uv}|$, we need that
    
    \begin{equation*}
        |\overline{uw}|+t|\widehat{ww'}|+2t\sin{\left(\frac{\theta}{2}\right)}\cdot|\overline{uw}| -t|\overline{uw'}| \leq 0,
    \end{equation*}

    which can be rewritten to
    
    \begin{equation*}
        t \geq \frac{|\overline{uw}|}{|\overline{uw'}|-|\widehat{ww'}|-2\sin{\left(\frac{\theta}{2}\right)}\cdot |\overline{uw}|} \geq \frac{|\overline{uw}|}{|\overline{uw}|-2\sin{\left(\frac{\theta}{2}\right)}\cdot |\overline{uw}|} \geq \frac{1}{1-2\sin{\left(\frac{\theta}{2}\right)}}
    \end{equation*}

    completing the proof.
\end{proof}

The constrained $ Y_{2k} $-graph ($k\geq 4$) is a subgraph of $ G_\varepsilon $ with at most $ C(\mathfrak D)\frac{n}{\varepsilon} $ nodes and at most $ 2kC(\mathfrak D)\frac{n}{\varepsilon} $ edges. Thus, we obtain the following corollary.

\begin{corollary}
    Let $\mathit{SP_w}(s,t)$ be a weighted shortest path between two different points $ s $ and $ t $. There exists a path $ \pi(s,t) $ in the constrained $ Y_{2k} $-graph ($k\geq 4$) such that $ \lVert \pi(s,t) \rVert \leq (1 + \varepsilon)\frac{1}{1-2\sin{\left(\frac{\theta}{2}\right)}}\cdot \Vert \mathit{SP_w}(s, t) \rVert $.
\end{corollary}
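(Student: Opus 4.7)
The plan is to chain together the two results that precede the corollary: the approximation theorem from Section~\ref{sec:onedisk}, which produces a path $\tilde\pi(s,t)$ in $G_\varepsilon$ of weighted length at most $(1+\varepsilon)\lVert\mathit{SP_w}(s,t)\rVert$, and the spanning theorem just proved, which guarantees that the constrained $Y_{2k}$-graph is a $t$-spanner of $G_\varepsilon$ for $t = \tfrac{1}{1-2\sin(\theta/2)}$. The composition of these two bounds gives exactly the claimed factor.

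First I would invoke the theorem in Section~\ref{sec:onedisk} to obtain a path $\tilde\pi(s,t) = (u_0, u_1, \ldots, u_m)$ in $G_\varepsilon$, where $u_0 = s$, $u_m = t$, and each $(u_{i-1}, u_i)$ is an edge of $G_\varepsilon$. Its weighted length satisfies
\[
    \lVert \tilde\pi(s,t) \rVert \;\leq\; (1+\varepsilon)\cdot \lVert\mathit{SP_w}(s,t)\rVert.
\]
Next, I would apply the spanning theorem to each consecutive pair $u_{i-1}, u_i$. Since these two vertices are joined by an edge of $G_\varepsilon$ (so in particular $\delta_{G_\varepsilon}(u_{i-1},u_i)$ equals the weight of that edge), the theorem provides a path $\sigma_i$ in the constrained $Y_{2k}$-graph from $u_{i-1}$ to $u_i$ with weighted length at most $t\cdot\delta_{G_\varepsilon}(u_{i-1},u_i)$.

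Concatenating $\sigma_1, \sigma_2, \ldots, \sigma_m$ produces a walk $\pi(s,t)$ from $s$ to $t$ in the constrained $Y_{2k}$-graph whose total weighted length is at most
\[
    \sum_{i=1}^{m} \lVert \sigma_i \rVert \;\leq\; t\sum_{i=1}^{m} \delta_{G_\varepsilon}(u_{i-1},u_i) \;=\; t\cdot \lVert \tilde\pi(s,t) \rVert \;\leq\; \frac{1+\varepsilon}{1-2\sin(\theta/2)}\cdot\lVert\mathit{SP_w}(s,t)\rVert.
\]
Since a shortest path in the constrained $Y_{2k}$-graph between $s$ and $t$ is no longer than this walk, the corollary follows.

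I do not anticipate a genuine obstacle: the proof is a straightforward two-step composition once the previous theorem and the spanner theorem are in hand. The only mild subtlety to confirm is that the spanner property applies directly to each edge of $\tilde\pi(s,t)$ (so that the per-edge bound telescopes into $t\cdot\lVert\tilde\pi(s,t)\rVert$ rather than requiring a separate visibility argument), but this is immediate from the definition of a $t$-spanner applied to the endpoints of an edge in $G_\varepsilon$, since for an edge $(u_{i-1},u_i)\in G_\varepsilon$ the graph distance $\delta_{G_\varepsilon}(u_{i-1},u_i)$ is bounded by the weight of that edge.
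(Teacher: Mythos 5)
Your proposal is correct and matches the intended argument: the corollary is simply the composition of the $(1+\varepsilon)$-approximation theorem for $G_\varepsilon$ with the spanning ratio of the constrained $Y_{2k}$-graph. The only remark is that your edge-by-edge telescoping is slightly more roundabout than necessary, since the spanner property can be applied once directly to the pair $(s,t)$ (both are vertices of $G_\varepsilon$), giving $\delta_{Y_{2k}}(s,t)\leq \frac{1}{1-2\sin\left(\frac{\theta}{2}\right)}\cdot\delta_{G_\varepsilon}(s,t)\leq \frac{1}{1-2\sin\left(\frac{\theta}{2}\right)}\cdot\lVert\tilde{\pi}(s,t)\rVert$, but this is the same idea.
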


\section{Conclusions and open problems}

We presented and analyzed a discretization scheme of the 2D space containing a set of non-overlapping weighted disks. Using this scheme, one can compute an approximate shortest path when the disks on the space have a non-negative weight assigned to them. The main idea of the discretization is to place Steiner points on the boundary of the disks.
        
As future work, it would be interesting to reduce the number of Steiner points that we place on the boundary of the disks, or reduce the number of edges of the associated graph. Finally, a more general version of the problem is to consider some disks that are not mutually disjoint.

%%
%% Bibliography
%%

%% Please use bibtex, 

\bibliographystyle{abbrv}
\bibliography{bibliography_approximation}

\end{document}